\PassOptionsToPackage{unicode}{hyperref}
\PassOptionsToPackage{hyphens}{url}
\PassOptionsToPackage{dvipsnames,svgnames,x11names}{xcolor}
\documentclass[
  12pt]{article}
\usepackage{soul}
\usepackage[utf8]{inputenc}
\usepackage[T1]{fontenc}
\usepackage{lmodern}
\usepackage{authblk}

\usepackage{comment}

\usepackage{algorithm}       
\usepackage{algpseudocode}
\usepackage{amsmath, amssymb}
\usepackage{bm}

\usepackage{amsmath,amssymb}
\usepackage{iftex}
\ifPDFTeX
  \usepackage[T1]{fontenc}
  \usepackage[utf8]{inputenc}
  \usepackage{textcomp} 
\else 
  \usepackage{unicode-math}
  \defaultfontfeatures{Scale=MatchLowercase}
  \defaultfontfeatures[\rmfamily]{Ligatures=TeX,Scale=1}
\fi
\usepackage{lmodern}
\ifPDFTeX\else
\fi
\IfFileExists{upquote.sty}{\usepackage{upquote}}{}
\IfFileExists{microtype.sty}{
  \usepackage[]{microtype}
  \UseMicrotypeSet[protrusion]{basicmath} 
}{}
\makeatletter
\@ifundefined{KOMAClassName}{
  \IfFileExists{parskip.sty}{%
    \usepackage{parskip}
  }{
    \setlength{\parindent}{0pt}
    \setlength{\parskip}{6pt plus 2pt minus 1pt}}
}{
  \KOMAoptions{parskip=half}}
\makeatother
\usepackage{xcolor}
\makeatletter
\ifx\paragraph\undefined\else
  \let\oldparagraph\paragraph
  \renewcommand{\paragraph}{
    \@ifstar
      \xxxParagraphStar
      \xxxParagraphNoStar
  }
  \newcommand{\xxxParagraphStar}[1]{\oldparagraph*{#1}\mbox{}}
  \newcommand{\xxxParagraphNoStar}[1]{\oldparagraph{#1}\mbox{}}
\fi
\ifx\subparagraph\undefined\else
  \let\oldsubparagraph\subparagraph
  \renewcommand{\subparagraph}{
    \@ifstar
      \xxxSubParagraphStar
      \xxxSubParagraphNoStar
  }
  \newcommand{\xxxSubParagraphStar}[1]{\oldsubparagraph*{#1}\mbox{}}
  \newcommand{\xxxSubParagraphNoStar}[1]{\oldsubparagraph{#1}\mbox{}}
\fi
\makeatother

\usepackage{longtable,booktabs,array}
\usepackage{calc} 
\usepackage{etoolbox}
\makeatletter
\patchcmd\longtable{\par}{\if@noskipsec\mbox{}\fi\par}{}{}
\makeatother
\IfFileExists{footnotehyper.sty}{\usepackage{footnotehyper}}{\usepackage{footnote}}
\makesavenoteenv{longtable}
\usepackage{graphicx}
\makeatletter
\def\maxwidth{\ifdim\Gin@nat@width>\linewidth\linewidth\else\Gin@nat@width\fi}
\def\maxheight{\ifdim\Gin@nat@height>\textheight\textheight\else\Gin@nat@height\fi}
\makeatother
\setkeys{Gin}{width=\maxwidth,height=\maxheight,keepaspectratio}
\makeatletter
\def\fps@figure{htbp}
\makeatother

\makeatletter
\@ifpackageloaded{caption}{}{\usepackage{caption}}
\AtBeginDocument{%
\ifdefined\contentsname
  \renewcommand*\contentsname{Table of contents}
\else
  \newcommand\contentsname{Table of contents}
\fi
\ifdefined\listfigurename
  \renewcommand*\listfigurename{List of Figures}
\else
  \newcommand\listfigurename{List of Figures}
\fi
\ifdefined\listtablename
  \renewcommand*\listtablename{List of Tables}
\else
  \newcommand\listtablename{List of Tables}
\fi
\ifdefined\figurename
  \renewcommand*\figurename{Figure}
\else
  \newcommand\figurename{Figure}
\fi
\ifdefined\tablename
  \renewcommand*\tablename{Table}
\else
  \newcommand\tablename{Table}
\fi
}
\@ifpackageloaded{float}{}{\usepackage{float}}
\floatstyle{ruled}
\@ifundefined{c@chapter}{\newfloat{codelisting}{h}{lop}}{\newfloat{codelisting}{h}{lop}[chapter]}
\floatname{codelisting}{Listing}

\makeatother
\makeatletter
\@ifpackageloaded{caption}{}{\usepackage{caption}}
\@ifpackageloaded{subcaption}{}{\usepackage{subcaption}}
\makeatother

\ifLuaTeX
  \usepackage{selnolig}  
\fi
\usepackage[]{natbib}
\usepackage{bookmark}

\IfFileExists{xurl.sty}{\usepackage{xurl}}{} 
\hypersetup{
  pdftitle={Title},
  pdfauthor={Author 1; Author 2},
  pdfkeywords={3 to 6 keywords, that do not appear in the title},
  colorlinks=true,
  linkcolor={blue},
  filecolor={Maroon},
  citecolor={Blue},
  urlcolor={Blue},
  pdfcreator={LaTeX via pandoc}}

\usepackage{amsthm}
\newtheorem{thm}{Theorem}[section]
\newtheorem{lem}[thm]{Lemma}
\newtheorem{proposition}[thm]{Proposition}
\newtheorem{corollary}[thm]{Corollary}

\newtheorem{definition}[thm]{Definition}
\newcommand{\Var}{\mbox{Var}}

\def\R{\mathbb{R}}
\def\E{\mathbb{E}}
\def\M{\mathbf{m}}
\def\m{\mathbf{m}}
\def\N{\mathbb{N}}
\def\X{\mathbf{x}}
\def\Y{\mathbf{y}}
\def\Z{\mathbf{z}}

\begin{document}

\def\spacingset#1{\renewcommand{\baselinestretch}%
{#1}\small\normalsize} \spacingset{1}



\date{}

\title{\vspace{-11mm}
\bf \LARGE High-dimensional Many--to--many--to--many \\ \vspace{2.5mm} Mediation Analysis \vspace{2.5mm}}

\author[ ]{%
Tien Dat Nguyen\textsuperscript{1,2}, Trung Khang Tran\textsuperscript{3}, Cong Khanh Truong\textsuperscript{1,2},
\newline Duy-Cat Can\textsuperscript{4,5,6}, Binh T. Nguyen\textsuperscript{1,2}, and Oliver Y. Chén\textsuperscript{5,6}\\
\vspace{2.5mm} for the Alzheimer's Disease Neuroimaging Initiative\thanks{See Data Availability Statement.}
\vspace{3mm}
}%


\affil[1]{\centering \normalsize Faculty of Mathematics and Computer Science, University of Science,\par\vspace{-0.65em}
Vietnam National University Ho Chi Minh City, Vietnam}
\affil[2]{\centering \normalsize Vietnam National University Ho Chi Minh City, Vietnam}
\affil[3]{\centering \normalsize National University of Singapore, Singapore}
\affil[4]{\centering \normalsize VNU University of Engineering and Technology, Hanoi, Vietnam}
\affil[5]{\centering \normalsize Lausanne University Hospital, Lausanne, Switzerland}
\affil[6]{\centering \normalsize University of Lausanne, Lausanne, Switzerland}

\maketitle

\vspace{-3.5mm}
\begin{abstract}
\vspace{-2mm}
We study high-dimensional mediation analysis in which exposures, mediators, and outcomes are all multivariate, and both exposures and mediators may be high-dimensional.
We formalize this as a many (exposures)--to--many (mediators)--to--many (outcomes) (MMM) mediation analysis problem.
Methodologically, MMM mediation analysis simultaneously performs variable selection for high-dimensional exposures and mediators, estimates the indirect effect matrix (i.e., the coefficient matrices linking exposure-to-mediator and mediator-to-outcome pathways), and enables prediction of multivariate outcomes. Theoretically, we show that the estimated indirect effect matrices are consistent and element-wise asymptotically normal, and we derive error bounds for the estimators.
To evaluate the efficacy of the  MMM mediation framework, we first investigate its finite-sample performance, including convergence properties, the behavior of the asymptotic approximations, and robustness to noise, via simulation studies.
We then apply MMM mediation analysis to data from the Alzheimer's Disease Neuroimaging Initiative to study how cortical thickness of 202 brain regions may mediate the effects of 688 genome-wide significant single nucleotide polymorphisms (SNPs) (selected from approximately 1.5 million SNPs) on eleven cognitive--behavioral and diagnostic outcomes.
The MMM mediation framework identifies biologically interpretable, many--to--many--to--many genetic--neural--cognitive pathways and improves downstream out-of-sample classification and prediction performance.
Taken together, our results demonstrate the potential of MMM mediation analysis and highlight the value of statistical methodology for investigating complex, high-dimensional multi-layer pathways in science. The MMM package is available at the \href{https://github.com/THELabTop/MMM-Mediation}
{MMM repository}.
\vspace{-6.5mm}

\end{abstract}
\vspace{2.5mm}
\noindent%
{\it Keywords:} 
Mediation analysis, High-dimensional data, Many-to-many-to-many problems, Dimensional reduction, Biomarker selection, Disease prediction, Alzheimer's disease
\vfill

\newpage
\spacingset{1.8} 


\section{Introduction}
\label{sec:Introduction}

\vspace{-.5em}
\subsection{Background and motivation}

\begin{figure}[htbp]
\centering
\includegraphics[width=\textwidth]{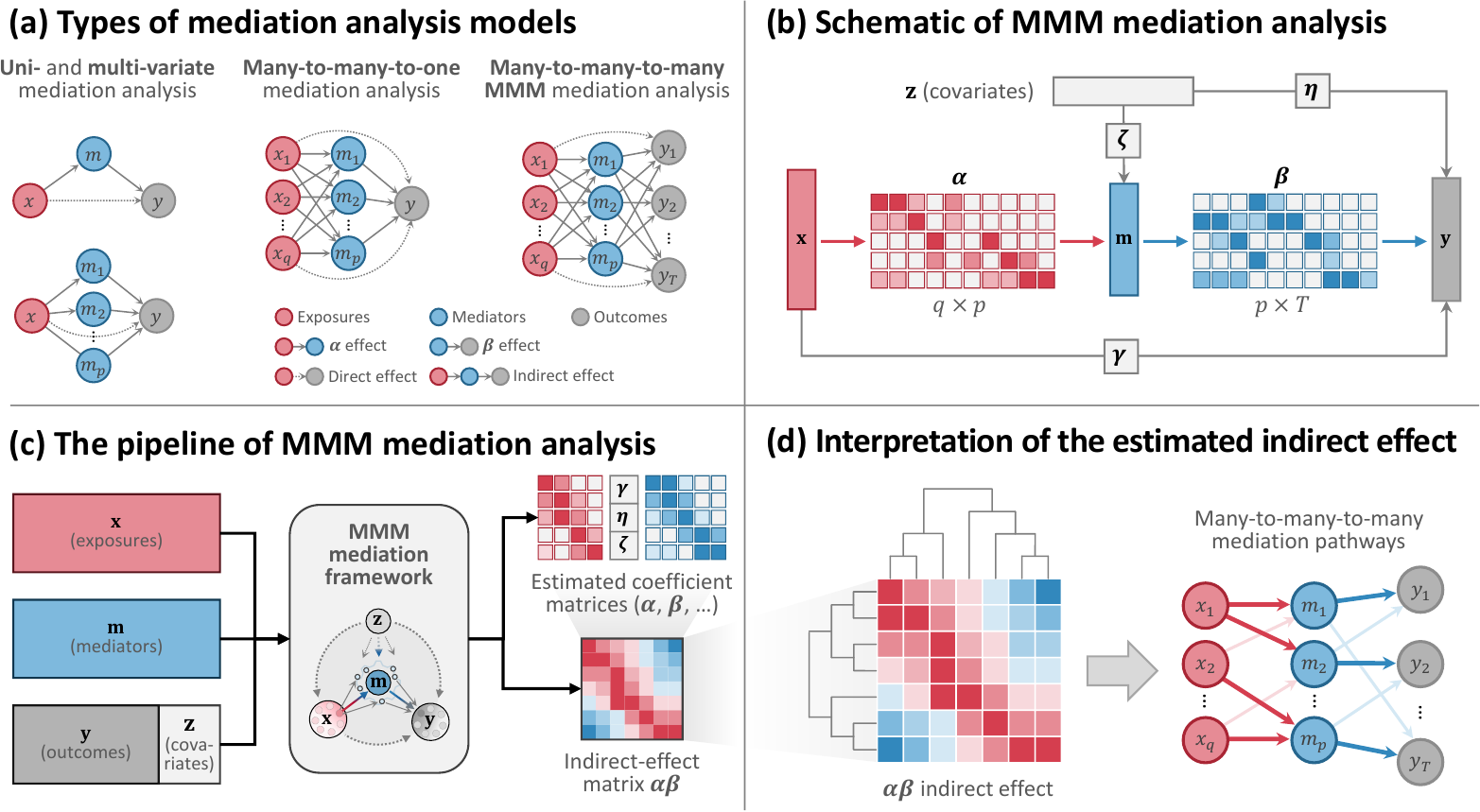}
\caption{
    \textbf{An overview of the many--to--many--to--many (MMM) mediation analysis framework.}
    {
    \footnotesize
    \textit{(a) Schematic representation of different types of mediation analysis.}
    An illustration of how classical univariate- and multivariate mediation analysis models extend to the many--to--many--to--many (MMM) setting, where multivariate exposures $\X$, mediators $\M$, and outcomes $\Y$ interact through multiple indirect pathways.
    \textit{(b) The MMM model.}
    A schematic representation of the multivariate linear structural equation model linking $\X$, $\M$, $\Y$, and covariates $\Z$ through coefficient matrices $(\bm{\alpha},\bm{\beta},\bm{\gamma},\bm{\zeta},\bm{\eta})$.
    \textit{(c) Analysis pipeline of the MMM method.}
    A high-level workflow showing input data layers, estimation of coefficient matrices, and the derivation of the indirect-effect matrix $\bm{\alpha\beta}$.
    \textit{(d) Output interpretation.}
     Estimated coefficient matrices and indirect-effect patterns from MMM mediation reveal structured many--to--many--to--many exposure--mediator--outcome pathways.
    }
}
\label{fig:framework}
\end{figure}
Mediation analysis is a central tool for studying mechanisms between layered systems, specifically how the effect of an exposure 
on an outcome 
is transmitted through an intermediate variable. 
As mediation analysis presents a simple, yet elegant way to identify, separate, and quantify an indirect effect from the exposure to the outcome that is carried through the mediator, it has interested economists~\citep{celli2022causal}, geneticists~\citep{zhang2022high}, neuroscientists~\citep{meng2023evaluating},  psychologists~\citep{rucker2011mediation}, and social scientists~\citep{figueredo2013revisiting},  no less than statisticians~\citep{baron1986moderator, pearl2001direct, vanderweele2015explanation}.

Perhaps the simplest mediation analysis is the univariate mediation analysis, consisting of the relationship between an exposure, an outcome, and a univariate mediator~\citep{baron1986moderator, robins1992identifiability} (see the top left panel of \textbf{Fig.}~\ref{fig:framework}\textit{(a)}). To estimate the univariate mediator effect, one can use the difference approach or the product approach~\citep{vanderweele2016mediation}. An extension of the univariate mediation analysis is the multivariate mediation analysis. As the name suggests, it studies how multiple mediators intermediating an exposure and an outcome~\citep{lindquist2012functional, vanderweele2014mediation} (see the lower left panel of \textbf{Fig.}~\ref{fig:framework}\textit{(a)}). To estimate the multivariate mediation effect, one can employ structural equation models (SEMs), where one can identify and separate the mediation effect due to each mediator. A special case of the multivariate mediation analysis is the high-dimensional multivariate mediation, where the number of the intermediate variables is large- or high-dimensional~\citep{huang2016hypothesis, zhang2016estimating, chen2018high, song2020bayesian, zhao2020sparse, hu2024high, cui2024mediation}. As the dimensionality of the mediator may be large and sometimes larger than the sample size, high-dimensional mediation analysis may first employ variable screening or dimension reduction to reduce dimensionality, or use regularization methods to enable estimation in high-dimensional settings. Examples of such high-dimensional mediation analyses include brain-imaging studies that evaluate the mediation effects of hundreds of thousands of voxels on the association between thermal pain stimulus and pain perception~\citep{chen2018high}, and genetic studies that investigate the mediation effects of hundreds of thousands of DNA-methylation probes on the relationship between smoking and lung function~\citep{zhang2016estimating}.

Regardless of the dimensionality of the mediators, classical high-dimensional mediation analyses typically consider a single exposure and a single outcome. This confines the analysis to settings in which there is a dominant exposure (such as an etiologic determinant or medical treatment) and a primary outcome (such as a disease endpoint). This naturally leaves out a large terrain where the high-dimensional mediators may be affected by multiple exposures which then give rise to an outcome (see the middle panel of \textbf{Fig.}~\ref{fig:framework}\textit{(a)}), or that the high-dimensional mediator affected by one exposure may result in changes in several outcomes (see the right panel of \textbf{Fig.}~\ref{fig:framework}\textit{(a)} for a general case). For the former,~\citep{zhao2022multimodal, zhao2024mediation} consider a high-dimensional mediation problem with multiple, potentially high-dimensional exposures. To estimate mediation effects in high-dimensional settings, one can learn linear projections of exposures and mediators, either unsupervised (e.g., PCA) or supervised (jointly with the outcome), and apply sparsity or thresholding to handle the high-dimensionality. For the latter, there appear to be no established methodological frameworks to address this problem, despite practical needs. For instance, in Alzheimer's disease (AD) studies, brain disruption may affect multiple cognitive scores, each reflecting different aspects of cognitive function~\citep{vu2025residual, Diaz2025OPTIMUS}. At the same time, strong genetic effects, such as those of the APOE gene, play a key role in AD~\citep{sienski2021apoe4, Diaz2025OPTIMUS}. Consequently, it is both scientifically relevant and potentially important to investigate how the brain may mediate the effects of a genetic factor on multiple cognitive outcomes. One possible approach is to separately apply mediation analysis designed for single outcomes to each cognitive score. This strategy, however, can be computationally burdensome when the number of outcomes is large; additionally, by studying univariate outcomes in isolation, one may fail to identify mediators that are both associated with the exposure and jointly influence multiple outcomes, as well as mediators whose exposure–mediator pathways uniquely target specific outcomes. As one would see below, this scenario is a special case (i.e., univariate APOE gene) of the general framework we propose, which we discuss in greater detail below.

Although recent advances in mediation analysis have substantially improved our understanding of how high-dimensional mediators operate within complex systems, investigations of high-dimensional mediation involving multiple exposures and multiple outcomes (see the lower right panel of \textbf{Fig.}~\ref{fig:framework}\textit{(a)}) remain scarce. Yet, there is a strong demand for such analyses, as many intermediate variables, whether features of the human brain or the stock performance of the S\&P 500, are influenced by multiple factors and, in turn, affect multiple outcomes. For instance, despite the prominent effect of the APOE gene, Alzheimer's disease is considered a polygenic disorder~\citep{harrisonPolygenicScoresPrecision2020, bellenguez2022new} and simultaneously impacts multiple cognitive and behavioral domains, including memory, executive function, visuospatial function, and language~\citep{Diaz2025OPTIMUS}.

In addition to facilitating the identification of effects from exposures through mediators to outcomes, developing appropriate statistical methods for many-to-many-to-many mediation also enables prediction of multivariate outcomes based on exposures and predicted mediators. Nevertheless, developing such methods presents both analytical and scientific challenges. First, this problem involves multilayer pathways, with each layer consisting of multivariate, potentially high-dimensional variables. While deep learning approaches may address the multilayer prediction task (e.g., multilayer perceptrons~\citep{rumelhart1986learning}, graph neural networks~\citep{scarselli2008graph}, graph-based Transformers~\citep{ying2021transformers}, and multilayer networks with embeddings~\citep{guillemaud2025hyperbolic}), identifying, estimating, and disentangling the many--to--many--to--many mediation effects, and making inference for them, requires new statistical methodology. Second, this problem demands a careful integration of statistical apparatuses and scientific insights. For instance, it is reasonable to posit that the structure and function of the brain are partially dictated by genetics, and that the brain influences virtually all human behaviors. To elevate statistical findings to the level of scientific evidence, one, however, must, on the one hand, demonstrate rigorous theoretical properties of the proposed methods and estimates, and, on the other hand, show that the identified mediators and the many-to-many-to-many pathways not only improve prediction of multivariate outcomes but are also scientifically or biologically explainable.

Building on the work of pioneering statisticians and biologists, we propose a mediation framework called many-to-many-to-many (MMM) analysis to (1) estimate mediation effects when exposures, mediators, and outcomes are all multivariate, and both exposures and mediators may be high-dimensional, and (2) predict multivariate outcomes considering the selected exposures (e.g., selected SNPs) and predicted mediators (e.g., brain representations predicted by the SNPs) as features or predictors (see \textbf{Fig.}~\ref{fig:framework}\textit{(b-d)}). To illustrate the efficacy of the MMM method, we first establish its theoretical properties, including asymptotic normality, consistency, and error bounds for the indirect effect matrix. We then evaluate its finite-sample performance and apply it to investigate how high-dimensional human brain features mediate high-dimensional genetic information and eleven multivariate disease-behavior and diagnostic outcomes in Alzheimer's disease (AD). Finally, we demonstrate that the identified genetic factors and the brain representations estimated using the mediation methods are predictive of multiple AD-related outcomes in previously unseen subjects.

\subsection{Outline}

The rest of the article is organized as follows.
In \textbf{Sec.}~\ref{sec:method}, we introduce
the model and the estimation method for MMM.
In \textbf{Sec.}~\ref{sec:theoretical_results}, we discuss the theoretical properties of the estimator.
In \textbf{Sec.}~\ref{sec:simulation}, we illustrate the finite sample performance of the estimators using simulation studies.
In \textbf{Sec.}~\ref{sec:data_analysis}, we apply MMM
to study how high-dimensional brain data mediate high-dimensional genetic variables and multivariate disease outcomes and use the selected genetic features and predicted brain representations as features to predict eleven disease-related outcomes in AD.
We provide proofs of theorems and lemmas in the Supplementary Materials.

\vspace{-1em}
\section{Method} \label{sec:method}
\vspace{-.75em}
\subsection{Notations}

In this paper, we consider mediation analysis with multiple exposures, mediators, and outcomes, and that both the exposure and mediator variables are potentially high-dimensional.

We begin by defining key notations used throughout the paper. We consider a mediation problem where the mediator $\mathbf{m} \in \mathbb{R}^{p}$ is interposed between an exposure vector $\mathbf {x} \in \mathbb{R}^{q}$ and an outcome vector $\mathbf {y} \in \mathbb{R}^{T}$, with $p$ and $q$ allowed to exceed the sample size $n$ (see \textbf{Fig.} \ref{fig:framework}).

More specifically, for subject $i \in \left\{ 1, ..., n \right\}$, let $\mathbf{x}_{i}^{\top} = \big( x_{i1}, ..., x_{i q} \big) \in \R^{1 \times  q}$ denote the $q$-dimensional exposures with $q$ the number of all exposures (e.g., genes). Consider $\mathbf {m}_{i}^{\top} = \big( m_{i1},... , m_{ip}  \big) \in \R^{1  \times  p}$ denote the $p$-dimensional mediators with $p$ the number of mediators (e.g., the number of brain regions). Let $\mathbf{y}_{i}^{\top} = \big( y_{i1}, ..., y_{iT}  \big) \in \R^{1  \times  T}$ denote the $T$-dimensional outcome with $T$ the number of outcomes (e.g., cognitive-behavior outcomes). Finally, we denote $\mathbf{z}_{i}^{\top} = \big( z_{i1}, ..., z_{is} \big)  \in \R^{1 \times  s}$ as the $s$-dimensional covariates (with the first element of $1$ for the intercept in linear regression equation), where $s-1$ is the number of covariates (such as age and gender).


\subsection{The model}
\vspace{-.5em}

To formalize the many-to-many-to-many mediation analysis, we consider a multivariate linear structural equation model (LSEM):
\begin{align}
	\ddot{\mathbf{m}}_{i}  &=   \bm{\alpha}^{\top} \mathbf{x}_{i}
    +
       \bm{\zeta}^{\top} \mathbf{z}_{i}
    +
    \bm{\epsilon}_{i}\label{equa:MMAMA.model:selected.genes:equa1} \\
	\ddot{\mathbf{y}}_{i}
    &=    \bm{\beta}^{\top} \mathbf{m}_{i}
    +
        \bm{\gamma}^{\top}\mathbf{x}_{i}
    +
      \bm{\eta}^{\top} \mathbf{z}_{i} +    
     \bm{\xi}_{i},  	\label{equa:MMAMA.model:selected.genes:equa2}
\end{align}
where $\bm{\alpha} \in \R^{q \times p}$,  $\bm{\beta} \in \R^{p \times T}$ and $\bm{\gamma} \in \R^{q  \times T}$ denote the coefficient matrices related to the input pathway from the exposure to the mediator, from mediator to the outcome (conditioning on the exposure), and from the exposure to the outcome, respectively. Additionally,
$\bm{\zeta} \in \R^{s \times p}$ and $\bm{\eta} \in \R^{s \times T}$ are coefficient matrices corresponding to covariates in each equation.
Finally, $\bm{\epsilon_{i}}{\sim} \mathcal{N} \left( 0 ; \mathbf{I}\right)$
and 
$\bm{\xi_{i}}{\sim} \mathcal{N} \left( 0 ; \mathbf{I} \right)$ are Gaussian random noise vectors with covariance matrices in $\mathbb{R}^p$ and $\mathbb{R}^T$, respectively.

We have, from \textbf{Eqs.}~\eqref{equa:MMAMA.model:selected.genes:equa1} and \eqref{equa:MMAMA.model:selected.genes:equa2}, the \textit{many--to--many--to--many mediation effect}, or the matrix version of the indirect effect:
{\setlength{\abovedisplayskip}{4pt}%
\setlength{\belowdisplayskip}{4pt}%
\setlength{\abovedisplayshortskip}{2pt}%
\setlength{\belowdisplayshortskip}{2pt}%
\begin{align}
	\bm{\alpha} \bm{\beta},
\end{align}}
where $\bm{\alpha} = \{ \alpha_{jk} \in \mathbb{R} \mid 1 \le j \le q, \, 1 \le k \le p \}$ is a $q \times p$ matrix whose $\{j, k\}^{th}$ entry quantifies the pathway from the $j^{th}$ exposure to the $k^{th}$ mediator, and
$\bm{\beta} = \{ \beta_{kl} \in \mathbb{R} \mid 1 \le k \le p, \, 1 \le l \le T \}$ is a $p \times T$ matrix whose $\{k, l\}^{th}$ entry quantifies the pathway from the $k^{th}$ mediator to the $l^{th}$ outcome. Here, $q$ denotes the number of exposures, $p$ the number of mediators, and $T$ the number of outcomes.

It follows that \textit{the global mediation effect} from the $j^{th}$ exposure to the $l^{th}$ outcome that is carried through the \textit{entire} set of high-dimensional mediators is the $(j,l)^{th}$ entry of the mediation matrix:
\begin{align}
	\left \{  \bm{\alpha} \bm{\beta} \right  \}_{j,l}: = \bm{\alpha}_j \bm{\beta}_l = \sum_{k = 1}^{p} \alpha_{jk}\beta_{kl},
\end{align}
where $\bm{\alpha}_j$ is the $j^{th}$ row of $\bm{\alpha}$ and $\bm{\beta}_l$ is the $l^{th}$ column of $\bm{\beta}$.

\noindent The global mediation effect, $\left \{\bm{\alpha} \bm{\beta} \right  \}_{j,l}$, said in another way, quantifies the mediation effect from the $j^{th}$ exposure to the $l^{th}$ outcome that is summed over all mediators. In our context, this represents the indirect effect of the entire brain mediating the $j^{th}$ gene and the $l^{th}$ AD outcome.

As $\left \{  \bm{\alpha} \bm{\beta} \right \}_{j,l}$ summarizes the overall mediation effect due to the full set of high-dimensional mediators, one may also be interested in estimating individual mediation effects. To do so, we write \textbf{Eqs.} \eqref{equa:MMAMA.model:selected.genes:equa1} and \eqref{equa:MMAMA.model:selected.genes:equa2} as follows:
\begin{align}
	m_{ik}
     \label{equa:MMAMA.model:selected.genes:equa3}
    &=
    \sum_{j=1}^q x_{ij} \alpha_{jk}
    +   \Z_{i}^{\top} \bm{\zeta}_{k}
    +   \epsilon_{ik}, \\ 
    y_{il}
    &=
	\label{equa:MMAMA.model:selected.genes:equa4}
    \sum_{k=1}^p m_{ik} \beta_{kl}
    +
      \X_{i}^{\top} \bm{\gamma}_{l}
    +
    \Z_{i}^{\top} \bm{\eta}_{l}
    +
    \xi_{il},
\end{align}
\noindent where $\alpha_{jk}$ quantifies the effect from the $j^{th}$ exposure to the $k^{th}$  mediator, and $\beta_{kl}$ quantifies the effect from the $k^{th}$ mediator to the $l^{th}$ outcome.


From \textbf{Eqs.} \eqref{equa:MMAMA.model:selected.genes:equa3} and \eqref{equa:MMAMA.model:selected.genes:equa4}, one can obtain the mediation effect from the $j^{th}$ exposure to the $l^{th}$ outcome via the $k^{th}$ mediator:
\begin{align}
	\alpha_{jk} \beta_{kl},
\end{align}
for $1 \le j \le q$, $1 \le k \le p$, and $1 \le l \le T$.

In \textbf{Fig.} \ref{fig:framework}, we present the schematics of the MMM model. The model setting of the MMM poses several challenges. First, in terms of estimation, one needs to estimate $(\bm{\alpha},\bm{\beta}, \bm{\gamma},\bm{\zeta},\bm{\eta})$ across both layers of the LSEM, where the dimensionalities of $\bm{\alpha}$, $\bm{\beta}$, and  $\bm{\gamma}$ are all large or even high-dimensional. Second, beyond estimation, one needs to make inference on the indirect-effect matrix $\bm{\alpha}\bm{\beta}$ which is itself high-dimension. Third, to evaluate the reproducibility of the estimated pathways and parameters, it is necessary to extend the estimation framework to out-of-sample prediction. Whereas the estimated parameters of $\hat{\bm{\alpha}}$ and $\hat{\bm{\beta}}$ may provide insights into feature selection (given the high dimensions of both $\X$ and $\M$ are both high), it is unclear whether the estimated mediation parameters generalize to previously unseen samples.

To address the first challenge, we consider in \textbf{Sec.} \ref{sec:estimation-procedure} a regularized MMM estimation procedure and show that one can simultaneously estimate $(\bm{\alpha},\bm{\beta}, \bm{\gamma},\bm{\zeta},\bm{\eta})$. Regularization not only ensures identifiability, but also facilitate variable selection for prediction tasks.
For inference, we show in \textbf{Sec.} \ref{sec:theoretical_results} that the estimated MMM mediation effects are consistent, asymptotically normal, and achieve bounded error. We then empirically verify in \textbf{Sec.} \ref{sec:simulation} the theoretical properties via simulation studies. Finally, in  \textbf{Sec.} \ref{sec:prediction} and \ref{sec:data_analysis}, we demonstrate that the estimated mediation parameters are reproducible in the sense that they are effective for predicting multivariate outcomes in previously unseen samples, and that the selected genetic variables and brain mediators (enabled by the two regularizations) are biologically meaningful for explaining the genetic and neurological underpinnings of AD.



\vspace{-.75em}
\subsection{Estimation}
\label{sec:estimation-procedure}
\vspace{-.25em}


Let $\bm{\theta} = \big( \bm{\alpha}, \bm{\beta},  \bm{\zeta}, \bm{\gamma}, \bm{\eta} \big)$ denote the full set of model parameters and $\widehat{\bm{\alpha}},   \widehat{\bm{\zeta}},   \widehat{\bm{\beta}},   \widehat{\bm{\gamma}}, \widehat{\bm{\eta}}$ denote their corresponding estimators. For any matrix $\mathbf{A}= (a_{jk})$, we define the Frobenius norm and the $l_1$ norm as $\left\|  \mathbf{A}   \right\|_{2}^{2} = \displaystyle{ \sum_{j,k} }   a_{j,k}^{2}$ and $\left\|  \mathbf{A}   \right\|_{1} = \displaystyle{ \sum_{j,k} }   \vert a_{jk} \vert$, respectively.
For example, for our parameter matrices, the norms are $\left\|  \bm{\beta}   \right\|_{2}^{2} = \displaystyle{ \sum_{1 \leq j \leq  p ; 1 \leq k \leq  T } }   \beta_{j,k}^{2}  $,   \quad   $\left\|  \bm{\gamma}   \right\|_{2}^{2} = \displaystyle{ \sum_{1 \leq j \leq  q ; 1 \leq k \leq  T } }   \gamma_{j,k}^{2}  $ , \quad   $\left\|  \bm{\eta}   \right\|_{2}^{2} = \displaystyle{ \sum_{1 \leq j \leq  s ; 1 \leq k \leq  T } }   \eta_{j,k}^{2}$,
$\left\|  \bm{\alpha}   \right\|_{2}^{2} = \displaystyle{ \sum_{1 \leq j \leq  q ; 1 \leq k \leq  p } }   \alpha_{j,k}^{2}  $  \quad  and  \quad  $\left\|  \bm{\zeta}   \right\|_{2}^{2} = \displaystyle{ \sum_{1 \leq j \leq  s ; 1 \leq k \leq  p } }   \zeta_{j,k}^{2}$, where $q$ is the number of exposures, $p$ is the number of mediators, $T$ is the number of outcomes, and $s$ is the number of covariates.

Let $\bm{\beta} = \big( \bm{\beta}_{1}, \ldots, \bm{\beta}_{T} \big) \in \R^{p \times T}$ with $\bm{\beta_{k}} = \big( \beta_{1k}, ..., \beta_{pk} \big)^{\top} \in \R^{p \times 1}$ corresponding to the coefficients for the $k^{th}$ outcome; and let the other parameters be defined in a similar way. We define a joint penalized estimator that simultaneously incorporates both structural equations, and estimate $\bm{\theta}$ as the minimizer of the penalized least-squares objective in Eq.~\eqref{equa:elastic.net:loss.function}. 
In \textbf{Sec.}~\ref{sec:theoretical_results}, we establish that the estimators $\big( \widehat{\bm{\beta}}, \widehat{\bm{\gamma}}, \widehat{\bm{\eta}} \big)$ and $\big( \widehat{\bm{\alpha}}, \widehat{\bm{\zeta}} \big)$ are consistent and asymptotically normally distributed. Conceptually, the estimation procedure can be viewed as a two-stage penalization scheme that preserves the multivariate dependence structure across $\X$, $\M$, and $\Y$.
In the first stage, we regularize the mapping from exposures and covariates to mediators, shrinking irrelevant rows and columns of $(\bm{\alpha},\bm{\zeta})$ toward zero, and in the second stage we similarly regularize the mapping from mediators, exposures, and covariates to outcomes through $(\bm{\beta} , \bm{\gamma} , \bm{\eta})$.

By working within a coherent MMM framework, this joint regularization yields direct estimates of the entire indirect-effect matrix $\bm{\alpha \beta}$, allowing us to study high-dimensional exposure--mediator--outcome pathways without collapsing them into low-dimensional summaries.
The technical details of the penalized objective, tuning, and theoretical guarantees are in \textbf{Sec.}~\ref{sec:estimation-procedure}. The algorithm for estimating the indirect effect matrix is in \textbf{Algorithm} \ref{alg:MMM}.

\begin{algorithm}[ht]
\caption{Estimation of Indirect Effects in a Many-to-Many-to-Many Mediation Model}
\label{alg:MMM}
\begin{algorithmic}[1]

\State \textbf{Input:}
Exposure $\mathbf X \in \mathbb{R}^{n \times q}$,
mediator matrix $\mathbf M \in \mathbb{R}^{n \times p}$,
covariate matrix $\mathbf Z \in \mathbb{R}^{n \times s}$,
and outcome matrix $\mathbf Y \in \mathbb{R}^{n \times T}$.

\State
$
\mathbf I_1 \leftarrow \bigl[\, \mathbf X \ \mathbf Z \,\bigr] \in \mathbb{R}^{n \times (q+s)} .
$

\State
$
\mathbf I_2 \leftarrow \bigl[\, \mathbf M \ \mathbf X \ \mathbf Z \,\bigr] \in \mathbb{R}^{n \times (p+q+s)} .
$

\State Fit an Elastic Net regression model with predictors $\mathbf I_1$ and responses $\mathbf M$, and denote the estimated coefficient matrix by
$
\bm \theta_1 \in \mathbb{R}^{(q+s) \times p}.
$

\State Fit a separate Elastic Net regression model with predictors $\mathbf I_2$ and responses $\mathbf Y$, and denote the estimated coefficient matrix by
$
\bm \theta_2 \in \mathbb{R}^{(p+q+s) \times T}.
$

\State $\bm \alpha \in \mathbb{R}^{q \times p}\leftarrow$ Submatrix of
$\bm \theta_1$ consisting of the rows corresponding to $\mathbf X$.

\State  $\bm \beta \in \mathbb{R}^{p \times T}\leftarrow$ Submatrix of
$\bm \theta_2$ consisting of the rows corresponding to $\mathbf M$.

\State \textbf{Output:} Coefficient matrices $\bm \alpha$ and $\bm \beta$.

\end{algorithmic}
\end{algorithm}

\subsection{Out-of-sample prediction}
\label{sec:prediction}
\vspace{-.25em}

The estimated mediation parameters facilitate multivariate outcome prediction. The out-of-sample prediction serves two purposes. First, to evaluate the generalizability and reproducibility of the MMM framework parameters, one needs to evaluate the estimated parameters in new, previously unseen samples. Second, if the out-of-sample prediction is satisfactory, then it shows that, thanks to the MMM mediation analysis, once the model is trained (using exposures, mediators, and outcomes from the training data), future prediction can be made using only the exposures. This is possible because of \textbf{Eqs.} \eqref{equa:MMAMA.model:selected.genes:equa5} and \eqref{equa:MMAMA.model:selected.genes:equa6} and this possibility is  potentially advantageous in real-world predictive tasks. For example, when obtaining data from both the exposure (e.g., genetic data) and the mediators (e.g., brain scans) is costly or time-consuming, one can predict outcomes using only exposure data and covariates. This is possible largely because the relationships between the mediators (which are not observed in the testing data) and the exposures, as well as those between the mediators and the outcomes, have been in part captured by the estimated mediation parameters $\hat{\bm{\alpha}}$ and $\hat{\bm{\beta}}$.

Specifically, one performs out-of-sample predictions in two steps. First, one predicts the estimated mediators $\left \{ \hat{m}_{ik}^{\text{new}} \right \}_k$ using the observed exposures $\mathbf{x}_i^{\text{new}}$ and covariates $\mathbf{z}_i^{\text{new}}$ following \textbf{Eq.} \ref{equa:MMAMA.model:selected.genes:equa5}. Second, one predicts the multivariate outcomes $\left \{ \hat{y}_{il}^{\text{new}} \right \}_l$ using the predicted mediators $\left \{ \hat{m}_{ik}^{\text{new}} \right \}_k$, observed exposures $\mathbf{x}_i^{\text{new}}$, and covariates $\mathbf{z}_i^{\text{new}}$ following \textbf{Eq.} \ref{equa:MMAMA.model:selected.genes:equa6}:
\begin{align}
	\hat{m}_{ik}^{\text{new}}
\label{equa:MMAMA.model:selected.genes:equa5}
    &=
    \sum_{j=1}^q x_{ij}^{\text{new}}  \hat{\alpha}_{jk}
    +  (\Z_{i}^{\text{new}})^{\top}  \hat{\bm{\zeta}}_{k},\\
	\hat{y}_{il}^{\text{new}}
    &=
	\label{equa:MMAMA.model:selected.genes:equa6}
    \sum_{k=1}^p \hat{m}_{ik}^{\text{new}} \hat{\beta}_{kl}
    +
    (\X_{i}^{\text{new}})^{\top} \hat{\bm{\gamma}}_{l}
    +
     (\Z_{i}^{\text{new}})^{\top} \hat{\bm{\eta}}_{l},
\end{align}
where the Greek letters with hats indicate their estimated values, the superscript $^{\text{new}}$ indicates observed data in the new sample, and $\hat{m}_{ik}^{\text{new}}$ and $\hat{y}_{il}^{\text{new}}$ represent the predicted $k^{th}$ mediator and predicted $l^{th}$ outcome, respectively, for subject $i$, for $1 \le k \le p$, $1 \le l \le T$, and $1 \le i \le N$.

Note that during multivariate outcome prediction using the above formulation, one only requires exposures and covariates from the testing sample but not the mediators. We present the out-of-sample results in \textbf{Sec.} \ref{sec:data_analysis}.

\subsection{A note on many-to-many-to-many causal mediation}
\label{sec:causal.mediation.model}

\bigskip

In this section, we extend classical causal mediation to a high-dimensional many--to--many--to--many setting. We first define the sequential ignorability assumptions~\citep{Robins-Richardson:2010} for the case of high-dimensional many-to-many-to-many mediation as follows. We adopt the potential outcomes framework \citep{Rubin:1974}, 
and follow the causal mediation formalism of \cite{imai2010identification,vanderweele2015explanation} 
to define direct and indirect effects in multivariate settings. 

Specifically, let $\ddot{\mathbf{m}}(x_j)$ denote the potential mediator vector under an intervention that sets $X_j = x_j$, while other exposures $X_{-j}$ take their natural values. Similarly, let $Y_l(x_j, \ddot{\mathbf{m}}(x_j))$ denote the potential outcome of the $l^{th}$ outcome 
under $X_j = x_j$ and mediators $\ddot{\mathbf{m}}$ generated by this intervention. Finally, let $Y_l(x_j, \m)$ denote the potential outcome under $X_j = x_j$ 
and mediators set to a fixed value $\m$. These definitions extend the standard potential outcomes notation to the high-dimensional many--to--many--to--many setting.


Further, we define $\text{CDE}_l(x_j, \widetilde{x}_j)$ and $\text{CDE}_l(\mathbf{x}, \widetilde{\mathbf{x}})$ as the controlled direct effect (CDE) of the $j^{th}$ exposure on the $l^{th}$ outcome and the CDE of the multivariate exposure on the $l^{th}$ outcome, respectively. We define $\text{NDE}_l(x_j, \widetilde{x}_j)$ and $\text{NDE}_l(\mathbf{x}, \widetilde{\mathbf{x}})$ as the natural direct effect (NDE) of the $j^{th}$ exposure on the $l^{th}$ outcome and the NDE of the multivariate exposure on the $l^{th}$ outcome, respectively. Finally, we define $\text{NIE}_l(x_j, \widetilde{x}_j)$ and $\text{NIE}_l(\mathbf{x}, \widetilde{\mathbf{x}})$ the natural indirect effect (NIE) of the $j^{th}$ exposure on the $l^{th}$ outcome and the NIE of the multivariate exposure on the $l^{th}$ outcome, respectively. All causal effects are defined conditionally on baseline covariates $\Z$.

We consider the sequential ignorability \citep{robins1992identifiability, vanderweele2015explanation} in the many--to--many--to--many setting as follows.

\subsubsection*{Assumptions}

\begin{enumerate}
\item[(C1)] $Y_l \left( x_j, \ddot{\mathbf{m}}(x_j) \right )\perp X_j \vert \ddot{\mathbf{z}}$: no confounding between any exposure $X_j$ and outcome $Y_l$, for $ 1 \leq j \leq q$ and $1 \leq l \leq T$;
\item[(C2)] $Y_l(\mathbf{x}, \mathbf{m}) \perp  M_k  \vert  \ddot{\mathbf{x}}, \ddot{\mathbf{z}}$: no confounding between mediators $M_k$ and outcome $Y_l$, for $ 1 \leq k \leq p$ and $ 1 \leq l \leq T$;
\item[(C3)] $M_k (x_j) \perp X_j \vert \ddot{\mathbf{z}}$: no confounding between exposure $X_j$ and mediators $\ddot{\mathbf{m}}$, for $ 1 \leq j \leq q$ and $ 1 \leq k \leq p$;
\item[(C4)] $\quad Y_l(x_j, \mathbf{m}) \perp M_k(\widetilde{x}_j) \mid \ddot{\mathbf{z}}$: no confounding between mediator $M_k$ and outcome $Y_l$ that is effected by $x_j$, for $ 1 \leq k \leq p$ and for $ 1 \leq l \leq T$. 
\end{enumerate}

Under~(C1)-(C4), we show that the average direct and indirect effects~\citep{imai2010identification,vanderweele2015explanation} are identifiable under the many--to--many--to--many setting. We present the proof of controlled and natural direct effects as well as the natural indirect effects in the Supplementary materials.

More precisely, the controlled direct effect from the $j^{th}$ exposure to the $l^{th}$ outcome is:
\begin{align*}
\text{CDE}_l(x_j, \widetilde{x}_j) =
\E \left \{ Y_l(x_j, \m) -y_l(\widetilde{x}_j, \m) \big| \ddot{\mathbf{z}} = \mathbf{z} \right \}
&= (x_j - \widetilde{x}_j) \gamma_{jl}.
\nonumber
\end{align*}

Relatedly, the controlled direct effect from the multivariate exposure to the $l^{th}$ outcome is:
\begin{align*}
\text{CDE}_l(\mathbf{x}, \widetilde{\mathbf{x}}) = \E \left \{ Y_l(\mathbf{x}, \m) -y_l(\widetilde{\mathbf{x}}, \m) \big| \ddot{\mathbf{z}} = \mathbf{z} \right \}
&= (\mathbf{x} - \widetilde{\mathbf{x}})^{\top}  \bm{\gamma}_{l}.
\nonumber
\end{align*}

The natural direct effect from the $j^{th}$ exposure to the $l^{th}$ outcome is:
\begin{align*}
\text{NDE}_l(x_j, \widetilde{x}_j) = \E \left \{ Y_l(x_j,\ddot{\mathbf{m}}(\widetilde{x}_j)) - Y_l(\widetilde{x}_j,\ddot{\mathbf{m}}(\widetilde{x}_j)) \big| \ddot{\mathbf{z}} = \mathbf{z} \right \}
&=
(x_j - \widetilde{x}_j)  \gamma_{jl}.
\end{align*}

Relatedly, the natural direct effect from the multivariate exposure to the $l^{th}$ outcome
is:
\begin{align*}
\text{NDE}_l(\mathbf{x}, \widetilde{\mathbf{x}}) =
\E \left\{ Y_l(\mathbf{x}, \ddot{\mathbf{m}}(\widetilde{\mathbf{x}}))
           - Y_l(\widetilde{\mathbf{x}}, \ddot{\mathbf{m}}(\widetilde{\mathbf{x}}))
           \,\big|\, \ddot{\mathbf{z}} = \mathbf{z} \right\}
&=
(\mathbf{x} - \widetilde{\mathbf{x}})^{\top}\, \bm{\gamma}_l.
\end{align*}

The natural indirect effect from $j^{th}$ exposure to the $l^{th}$ outcome is:
\begin{align*}
NIE_l(x_j, \widetilde{x}_j) =
\E \left \{ Y_l(x_j,\ddot{\mathbf{m}}(x_j)) - Y_l(x_j,\ddot{\mathbf{m}}(\widetilde{x}_j)) \big | \ddot{\mathbf{z}} = \mathbf{z} \right \}
&=(x_j - \widetilde{x}_j) \sum_{k=1}^p \alpha_{jk}\beta_{kl}.
\end{align*}

The natural indirect effect from the multivariate exposure to the $l^{th}$ outcome is:
\begin{align*}
NIE_l(\mathbf{x}, \widetilde{\mathbf{x}}) =
\E \left \{ Y_l(\mathbf{x},\ddot{\mathbf{m}}(\mathbf{x})) - Y_l(\mathbf{x},\ddot{\mathbf{m}}(\widetilde{\mathbf{x}})) \big | \ddot{\mathbf{z}} = \mathbf{z} \right \}
&=
(\mathbf{x} - \widetilde{\mathbf{x} })^{\top} \bm{\alpha} \bm{\beta}_l.
\end{align*}



%
%
%
%
After establishing the causal properties of the MMM mediation framework, we proceed to examine the theoretical properties of the model and its mediation estimators, including consistency and asymptotic behavior.

\section{Theoretical Results}
\label{sec:theoretical_results}

Let $\bm{\theta} = \big( \bm{\alpha}, \bm{\beta},  \bm{\zeta}, \bm{\gamma}, \bm{\eta} \big)$ denote the parameters set, where, for example, $\bm{\beta} = \big( \bm{\beta}_{1}, ... , \bm{\beta}_{T} \big) \in \R^{p \times T}$ with $\bm{\beta}_{k} = \big( \beta_{1k}, ..., \beta_{pk} \big)^{\top} \in \R^{p \times 1}$. Let   $\widehat{\bm{\alpha}},   \widehat{\bm{\zeta}},   \widehat{\bm{\beta}},   \widehat{\bm{\gamma}},    \widehat{\bm{\eta}}$ be estimators of these parameters, respectively.
The MMM estimate is then:
\begin{align}
	\widehat{\bm{\theta}}  =  \,  &\underset{\theta}{\textrm{ argmin}}  \Bigg[     \sum_{i=1}^{n} \left(    \left\|  \Y_{i}   -   \bm{\beta}^{\top} \M_{i}   -  \bm{\gamma}^{\top}  \X_{i} -    \bm{\eta}^{\top} \Z_{i}   
	\right\|_{2}^{2}
	+  \left\|  \M_{i} -     \bm{\alpha}^{\top}\X_{i}    -   \bm{\zeta}^{\top}\Z_{i}  %
	\right\|_{2}^{2}
	\right)
	\label{equa:elastic.net:loss.function}
	\\
	& \hspace{1.5cm} +  \lambda_{Y,2}   \Big(  \big\| \bm{\beta}  \big\|_{2}^{2} + \left\|   \bm{\gamma}   \right\|_{2}^{2}  + \big\|  \bm{\eta}  \big\|_{2}^{2}  \Big)  +    \lambda_{Y,1}   \Big(  \big\| \bm{\beta}  \big\|_{1} + \left\|   \bm{\gamma}  \right\|_{1}  + \big\|   \bm{\eta}  \big\|_{1}  \Big)
	\nonumber
	\\
	& \hspace{1.5cm}  +   \lambda_{M,2}  \Big(  \left\|  \bm{\alpha}  \right\|_{2}^{2} + \big\|  \bm{\zeta}  \big\|_{2}^{2} \Big)  +   \lambda_{M,1}   \Big(  \left\| \bm{\alpha}  \right\|_{1}  + \big\|   \bm{\zeta} \big\|_{1}   \Big)  \Bigg]    ,
	\nonumber
\end{align}
where $\lambda_{M,1} , \lambda_{M,2} ,  \lambda_{Y,1} ,   \lambda_{Y,2} \geq 0$ are tuning parameters.

We establish that the mediation estimators $\widehat{\bm{\alpha}}$, $\widehat{\bm{\beta}}$, and $\widehat{\bm{\gamma}}$, as well as the covariate parameters $\widehat{\bm{\zeta}}$ and $\widehat{\bm{\eta}}$, are consistent and converge in probability to the true parameters. Furthermore, these estimators are asymptotically normally distributed.
\subsection{Consistency}
\label{subsec:model.selection.consistency}

We first establish the consistency of $\widehat{\bm{\beta}}$; the consistency analysis for other parameters is similar. We then show that the mediation effects $\alpha_{jk} \beta_{kl}$ are also consistently estimated for $1 \le j \le q$, $1 \le k \le p$, and $1 \le l \le T$.

Throughout, we treat the design matrices $\X$, $\M$, and $\Z$ as fixed (non-random). In the case where these matrices are random, the results hold conditionally on $\X$, $\M$, and $\Z$.

Note that, minimizing the loss function in~\eqref{equa:elastic.net:loss.function} with respect to $\bm{\beta}$ is equivalent to minimizing the following loss function:
\begin{align}
	L_{1}(\bm{\beta})
	=  \,  &    \sum_{i=1}^{n} \left(    \left\|  \Y_{i}   -   \bm{\beta}^{\top} \M_{i}   -    \bm{\gamma}^{\top}\X_{i}  -    \bm{\eta}^{\top}\Z_{i}
	\right\|_{2}^{2}
	\right)
	+  \lambda_{Y,2}    \big\|  \bm{\beta}  \big\|_{2}^{2}   +    \lambda_{Y,1}    \big\|  \bm{\beta}   \big\|_{1}
	\nonumber
	\\
	=  \,  &  \sum_{k=1}^{T} \Bigg[    \sum_{i=1}^{n}  \left(   \big(  (\Y_{i})_{k}   -     \bm{\beta}_{k}^{\top} \M_{i}  -   \bm{\gamma}_{k}^{\top}\X_{i}   -   \bm{\eta}_{k}^{\top}\Z_{i}
	\big)^{2}
	\right)
	+  \lambda_{Y,2}  \big\|  \bm{\beta}_{k}   \big\|_{2}^{2}   +    \lambda_{Y,1}    \big\|  \bm{\beta}_{k} \big\|_{1}  \Bigg]  ,
	\label{equa:elastic.net:loss.function:beta_k}
\end{align}
and, similarly, minimizing the loss function in~\eqref{equa:elastic.net:loss.function} with respect to $\bm{\alpha}$ is equivalent to minimizing the following loss function:
\begin{align}
	L_{2}(\bm{\alpha})
	=  \,  &    \sum_{i=1}^{n} \left(    \left\|  \M_{i}   -      \bm{\alpha}^{\top}  \X_{i} -   \bm{\zeta}^{\top}\Z_{i} 
	\right\|_{2}^{2}
	\right)
	+  \lambda_{M,2}    \big\|  \bm{\alpha}  \big\|_{2}^{2}   +    \lambda_{M,1}    \big\|  \bm{\alpha}   \big\|_{1}
	\nonumber
	\\
	=  \,  &  \sum_{\ell=1}^{p} \Bigg[    \sum_{i=1}^{n}  \left(   \big(  (\M_{i})_{\ell}   -     \bm{\alpha}_{\ell}^{\top} \X_{i}  -    \bm{\zeta}_{\ell}^{\top} \Z_{i}
	\big)^{2}
	\right)
	+  \lambda_{M,2}  \big\|  \bm{\alpha}_{\ell}   \big\|_{2}^{2}   +    \lambda_{M,1}    \big\|  \bm{\alpha}_{\ell} \big\|_{1}  \Bigg].
\label{equa:elastic.net:loss.function:alpha_k}
\end{align}
Thus, for $1 \leq k \leq T$ and $1 \leq \ell \leq p$, we define the estimator of $\bm{\beta}_{k}$ and it of $\bm{\alpha}_{l}$, respectively, as:
\begin{align}
\widehat{\bm{\beta}}_{k} &:=  \underset{   {\bm{\beta}_{k} } }{ \textrm{argmin} }  \Bigg[  \sum_{i=1}^{n}  \left(   \big(  (\Y_{i})_{k}   -     \bm{\beta}_{k}^{\top} \M_{i} -   \bm{\gamma}_{k}^{\top}\X_{i}   -   \bm{\eta}_{k}^{\top} \Z_{i}
\big)^{2}
\right)
+  \lambda_{Y,2}  \big\|  \bm{\beta}_{k}   \big\|_{2}^{2}   +    \lambda_{Y,1}    \big\|  \bm{\beta}_{k} \big\|_{1}  \Bigg];
\nonumber
\\
\widehat{\bm{\alpha}}_{\ell} &:=  \underset{   {\bm{\alpha}_{\ell} } }{ \textrm{argmin} }  \Bigg[  \sum_{i=1}^{n}  \left(   \big(  (\M_{i})_{\ell}   -  \bm{\alpha}_{\ell}^{\top} \X_{i}   -    \bm{\zeta}_{\ell}^{\top} \Z_{i}
\big)^{2}   	\right)
+  \lambda_{M,2}  \big\|  \bm{\alpha}_{\ell}   \big\|_{2}^{2}   +    \lambda_{M,1}    \big\|  \bm{\alpha}_{\ell} \big\|_{1}  \Bigg].  	
\label{equa:elastic.net:estimator:beta_k:def}
\end{align}

\medskip

We follow the common notations and definitions of sign consistency used  in~\citep{Jia-Yu:2010:model.consistency.elastic-net} and references therein. More precisely,
we denote $\widehat{\bm{\beta}}_{k} =_{s} \bm{\beta}_{k}$ if the estimated vector  $\widehat{\bm{\beta}}_{k}$ and the true parameter vector $\bm{\beta}_{k}$ have the same sign element-wise.
\begin{definition}[Sign recovery property]  \em{
		Property $\mathcal{R}(\M, \bm{\beta}, \bm{\xi}, \lambda_{Y,1}, \lambda_{Y,2})$: There exists an optimal solution $\widehat{\bm{\beta}}(\lambda_{Y,1}, \lambda_{Y,2})$, depending on the given parameters $\lambda_{Y,1}$ and $\lambda_{Y,2}$, for~\eqref{equa:elastic.net:loss.function} with the property $\widehat{\bm{\beta}}_{k} =_{s} \bm{\beta}_{k}$ for all $1 \leq k \leq T$.
\leavevmode \\
Similarly, 	Property $\mathcal{R}(\X, \bm{\alpha}, \bm{\epsilon}, \lambda_{M,1}, \lambda_{M,2})$: There exists an optimal solution $\widehat{\bm{\alpha}}(\lambda_{M,1}, \lambda_{M,2})$, depending on the given parameters $\lambda_{M,1}$ and $\lambda_{M,2}$, for~\eqref{equa:elastic.net:loss.function} with the property $\widehat{\bm{\alpha}}_{\ell} =_{s} \bm{\alpha}_{\ell}$ for all $1 \leq \ell \leq p$.
}  \end{definition}
\begin{definition}[Sign consistency]  \em{
		The estimate is \textbf{sign consistent} if there exists $\widehat{\lambda}_{Y,1}, \widehat{\lambda}_{Y,2}, \widehat{\lambda}_{M,1}, \widehat{\lambda}_{M,2}$ both of which are functions of $n$ and depend on the data, such that:
		\begin{align*}
			\lim_{n   \rightarrow   +\infty}    \mathbb{P} \Big( \big( \widehat{\bm{\beta}} (\widehat{\lambda}_{Y,1} , \widehat{\lambda}_{Y,2})\big)_{k}  =_{s}   \bm{\beta}_{k} , \textrm{ for all } 1 \leq k \leq T  \Big)  &=   1 ,
            \\
            \textrm{ and }\quad \lim_{n   \rightarrow   +\infty}    \mathbb{P} \Big( \big( \widehat{\bm{\alpha}} (\widehat{\lambda}_{M,1} , \widehat{\lambda}_{M,2})\big)_{\ell}  =_{s}   \bm{\alpha}_{\ell} , \textrm{ for all } 1 \leq \ell \leq p  \Big)  &=   1.
		\end{align*}
}  \end{definition}
\underline{Remarks:} 
The estimate $\widehat{\bm{\beta}} \big( \widehat{\lambda}_{Y,1} , \widehat{\lambda}_{Y,2}   \big)$ and $\widehat{\bm{\alpha}} \big( \widehat{\lambda}_{M,1} , \widehat{\lambda}_{M,2}   \big)$ are  sign~consistent if~and~only~if:
\begin{align*}
\lim_{n   \rightarrow   +\infty} \mathbb{P} \Big( \mathcal{R} \big( \M , \bm{\beta}, \bm{\xi}, \widehat{\lambda}_{Y,1} , \widehat{\lambda}_{Y,2} \big) \Big) = 1
\, \textrm{ and }
\lim_{n   \rightarrow   +\infty} \mathbb{P} \Big( \mathcal{R} \big( \X , \bm{\alpha} , \bm{\epsilon}, \widehat{\lambda}_{M,1} , \widehat{\lambda}_{M,2} \big) \Big) = 1.
\end{align*}
We will show that under mild conditions on the relationship between $p, q$ and $n$, when $p, q$ and $n$ all go to infinity, the Elastic Irrepresentable Condition  (EIC), which will be introduced later, ensures that the regularization consistently selects the true model.

\bigskip

By Assumption (B1), the model is assumed to be ``sparse'', i.e., most of the regression coefficients $\bm{\beta}$ and $\bm{\alpha}$ are exactly zero, corresponding to predictors that are irrelevant to the response. Without loss of generality, assume the first $d$ elements of $\bm{\beta}_{k}$ and the first $\widetilde{d}$ elements of $\bm{\alpha}_{\ell}$ are non-zero with $1 \leq d < p$ and $1 \leq \widetilde{d} < q$. Let $\bm{\beta}_{k,(1)} := \big( \beta_{1k} , ..., \beta_{dk} \big)$ and $\bm{\beta}_{k,(2)} := \big( \beta_{(d+1)k} , ..., \beta_{pk} \big)$;
let
$\bm{\alpha}_{\ell,(1)} := \big( \alpha_{1\ell} ,  ..., \alpha_{\widetilde{d} \ell} \big)$ and $\bm{\alpha}_{\ell,(2)} := \big( \alpha_{(\widetilde{d}+1) \ell} , ..., \alpha_{q \ell} \big)$.
Moreover, write $\M_{i,(1)}$ and $\M_{i,(2)}$ as the first $d$ and the last $(p-d)$ element of $\M_{i}$, respectively.
Similarly, write $\X_{i,(1)}$ and $\X_{i,(2)}$ as the first $\widetilde{d}$ and the last $(q-\widetilde{d})$ element of $\X_{i}$, respectively.

We first state necessary and sufficient conditions for property $\mathcal{R}(\M, \bm{\beta}, \bm{\xi},  \lambda_{Y,1},  \lambda_{Y,2})$ and $\mathcal{R}(\X, \bm{\alpha}, \bm{\epsilon}, \lambda_{M,1},  \lambda_{M,2})$  to hold.
\begin{lem} [Component-wise KKT conditions for $\bm{\beta}_k$]

\label{lem:KKT-condition}
	For any given $\lambda_{Y,1},\lambda_{Y,2} > 0$, and noise vector $\bm{\xi} \in \R^{T}$, property $\mathcal{R}(\M, \bm{\beta}, \bm{\xi}, \lambda_{Y,1}, \lambda_{Y,2})$ holds if and only if:
	\em{  \begin{align}
			\Big\|  2  \sum_{i=1}^{n}  \Big[   \M_{i,(2)} \M_{i,(1)}^{\top}  \Big( \sum_{\ell = 1}^{n}  \M_{\ell , (1)} \M_{\ell , (1)}^{\top}  +  \lambda_{Y,2} \mathbf{I} \Big)^{-1}
			&\Big( \M_{i,(1)} \bm{\xi}_{ik}  -  \dfrac{\lambda_{Y,1}}{2} \textrm{sign} \big( \bm{\beta}_{k,(1)} \big)  -  \lambda_{Y,2} \bm{\beta}_{k,(1)} \Big)
			\nonumber
			\\
			&+  \M_{i,(2)}   \bm{\xi}_{ik}  \Big]   \Big\|_{\infty}
			\leq  \lambda_{Y,1},
			\label{lem:KKT-condition:condition1}
	\end{align} }
	and
	{  \begin{align}
			\textrm{sign} \Bigg(  \Big(  \sum_{\ell = 1}^{n} \M_{\ell,(1)}   \M_{\ell , (1)}^{\top}   +  \lambda_{Y,2} \mathbf{I}      \Big)^{-1}
			&\Big[   \sum_{i=1}^{n} \Big( \M_{i, (1)} \M_{i, (1)}^{\top}  \bm{\beta}_{k , (1)}  +  \M_{i,(1)}    \bm{\xi}_{ik}  \Big)
			\nonumber
			\\
			& \qquad  -   \dfrac{\lambda_{Y,1}}{2} \textrm{sign} \big( \bm{\beta}_{k,(1)} \big)  \Big]    \Bigg)  =  \textrm{sign} \big( \bm{\beta}_{k,(1)} \big) ,
			\label{lem:KKT-condition:condition2}
		\end{align}
		for all $1 \leq  k  \leq  T$. }
\end{lem}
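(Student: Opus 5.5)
The approach mirrors the elastic-net sign-recovery argument of \citep{Jia-Yu:2010:model.consistency.elastic-net}, applied separately to each outcome column. By \eqref{equa:elastic.net:loss.function:beta_k}, the objective in $\bm{\beta}$ decouples over $k$. Property $\mathcal{R}(\M,\bm{\beta},\bm{\xi},\lambda_{Y,1},\lambda_{Y,2})$ therefore holds iff, for every $1\le k\le T$, the $k$-th elastic-net problem has a minimizer with $\widehat{\bm{\beta}}_k=_s\bm{\beta}_k$. It thus suffices to fix $k$ and prove the two displayed conditions are necessary and sufficient for that single problem.

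Following the paper's reduction, I treat $\bm{\gamma}_k,\bm{\eta}_k$ at their true values (equivalently, partial them out). Then the residual is $(\Y_i)_k-\bm{\gamma}_k^\top\X_i-\bm{\eta}_k^\top\Z_i=\M_{i,(1)}^\top\bm{\beta}_{k,(1)}+\bm{\xi}_{ik}$. The objective is strictly convex because $\lambda_{Y,2}>0$, so the KKT conditions characterize its unique minimizer. Writing $b=\widehat{\bm{\beta}}_k$, these conditions are
\[
-2\sum_i \M_i\big((\Y_i)_k-\bm{\gamma}_k^\top\X_i-\bm{\eta}_k^\top\Z_i-\M_i^\top b\big)+2\lambda_{Y,2}b+\lambda_{Y,1}g=0,
\]
where $g_j=\textrm{sign}(b_j)$ if $b_j\neq0$ and $g_j\in[-1,1]$ otherwise.

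For sufficiency, I construct a candidate $b=(b_{(1)},0)$. I solve the first $d$ KKT equations with $g_{(1)}=\textrm{sign}(\bm{\beta}_{k,(1)})$, which gives
\[
b_{(1)}=\Big(\sum_\ell \M_{\ell,(1)}\M_{\ell,(1)}^\top+\lambda_{Y,2}\mathbf{I}\Big)^{-1}\Big[\sum_i\big(\M_{i,(1)}\M_{i,(1)}^\top\bm{\beta}_{k,(1)}+\M_{i,(1)}\bm{\xi}_{ik}\big)-\tfrac{\lambda_{Y,1}}{2}\textrm{sign}(\bm{\beta}_{k,(1)})\Big].
\]
Condition \eqref{lem:KKT-condition:condition2} says exactly that $\textrm{sign}(b_{(1)})=\textrm{sign}(\bm{\beta}_{k,(1)})$, so the chosen subgradient is consistent. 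Next I check the inactive block, which needs $\|2\sum_i\M_{i,(2)}(\M_{i,(1)}^\top(\bm{\beta}_{k,(1)}-b_{(1)})+\bm{\xi}_{ik})\|_\infty\le\lambda_{Y,1}$. To simplify, write $\mathbf{A}=\sum_\ell \M_{\ell,(1)}\M_{\ell,(1)}^\top$. Then
\[
\bm{\beta}_{k,(1)}-b_{(1)}=(\mathbf{A}+\lambda_{Y,2}\mathbf{I})^{-1}\Big[\lambda_{Y,2}\bm{\beta}_{k,(1)}+\tfrac{\lambda_{Y,1}}{2}\textrm{sign}(\bm{\beta}_{k,(1)})-\textstyle\sum_i\M_{i,(1)}\bm{\xi}_{ik}\Big].
\]
This uses the identity $\mathbf{I}-(\mathbf{A}+\lambda_{Y,2}\mathbf{I})^{-1}\mathbf{A}=\lambda_{Y,2}(\mathbf{A}+\lambda_{Y,2}\mathbf{I})^{-1}$. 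Substituting recovers the expression in \eqref{lem:KKT-condition:condition1}, up to an overall sign that the $\ell_\infty$ norm absorbs. Hence $(b_{(1)},0)$ satisfies KKT and is the minimizer, and it has the correct signs.

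For necessity, suppose some minimizer has $\widehat{\bm{\beta}}_k=_s\bm{\beta}_k$. Then $b_{(2)}=0$ and $g_{(1)}=\textrm{sign}(\bm{\beta}_{k,(1)})$. The first-block KKT equations force $b_{(1)}$ to equal the formula above, which yields \eqref{lem:KKT-condition:condition2}. The second-block KKT equations need $|g_j|\le1$ on the inactive set, which yields \eqref{lem:KKT-condition:condition1}.

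I expect the main obstacle to be the algebra in the inactive-block step. There I must carefully track the sign convention and the factor $2$ relative to the $\tfrac{\lambda_{Y,1}}{2}$ scaling, so that the expression matches \eqref{lem:KKT-condition:condition1} exactly. I must also state explicitly how $\bm{\gamma}_k,\bm{\eta}_k$ are handled, since the lemma's conditions contain no $\X,\Z$ terms.
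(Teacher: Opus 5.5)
Your route is the paper's: decouple over $k$, write the KKT system, solve the active block with subgradient $\mathbf{s}=\textrm{sign}(\bm{\beta}_{k,(1)})$, and substitute into the inactive block. Your sufficiency/necessity bookkeeping via strict convexity is cleaner than the paper's.

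The step that fails is the final claim that substitution recovers \eqref{lem:KKT-condition:condition1} ``up to an overall sign.'' Set $\mathbf{A}=\sum_{\ell}\M_{\ell,(1)}\M_{\ell,(1)}^{\top}$. Your (correct) formula for $\bm{\beta}_{k,(1)}-b_{(1)}$ gives
\[
-\lambda_{Y,1}g_{(2)}=2\Big(\sum_{i=1}^{n}\M_{i,(2)}\M_{i,(1)}^{\top}\Big)\big(\mathbf{A}+\lambda_{Y,2}\mathbf{I}\big)^{-1}\Big(\sum_{l=1}^{n}\M_{l,(1)}\bm{\xi}_{lk}-\tfrac{\lambda_{Y,1}}{2}\mathbf{s}-\lambda_{Y,2}\bm{\beta}_{k,(1)}\Big)-2\sum_{i=1}^{n}\M_{i,(2)}\bm{\xi}_{ik}.
\]
The printed condition differs from this in two ways:
\begin{itemize}
\item It puts $\M_{i,(1)}\bm{\xi}_{ik}$ under the same sum over $i$ as $\M_{i,(2)}\M_{i,(1)}^{\top}$. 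So only the diagonal entries $\M_{i,(1)}^{\top}(\mathbf{A}+\lambda_{Y,2}\mathbf{I})^{-1}\M_{i,(1)}$ of the ridge hat matrix appear.
\item It has $+\M_{i,(2)}\bm{\xi}_{ik}$ where the correct expression has $-\M_{i,(2)}\bm{\xi}_{ik}$.
\end{itemize}
No global sign reconciles these, and the printed equivalence is in fact false. Take $T=1$, $n=p=2$, $d=1$, $\M_{1}=(1,1)^{\top}$, $\M_{2}=(1,-1)^{\top}$, $\beta_{21}=0$ and $\beta_{11}>0$ large. Then:
\begin{itemize}
\item $\sum_i\M_{i,(2)}\M_{i,(1)}^{\top}=0$, and \eqref{lem:KKT-condition:condition2} holds.
\item The true inactive-block condition is $2|\bm{\xi}_{11}-\bm{\xi}_{21}|\le\lambda_{Y,1}$.
\item But \eqref{lem:KKT-condition:condition1} reads $2\big(1+\frac{1}{2+\lambda_{Y,2}}\big)|\bm{\xi}_{11}-\bm{\xi}_{21}|\le\lambda_{Y,1}$.
\end{itemize}
Any noise between the two thresholds gives $\mathcal{R}$ without \eqref{lem:KKT-condition:condition1}.

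The paper's proof makes the identical jump in its last display: its own block-$2$ equation, solved correctly, yields the display above. The same diagonal-only noise term reappears in its definition of $V_m$. So you cannot repair this by following the paper. What your argument genuinely proves is the lemma with \eqref{lem:KKT-condition:condition1} replaced by the bound $\|\cdot\|_\infty\le\lambda_{Y,1}$ on the vector displayed above, which is the Jia--Yu form. State it that way rather than asserting a match.

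Separately, delete ``equivalently, partial them out.'' Profiling out the penalized $\bm{\gamma}_k,\bm{\eta}_k$ would change both the design and the noise entering the conditions. The lemma holds only with $\bm{\gamma}_k,\bm{\eta}_k$ fixed at their true values, which is what the paper's proof implicitly does when it substitutes the true model for $(\Y_i)_k$.
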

\underline{Remarks:} The proof of Lemma~\ref{lem:KKT-condition} follows the component-wise KKT arguments in \citep{Jia-Yu:2010:model.consistency.elastic-net}.
\begin{lem} [Component-wise KKT conditions for $\bm{\alpha}_l$]
\label{lem:KKT-condition-alpha}
For any given $ \lambda_{M,1},\lambda_{M,2}  > 0$, and noise vector $\bm{\epsilon} \in \R^{p}$, property $\mathcal{R}(\M, \bm{\alpha}, \bm{\epsilon}, \lambda_{M,1}, \lambda_{M,2})$ holds if and only if:
\em{  \begin{align}
    \Big\|  2  \sum_{i=1}^{n}  \Big[   \X_{i,(2)} \X_{i,(1)}^{\top}  \Big( \sum_{l = 1}^{n}  \X_{l , (1)} \X_{l , (1)}^{\top}  +  \lambda_{M,2} \mathbf{I} \Big)^{-1}
    &\Big( \X_{i,(1)} \epsilon_{i \ell}  -  \dfrac{\lambda_{M,1}}{2} \textrm{sign} \big( \bm{\alpha}_{\ell,(1)} \big)  -  \lambda_{M,2} \bm{\alpha}_{\ell,(1)} \Big)
    \nonumber
    \\
    &+  \X_{i,(2)}   \epsilon_{i \ell}   \Big]   \Big\|_{\infty}
    \leq  \lambda_{M,1},
    \label{lem:KKT-condition-alpha:condition1}
\end{align} }
and
{  \begin{align}
    \textrm{sign} \Bigg(  \Big(  \sum_{l = 1}^{n} \X_{l,(1)}   \X_{l , (1)}^{\top}   +  \lambda_{M,2} \mathbf{I}      \Big)^{-1}
    &\Big[   \sum_{i=1}^{n} \Big( \X_{i, (1)} \X_{i, (1)}^{\top}  \bm{\alpha}_{\ell , (1)}  +  \X_{i,(1)}   \epsilon_{i \ell} \Big)
    \nonumber
    \\
    & \qquad  -   \dfrac{\lambda_{M,1}}{2} \textrm{sign} \big( \bm{\alpha}_{\ell,(1)} \big)  \Big]    \Bigg)  =  \textrm{sign} \big( \bm{\alpha}_{\ell,(1)} \big) ,
\label{lem:KKT-condition-alpha:condition2}
\end{align}
for all $1 \leq  \ell  \leq  p$. }
\end{lem}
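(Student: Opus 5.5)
The plan mirrors the proof of Lemma~\ref{lem:KKT-condition}, with $(\M,\bm{\beta},\bm{\xi},\lambda_{Y,\cdot})$ replaced by $(\X,\bm{\alpha},\bm{\epsilon},\lambda_{M,\cdot})$. By \eqref{equa:elastic.net:loss.function:alpha_k} the objective in $\bm{\alpha}$ separates across $\ell=1,\dots,p$. Hence $\mathcal{R}(\X,\bm{\alpha},\bm{\epsilon},\lambda_{M,1},\lambda_{M,2})$ holds if and only if, for each $\ell$, the elastic-net problem defining $\widehat{\bm{\alpha}}_\ell$ in \eqref{equa:elastic.net:estimator:beta_k:def} has a minimizer with $\widehat{\bm{\alpha}}_\ell =_s \bm{\alpha}_\ell$. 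So I fix $\ell$ and work with a single response. As in the $\bm{\beta}$ case, I treat the nuisance term $\bm{\zeta}_\ell^\top\Z_i$ as known, or equivalently absorbed by partialling out. Substituting the model \eqref{equa:MMAMA.model:selected.genes:equa3} then gives the residual $(\M_i)_\ell - \bm{\alpha}_\ell^{\top}\X_i - \bm{\zeta}_\ell^\top \Z_i = \X_i^\top(\bm{\alpha}_\ell^{\ast}-\bm{\alpha}_\ell)+\epsilon_{i\ell}$, where $\bm{\alpha}^\ast_\ell$ denotes the truth.

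\textbf{Step 1 (KKT).} The objective is strictly convex, since $\lambda_{M,2}>0$. Therefore $\widehat{\bm{\alpha}}_\ell$ is the unique minimizer if and only if $0$ lies in its subdifferential:
\[
-2\sum_i \X_i\big(\X_i^\top(\bm{\alpha}^\ast_\ell-\widehat{\bm{\alpha}}_\ell)+\epsilon_{i\ell}\big)+2\lambda_{M,2}\widehat{\bm{\alpha}}_\ell+\lambda_{M,1}\mathbf{s}=0,
\]
where $s_j=\textrm{sign}(\widehat\alpha_{j\ell})$ if $\widehat\alpha_{j\ell}\neq 0$ and $s_j\in[-1,1]$ otherwise. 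Sign recovery means $\widehat{\bm{\alpha}}_{\ell,(2)}=0$ and $\textrm{sign}(\widehat{\bm{\alpha}}_{\ell,(1)})=\textrm{sign}(\bm{\alpha}_{\ell,(1)})$.

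\textbf{Step 2 (solve on the support).} Assume sign recovery. The first $\widetilde d$ KKT rows then read
\[
\Big(\sum_i \X_{i,(1)}\X_{i,(1)}^\top+\lambda_{M,2}\mathbf I\Big)\widehat{\bm{\alpha}}_{\ell,(1)}=\sum_i\big(\X_{i,(1)}\X_{i,(1)}^\top\bm{\alpha}_{\ell,(1)}+\X_{i,(1)}\epsilon_{i\ell}\big)-\tfrac{\lambda_{M,1}}{2}\textrm{sign}(\bm{\alpha}_{\ell,(1)}).
\]
The matrix on the left is invertible, so this yields an explicit formula for $\widehat{\bm{\alpha}}_{\ell,(1)}$. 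Requiring its sign to equal $\textrm{sign}(\bm{\alpha}_{\ell,(1)})$ is exactly \eqref{lem:KKT-condition-alpha:condition2}.

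\textbf{Step 3 (off-support rows).} Plug this formula into the last $q-\widetilde d$ KKT rows, with $\widehat{\bm{\alpha}}_{\ell,(2)}=0$. The condition $|s_j|\le 1$ becomes a bound on $\big\|2\sum_i \X_{i,(2)}(\X_{i,(1)}^\top(\bm{\alpha}_{\ell,(1)}-\widehat{\bm{\alpha}}_{\ell,(1)})+\epsilon_{i\ell})\big\|_\infty$ by $\lambda_{M,1}$. To simplify, write $\bm{\alpha}_{\ell,(1)}-\widehat{\bm{\alpha}}_{\ell,(1)}$ using the identity
\[
\bm{\alpha}_{\ell,(1)} = \Big(\sum\X_{(1)}\X_{(1)}^\top+\lambda_{M,2}\mathbf I\Big)^{-1}\Big(\sum\X_{(1)}\X_{(1)}^\top+\lambda_{M,2}\mathbf I\Big)\bm{\alpha}_{\ell,(1)}.
\]
This gives $\bm{\alpha}_{\ell,(1)}-\widehat{\bm{\alpha}}_{\ell,(1)}=-(\cdots)^{-1}\big(\sum_i\X_{i,(1)}\epsilon_{i\ell}-\tfrac{\lambda_{M,1}}2\textrm{sign}(\bm{\alpha}_{\ell,(1)})-\lambda_{M,2}\bm{\alpha}_{\ell,(1)}\big)$. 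Substituting recovers the expression in \eqref{lem:KKT-condition-alpha:condition1}, up to the overall sign, which the $\|\cdot\|_\infty$ absorbs. This establishes necessity.

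\textbf{Step 4 (sufficiency).} Conversely, suppose both conditions hold. Define a candidate by $\widehat{\bm{\alpha}}_{\ell,(2)}=0$ and $\widehat{\bm{\alpha}}_{\ell,(1)}$ given by the formula of Step 2. Condition \eqref{lem:KKT-condition-alpha:condition2} makes the support subgradient equal the true signs. Condition \eqref{lem:KKT-condition-alpha:condition1} supplies valid $s_j\in[-1,1]$ off the support. Hence the KKT conditions hold, and by convexity the candidate is the minimizer. Since the $p$ problems are separate, the result holds for all $\ell$ simultaneously.

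The main obstacle is Step 3: the algebra needed to rewrite the off-support subgradient in exactly the displayed form, with the $\lambda_{M,2}\bm{\alpha}_{\ell,(1)}$ term. I would handle it with the resolvent identity shown above.
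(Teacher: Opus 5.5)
Your route is the same as the paper's: it proves this lemma by rerunning the argument for Lemma~\ref{lem:KKT-condition} (subgradient condition, solve on the support, substitute into the off-support rows). Your Steps 1--2 are fine. The gap is the final claim of Step 3. Write $\mathbf{A}:=\sum_{l=1}^{n}\X_{l,(1)}\X_{l,(1)}^{\top}+\lambda_{M,2}\mathbf{I}$. Your own substitution then gives
\begin{align*}
\lambda_{M,1}\mathbf{s}_{(2)}
&=2\sum_{i=1}^{n}\X_{i,(2)}\epsilon_{i\ell}\\
&\quad-2\Big(\sum_{i=1}^{n}\X_{i,(2)}\X_{i,(1)}^{\top}\Big)\mathbf{A}^{-1}\Big(\sum_{i'=1}^{n}\X_{i',(1)}\epsilon_{i'\ell}-\frac{\lambda_{M,1}}{2}\textrm{sign}\big(\bm{\alpha}_{\ell,(1)}\big)-\lambda_{M,2}\bm{\alpha}_{\ell,(1)}\Big).
\end{align*}
This is not the expression in \eqref{lem:KKT-condition-alpha:condition1} up to an overall sign, for two reasons. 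First, after flipping the sign so the first terms agree, you are left with $-\X_{i,(2)}\epsilon_{i\ell}$ where the display has $+\X_{i,(2)}\epsilon_{i\ell}$. A relative sign between two summands is not absorbed by $\|\cdot\|_\infty$. Second, the noise inside $\mathbf{A}^{-1}(\cdot)$ is the aggregate $\sum_{i'}\X_{i',(1)}\epsilon_{i'\ell}$. The display instead pairs $\X_{i,(2)}\X_{i,(1)}^{\top}$ only with $\X_{i,(1)}\epsilon_{i\ell}$ of the same $i$, which discards all cross terms $i\neq i'$.

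This matters: the displayed condition is not equivalent to sign recovery. So neither your Step 3 (necessity) nor your Step 4 (sufficiency, where \eqref{lem:KKT-condition-alpha:condition1} is supposed to supply $|s_j|\le 1$) can go through for it. Take $n=1$, $q=2$, $\widetilde{d}=1$, $\X_1=(1,1)^{\top}$, $\lambda_{M,1}=\lambda_{M,2}=1$, $\bm{\alpha}_{\ell}=(0.1,0)^{\top}$, $\epsilon_{1\ell}=0.45$, and no $\Z$ contribution.
\begin{itemize}
\item The support formula gives $0.025>0$, so \eqref{lem:KKT-condition-alpha:condition2} holds.
\item The left side of \eqref{lem:KKT-condition-alpha:condition1} equals $0.75\le 1$, so the displayed condition also holds.
\item But the objective is symmetric in the two coordinates and strictly convex, so its unique minimizer is $(1/60,1/60)$ and $\mathcal{R}$ fails. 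Equivalently, your expression above evaluates to $1.05>1$.
\end{itemize}
What your argument actually establishes is the lemma with \eqref{lem:KKT-condition-alpha:condition1} replaced by the condition that the $\|\cdot\|_\infty$ of the right-hand side above is at most $\lambda_{M,1}$. That is what you should state and prove.

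The paper's proof of Lemma~\ref{lem:KKT-condition}, to which this lemma defers, ends with the same unjustified rewriting. The defect therefore originates in the statement, not in your method.

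A minor point: drop ``or equivalently absorbed by partialling out.'' Residualizing on $\Z$ would change the Gram matrices, and $\bm{\zeta}$ is penalized in the joint objective. The conditions as written correspond to holding $\bm{\zeta}_\ell$ at its true value, as the paper's definition of $\widehat{\bm{\alpha}}_\ell$ does.
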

\underline{Remarks:} The proof of Lemma~\ref{lem:KKT-condition-alpha} is analogous to Lemma~\ref{lem:KKT-condition}.
\bigskip

Now, let $\overrightarrow{\mathbf{b}} := \textrm{sign} \big( \bm{\beta}_{k,(1)} \big)$, $\overrightarrow{\mathbf{b}}_{\bm{\alpha}} := \textrm{sign} \big( \bm{\alpha}_{\ell,(1)} \big)$ and, for $j \in \{1, ..., d\}$, denote by $\mathbf{e}_{j} \in \R^{d \times 1}$ the vector with $1$ in the $j^{th}$ position and zeroes elsewhere. Let $D =  \left\{ 1 , ..., d \right\}$ and $D^{c}$ be  the subset and its complement in $\{1, ..., p\}$, respectively. For each $j \in D$ and $m \in D^{c}$, let us define:
\begin{align*}
	U_{j} &:=  \sum_{i = 1}^{n} \mathbf{e}_{j}^{\top} \Big( \sum_{\ell = 1}^{n} \M_{\ell , (1)} \M_{\ell , (1)}^{\top}  +   \lambda_{Y,2} \mathbf{I} \Big)^{-1} \Big[ \M_{i,(1)} (\bm{\xi}_{i})_{k}  -  \dfrac{\lambda_{Y,1}}{2} \overrightarrow{\mathbf{b}}  \Big]   ;
	\\
	V_{m} &:=   2 \sum_{i = 1}^{n} (\M_{i})_{m} \left\{ \M_{i,(1)}^{\top}  \Big( \sum_{\ell = 1}^{n}  \M_{\ell,(1)} \M_{\ell,(1)}^{\top}  + \lambda_{Y,2} \mathbf{I} \Big)^{-1} \Big( \dfrac{\lambda_{Y,1}}{2} \overrightarrow{\mathbf{b}}  +  \lambda_{Y,2} \bm{\beta}_{k , (1)}   \Big)  \right.
	\\
	& \left. - \Big[ \M_{i,(1)}^{\top}  \Big( \sum_{\ell = 1}^{n}  \M_{\ell,(1)} \M_{\ell,(1)}^{\top}  + \lambda_{Y,2} \mathbf{I} \Big)^{-1} \M_{i,(1)}  - 1 \Big]   (\bm{\xi}_{i})_{k}   \right\} .
\end{align*}
These random variables will play a crucial role in our analysis.
In particular, condition~\eqref{lem:KKT-condition:condition1} holds if and only if the event:
\begin{align*}
	\mathcal{M}(V)  :=  \left\{ \max_{m \in D^{c}} |V_{m}| \leq  \lambda_{Y,1}  \right\}
\end{align*}
holds. On the other hand, if we define:   \[
\rho :=  \underset{1 \leq j \leq d}{ \min } \, \Big| \mathbf{e}_{j}^{\top} \Big[ \sum_{i = 1}^{n}   \Big( \dfrac{1}{n} \sum_{\ell = 1}^{n}  \M_{\ell,(1)}  \M_{\ell,(1)}^{\top}  +  \dfrac{\lambda_{Y,2}}{n} \mathbf{I} \Big)^{-1}   \M_{i,(1)}   \M_{i,(1)}^{\top}   \bm{\beta}_{k , (1)}    \Big]  \Big| ,
\]
then the event:
\begin{align}
	\mathcal{M}(U) :=  \left\{ \max_{j \in  D} |U_{j}| \leq   \rho    \right\}
\end{align}
is sufficient to guarantee that condition~\eqref{lem:KKT-condition:condition2} holds,  if $\lambda_{Y,2}$ is chosen such that:
\begin{align}
	\textrm{sign} \Bigg(  \sum_{i=1}^{n} \Big(   \sum_{\ell = 1}^{n}  \M_{\ell,(1)} \M_{\ell,(1)}^{\top}  + \lambda_{Y,2} \mathbf{I} \Big)^{-1}  \M_{i,(1)}   \M_{i,(1)}^{\top}   \bm{\beta}_{k,(1)}   \Bigg)  =  \textrm{sign}\big( \bm{\beta}_{k,(1)} \big).
	\label{lem:KKT-condition:condition2:equivalent.condition:choose.lambdaY2}
\end{align}
Throughout this paper, we constrain $\lambda_{Y,2}$ such that~\eqref{lem:KKT-condition:condition2:equivalent.condition:choose.lambdaY2} holds.

\bigskip

On the model selection consistency of the estimator, we define the Elastic Irrepresentable Condition (EIC) in the framework of high-dimensional MMM mediation analysis as follows.
Let $\mathbf{C} \equiv \mathbf{C}(n) =  \sum_{i=1}^{n}  \dfrac{ \M_{i}  \M_{i}^{\top} }{n}$ and $\widetilde{\mathbf{C}} \equiv \widetilde{\mathbf{C}}(n) =  \sum_{i=1}^{n}  \dfrac{ \X_{i}  \X_{i}^{\top} }{n}$. We can rewrite $\mathbf{C}$ and $\widetilde{\mathbf{C}}$ in the block-wise form:
\begin{align*}
	\mathbf{C} = \left( \begin{array}{cc}
		\mathbf{C}_{11}  &  \mathbf{C}_{12}
		\\
		\mathbf{C}_{21}  &  \mathbf{C}_{22}
	\end{array}
	\right)
    \quad \textrm{ and } \quad
    \widetilde{\mathbf{C}} = \left( \begin{array}{cc}
		\widetilde{\mathbf{C}}_{11}  &  \widetilde{\mathbf{C}}_{12}
		\\
		\widetilde{\mathbf{C}}_{21}  &  \widetilde{\mathbf{C}}_{22}
	\end{array}
	\right),
\end{align*}
where
\begin{itemize}
\item $\mathbf{C}_{11} =  \sum_{i=1}^{n}  \dfrac{\M_{i,(1)} \M_{i,(1)}^{\top} }{n}$, $\mathbf{C}_{12} =  \sum_{i=1}^{n}  \dfrac{\M_{i,(1)} \M_{i,(2)}^{\top} }{n}$, $\mathbf{C}_{21} =  \sum_{i=1}^{n}  \dfrac{\M_{i,(2)} \M_{i,(1)}^{\top} }{n}$ and $\mathbf{C}_{22} =  \sum_{i=1}^{n}  \dfrac{\M_{i,(2)} \M_{i,(2)}^{\top} }{n}$;
\item  $\widetilde{\mathbf{C}}_{11} =  \sum_{i=1}^{n}  \dfrac{\X_{i,(1)} \X_{i,(1)}^{\top} }{n}$, $\widetilde{\mathbf{C}}_{12} =  \sum_{i=1}^{n}  \dfrac{\X_{i,(1)} \X_{i,(2)}^{\top} }{n}$, $\widetilde{\mathbf{C}}_{21} =  \sum_{i=1}^{n}  \dfrac{\X_{i,(2)} \X_{i,(1)}^{\top} }{n}$ and $\widetilde{\mathbf{C}}_{22} =  \sum_{i=1}^{n}  \dfrac{\X_{i,(2)} \X_{i,(2)}^{\top} }{n}$ .
\end{itemize}
\begin{definition}[Elastic Irrepresentable Condition (EIC)]  
\em{
There exists a  constant $0 < \Psi < 1$ (which does not change with $n$), with:
\begin{align}
    \left\|  \mathbf{C}_{21} \Big(  \mathbf{C}_{11} + \dfrac{\lambda_{Y,2}}{n}\mathbf{I} \Big)^{-1}  \Big( \textrm{sign}\big( \bm{\beta}_{k , (1)} \big)  +  \dfrac{2 \lambda_{Y,2}}{\lambda_{Y,1}} \bm{\beta}_{k , (1)}  \Big) \right\|_{\infty}  &\leq  1 - \Psi;
\nonumber
\\
\left\|  \widetilde{\mathbf{C}}_{21} \Big(  \widetilde{\mathbf{C}}_{11} + \dfrac{\lambda_{M,2}}{n}\mathbf{I} \Big)^{-1}  \Big( \textrm{sign}\big( \bm{\alpha}_{\ell , (1)} \big)  +  \dfrac{2 \lambda_{M,2}}{\lambda_{M,1}} \bm{\alpha}_{\ell , (1)}  \Big) \right\|_{\infty}  &\leq  1 - \Psi.
\label{def:EIC:beta_k:inequa}
\end{align}
} \end{definition}
\underline{Remarks:} The Elastic Irrepresentable Condition (EIC) is a direct extension of the Irrepresentable Condition in~\citep{Jia-Yu:2010:model.consistency.elastic-net} to the high-dimensional MMM mediation framework.
Theorem~\ref{thm:consistency.elastic.net.estimator:beta_k} below shows that, under mild conditions on the relationship between the scalings of $p,q$ and $n$, the EIC is sufficient for the property of $\mathcal{R}(\M, \bm{\beta}, \bm{\xi}, \lambda_{Y,1}, \lambda_{Y,2})$ to hold with probability tending to $1$ as $n \rightarrow \infty$. It extends the results of~\citep{Jia-Yu:2010:model.consistency.elastic-net} to the high-dimensional MMM mediation setting.
\begin{thm}[Sign consistency of $\bm{\beta}$]
	\label{thm:consistency.elastic.net.estimator:beta_k}
	For $1 \leq i \leq n$, suppose that $\ddot{\mathbf{y}}_{i} =  \bm{\beta}^{\top} \M_{i} + \bm{\gamma}^{\top} \X_{i}  + \bm{\eta}^{\top} \Z_{i}  + \bm{\xi}_{i}$, where the vector $\M_{i}$ is normalized to $l_{2}$-norm $\sqrt{n}$ and $\bm{\xi}_{i} {\sim} \mathcal{N}(\bm{0} ; \mathbf{I})$. Assume the  EIC (defined in~\eqref{def:EIC:beta_k:inequa})  holds. Consider $d > 1$ and $p - 1 > 1$. If $\rho = \underset{1 \leq j \leq d}{ \min } \, \Big| \mathbf{e}_{j}^{\top} \Big[   \Big( \mathbf{C}_{11} + \frac{\lambda_{Y,2}}{n} \mathbf{I} \Big)^{-1} \Big(\mathbf{C}_{11} \bm{\beta}_{k , (1)} \Big)  \Big]  \Big| $, $C_{\min} = \Lambda_{\min}(\mathbf{C}_{11}) + \dfrac{\lambda_{Y,2}}{n}$ where  $\Lambda_{\min}(\mathbf{A})$ denotes the minimal eigenvalue of matrix $\mathbf{A}$, and $\lambda_{Y,1}, \lambda_{Y,2}$ are chosen such that:
	\begin{enumerate}
		\item[(a)] $\displaystyle   
		\dfrac{ \sqrt{ \log(p-d) } }{ \lambda_{Y,1}  \sqrt{n} }  \overset{n  \rightarrow \infty}{\longrightarrow}  0 $  ,
		\item[(b)] $\displaystyle   \dfrac{1}{\rho} \left\{ 8 \sqrt{ \dfrac{\log(d) }{ n C_{\min} } } +  \dfrac{\lambda_{Y,1}}{2 n }  \left\| \Big( \mathbf{C}_{11} + \dfrac{\lambda_{Y,2}}{n} \mathbf{I} \Big)^{-1} \,  \overrightarrow{\mathbf{b}}   \right\|_{\infty}  \right\}      \overset{n   \rightarrow   \infty}{\longrightarrow}  0 $,
		or equivalently,
		\[
		\dfrac{1}{\rho}  \Bigg\{ 8 \sqrt{ \dfrac{ \log(d)   }{ n C_{\min} } }  + \dfrac{\lambda_{Y,1}}{2 n }  \left\|  \big( \dfrac{1}{n} \sum_{\ell = 1}^{n} \M_{\ell , (1)} \M_{\ell , (1)}^{\top} +  \dfrac{\lambda_{Y,2}}{n} \mathbf{I} \big)^{-1}    \overrightarrow{\mathbf{b}}   \right\|_{\infty}
		\Bigg\}  \overset{n   \rightarrow   \infty}{\longrightarrow}  0  ,
		\]
	\end{enumerate}
	then $\lim_{n   \rightarrow   +\infty} \mathbb{P} \Big( \mathcal{R} \big( \M, \bm{\beta}, \bm{\xi}, \lambda_{Y,1}, \lambda_{Y,2} \big) \Big) = 1$.
\end{thm}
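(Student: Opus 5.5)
The plan is to follow the strategy of \citep{Jia-Yu:2010:model.consistency.elastic-net}, applied column by column. By Lemma~\ref{lem:KKT-condition}, property $\mathcal{R}(\M,\bm{\beta},\bm{\xi},\lambda_{Y,1},\lambda_{Y,2})$ holds as soon as, for every $1\le k\le T$, conditions \eqref{lem:KKT-condition:condition1} and \eqref{lem:KKT-condition:condition2} hold. Condition \eqref{lem:KKT-condition:condition1} is exactly the event $\mathcal{M}(V)$. Given the constraint \eqref{lem:KKT-condition:condition2:equivalent.condition:choose.lambdaY2} on $\lambda_{Y,2}$, condition \eqref{lem:KKT-condition:condition2} is implied by $\mathcal{M}(U)$. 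It therefore suffices to show
\begin{align*}
\mathbb{P}\big(\mathcal{M}(U)^c\big)\to 0 \quad\text{and}\quad \mathbb{P}\big(\mathcal{M}(V)^c\big)\to 0
\end{align*}
for each $k$. Since $T$ is fixed, a union bound over $k$ then gives $\mathbb{P}(\mathcal{R})\to1$.

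\textbf{Controlling $U$.} Write $U_j = \mathbf{e}_j^\top (n\mathbf{C}_{11}+\lambda_{Y,2}\mathbf{I})^{-1}\sum_i \M_{i,(1)}\xi_{ik} - \frac{\lambda_{Y,1}}{2n}\mathbf{e}_j^\top(\mathbf{C}_{11}+\frac{\lambda_{Y,2}}{n}\mathbf{I})^{-1}\overrightarrow{\mathbf{b}}$. The second (deterministic) term is bounded in sup-norm by the second summand in (b).

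The first term $z_j$ is Gaussian with mean zero and variance
\begin{align*}
\mathbf{e}_j^\top (n\mathbf{C}_{11}+\lambda_{Y,2}\mathbf{I})^{-1} n\mathbf{C}_{11}(n\mathbf{C}_{11}+\lambda_{Y,2}\mathbf{I})^{-1}\mathbf{e}_j \le \frac{1}{nC_{\min}}.
\end{align*}
The standard Gaussian maximal inequality, $\mathbb{E}\max_{j\le d}|z_j|\le 3\sqrt{\log d}\,\sigma_{\max}$, together with Markov's inequality, gives
\begin{align*}
\mathbb{P}\Big(\max_j|U_j|>\rho\Big)\le \frac{1}{\rho}\Big(\mathbb{E}\max_j|z_j| + \tfrac{\lambda_{Y,1}}{2n}\big\|(\mathbf{C}_{11}+\tfrac{\lambda_{Y,2}}{n}\mathbf{I})^{-1}\overrightarrow{\mathbf{b}}\big\|_\infty\Big).
\end{align*}
The right-hand side is dominated by the quantity in (b) (the constant $8$ absorbs the maximal-inequality constant, using $d>1$), so it tends to $0$.

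\textbf{Controlling $V$.} Decompose $V_m=\mu_m+\tilde z_m$. The mean is
\begin{align*}
\mu_m = 2n\big[\mathbf{C}_{21}(n\mathbf{C}_{11}+\lambda_{Y,2}\mathbf{I})^{-1}(\tfrac{\lambda_{Y,1}}{2}\overrightarrow{\mathbf{b}}+\lambda_{Y,2}\bm{\beta}_{k,(1)})\big]_m = \lambda_{Y,1}\big[\mathbf{C}_{21}(\mathbf{C}_{11}+\tfrac{\lambda_{Y,2}}{n}\mathbf{I})^{-1}(\overrightarrow{\mathbf{b}}+\tfrac{2\lambda_{Y,2}}{\lambda_{Y,1}}\bm{\beta}_{k,(1)})\big]_m,
\end{align*}
so the EIC gives $|\mu_m|\le(1-\Psi)\lambda_{Y,1}$.

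The noise part is $\tilde z_m = 2\sum_i (\M_i)_m\,[1-H_{ii}]\,\xi_{ik}$ in operator form, i.e.\ $2\,\M_{\cdot m}^\top(\mathbf{I}-\mathbf{H})\bm{\xi}_{\cdot k}$ with the ridge hat matrix $\mathbf{H}=\mathbf{M}_{(1)}(\mathbf{M}_{(1)}^\top\mathbf{M}_{(1)}+\lambda_{Y,2}\mathbf{I})^{-1}\mathbf{M}_{(1)}^\top$. I will read the off-diagonal structure of the displayed definition in this matrix sense. The eigenvalues of $\mathbf{H}$ lie in $[0,1)$, so $\|\mathbf{I}-\mathbf{H}\|_{op}\le1$. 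Using the normalization $\|\M_{\cdot m}\|_2^2=n$, the variance of $\tilde z_m$ is at most $4n$. A Gaussian maximal bound over $m\in D^c$ then yields
\begin{align*}
\mathbb{P}\Big(\max_m|\tilde z_m|>\Psi\lambda_{Y,1}\Big)\le \frac{C\sqrt{n\log(p-d)}}{\Psi\lambda_{Y,1}}\to0
\end{align*}
by (a), using $p-d>1$ so that $\log(p-d)>0$. Hence $\max_m|V_m|\le\lambda_{Y,1}$ with probability tending to one.

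\textbf{Main obstacle.} I expect the main difficulty to be bookkeeping in the $V$ step: checking that the paper's displayed $V_m$ really equals $\M_{\cdot m}^\top(\mathbf{I}-\mathbf{H})\bm{\xi}_{\cdot k}$ plus the EIC mean term, so that the variance bound $\le 4n$ holds, and confirming that the mean term matches the EIC expression exactly, including the $2\lambda_{Y,2}/\lambda_{Y,1}$ factor. I would first verify this by substituting $\hat{\bm{\beta}}_{k,(1)}$ from the restricted KKT equation. A secondary point is that the nuisance parameters $\bm{\gamma},\bm{\eta}$ are treated as known or profiled out; I would argue, as the paper implicitly does, that minimization over $\bm{\beta}_k$ is done with them fixed at their true values, so the residual is exactly $\M_i^\top\bm{\beta}_k+\xi_{ik}$.
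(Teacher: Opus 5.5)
Your reduction to $\mathcal{M}(U)$ and $\mathcal{M}(V)$, the $U$-analysis, and the identification of $\mu_m$ with the EIC expression all follow the paper's proof. Your matrix form $2\,\M_{\cdot m}^{\top}(\mathbf{I}-\mathbf{H})\bm{\xi}_{\cdot k}$ of the noise term is also the correct reading of the KKT condition; the paper works with a diagonal-only $\widetilde{V}_m$. The gap is the last step of the $V$-analysis. You obtain $\mathbb{P}(\max_m|\tilde z_m|>\Psi\lambda_{Y,1})\le C\sqrt{n\log(p-d)}/(\Psi\lambda_{Y,1})$ and conclude from (a) that this vanishes. But (a) only says $\sqrt{\log(p-d)}/(\lambda_{Y,1}\sqrt{n})\to 0$, and your quantity is $n$ times that one. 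Hypothesis (a) allows, for instance, $\lambda_{Y,1}$ bounded, whereas your bound needs $\lambda_{Y,1}\gg\sqrt{n\log(p-d)}$.

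This is not the bookkeeping issue you anticipated, and sharper estimates cannot fix it. The paper reaches (a) only through the bound $\mathbb{E}(\widetilde{V}_m^2)\le\frac{8}{n}[\,\cdots]$, i.e.\ a variance of order $1/n$. With $\sum_i(\M_i)_m^2=n$ the true variance is of order $n$, as your own computation shows; it is exactly $4n$ when $\M_{\cdot m}$ is orthogonal to the columns of $\mathbf{M}_{(1)}$.

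In fact the statement fails with (a) as written. Take four mutually orthogonal columns of norm $\sqrt{n}$, so $\mathbf{C}_{11}=\mathbf{I}$ and $\mathbf{C}_{21}=\mathbf{0}$, with $d=2$, $p=4$, fixed $\bm{\beta}_{k,(1)}$, $\lambda_{Y,2}=1$ and $\lambda_{Y,1}=n^{1/4}$. The EIC, the sign constraint on $\lambda_{Y,2}$, (a) and (b) all hold. Yet the estimator is coordinatewise soft-thresholding, and $\widehat{\beta}_{mk}=0$ for $m\in D^{c}$ iff $|2\M_{\cdot m}^{\top}\bm{\xi}_{\cdot k}|\le\lambda_{Y,1}$, where $2\M_{\cdot m}^{\top}\bm{\xi}_{\cdot k}\sim\mathcal{N}(0,4n)$. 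This event has probability tending to $0$.

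So your argument proves the theorem only under the stronger hypothesis $\sqrt{n\log(p-d)}/\lambda_{Y,1}\to 0$, which is the scaling appropriate to the unnormalized objective. You should state that hypothesis explicitly rather than invoke (a).
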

\begin{thm}[Sign consistency of $\bm{\alpha}$]
\label{thm:consistency.elastic.net.estimator:alpha_l}
For $1 \leq i \leq n$, suppose that $\ddot{\mathbf{m}}_{i} =  \bm{\alpha}^{\top} \X_{i} +   \bm{\zeta}^{\top}\Z_{i} + \bm{\epsilon}_{i}$, where the vector $\X_{i}$ is normalized to $l_{2}$-norm $\sqrt{n}$ and $\bm{\epsilon}_{i} {\sim} \mathcal{N}(\bm{0} ; \mathbf{I})$. Assume the  EIC (defined  in~\eqref{def:EIC:beta_k:inequa})  holds. Consider $\widetilde{d} > 1$ and $q - 1 > 1$. If $\rho_{\bm{\alpha}} = \underset{1 \leq j \leq \widetilde{d}}{ \min } \, \Big| \mathbf{e}_{j}^{\top} \Big[   \Big( \widetilde{\mathbf{C}}_{11} + \frac{\lambda_{M,2}}{n} \mathbf{I} \Big)^{-1} \Big(\widetilde{\mathbf{C}}_{11} \bm{\alpha}_{\ell , (1)} \Big)  \Big]  \Big| $, $\widetilde{C}_{\min} = \Lambda_{\min}(\widetilde{\mathbf{C}}_{11}) + \dfrac{\lambda_{M,2}}{n}$ where $\lambda_{M,1}, \lambda_{M,2}$ are chosen such that:
\begin{enumerate}
    \item[(a)] $\displaystyle
    \dfrac{ \sqrt{ \log(q-\widetilde{d}) } }{ \lambda_{M,1}  \sqrt{n} }  \overset{n  \rightarrow \infty}{\longrightarrow}  0 $  ,
    \item[(b)] $\displaystyle   \dfrac{1}{\rho_{\alpha}} \left\{ 8 \sqrt{ \dfrac{\log(\widetilde{d})}{ n \widetilde{C}_{\min} } } +  \dfrac{\lambda_{M,1}}{2 n }  \left\| \Big( \widetilde{\mathbf{C}}_{11} + \dfrac{\lambda_{M,2}}{n} \mathbf{I} \Big)^{-1} \,  \overrightarrow{\mathbf{b}}_{\bm{\alpha}}  \right\|_{\infty}  \right\}      \overset{n   \rightarrow   \infty}{\longrightarrow}  0 $,
    or equivalently,
    \[
    \dfrac{1}{\rho_{\alpha}}  \Bigg\{ 8 \sqrt{ \dfrac{ \log(\widetilde{d})}{ n \widetilde{C}_{\min} } }  + \dfrac{\lambda_{M,1}}{2 n }  \left\|  \big( \dfrac{1}{n} \sum_{l = 1}^{n} \X_{l , (1)} \X_{l , (1)}^{\top} +  \dfrac{\lambda_{M,2}}{n} \mathbf{I} \big)^{-1}    \overrightarrow{\mathbf{b}_{\bm{\alpha}}}   \right\|_{\infty}
    \Bigg\}  \overset{n   \rightarrow   \infty}{\longrightarrow}  0  ,
    \]
\end{enumerate}
then $\lim_{n   \rightarrow   +\infty} \mathbb{P} \Big( \mathcal{R} \big( \X, \bm{\alpha}, \bm{\epsilon}, \lambda_{M,1}, \lambda_{M,2} \big) \Big) = 1$.
\end{thm}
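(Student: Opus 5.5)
The argument transfers the proof of Theorem~\ref{thm:consistency.elastic.net.estimator:beta_k} (itself following \citep{Jia-Yu:2010:model.consistency.elastic-net}) to the first equation, with $(\X,\bm{\alpha},\bm{\epsilon},\lambda_{M,1},\lambda_{M,2},\widetilde{d},q)$ in place of $(\M,\bm{\beta},\bm{\xi},\lambda_{Y,1},\lambda_{Y,2},d,p)$. By the column-wise decomposition \eqref{equa:elastic.net:loss.function:alpha_k}, the $\bm{\alpha}$-problem splits into $p$ separate elastic-net problems, one for each $\bm{\alpha}_\ell$. By Lemma~\ref{lem:KKT-condition-alpha}, property $\mathcal{R}(\X,\bm{\alpha},\bm{\epsilon},\lambda_{M,1},\lambda_{M,2})$ holds if and only if, for every $\ell$, both \eqref{lem:KKT-condition-alpha:condition1} and \eqref{lem:KKT-condition-alpha:condition2} hold. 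As for $\bm{\beta}$, the covariate part $\bm{\zeta}^\top\Z_i$ is taken as profiled out or absorbed, so only the noise $\epsilon_{i\ell}$ enters.

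Fix $\ell$ and define $\widetilde U_j$ for $j\le\widetilde d$ and $\widetilde V_m$ for $m>\widetilde d$ exactly as $U_j,V_m$, with $\widetilde{\mathbf{C}}$ in place of $\mathbf{C}$. Then \eqref{lem:KKT-condition-alpha:condition1} is equivalent to $\mathcal{M}(\widetilde V)=\{\max_m|\widetilde V_m|\le\lambda_{M,1}\}$. Under the sign-preserving choice of $\lambda_{M,2}$ (the analogue of \eqref{lem:KKT-condition:condition2:equivalent.condition:choose.lambdaY2}), the event $\mathcal{M}(\widetilde U)=\{\max_j|\widetilde U_j|\le\rho_{\bm\alpha}\}$ suffices for \eqref{lem:KKT-condition-alpha:condition2}. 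It therefore suffices to show $\mathbb{P}(\mathcal{M}(\widetilde U)^c)\to0$ and $\mathbb{P}(\mathcal{M}(\widetilde V)^c)\to0$, and then take a union over $\ell$. Since $p$ may grow, I would either treat $p$ as fixed or absorb $\log p$ into the rates, noting this explicitly.

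For $\widetilde U$: write $\widetilde U_j=\mathbf e_j^\top(n\widetilde{\mathbf C}_{11}+\lambda_{M,2}\mathbf I)^{-1}\sum_i\X_{i,(1)}\epsilon_{i\ell}-\frac{\lambda_{M,1}}{2n}\mathbf e_j^\top(\widetilde{\mathbf C}_{11}+\frac{\lambda_{M,2}}n\mathbf I)^{-1}\overrightarrow{\mathbf b}_{\bm\alpha}$. The first term is Gaussian with variance at most $\mathbf e_j^\top(n\widetilde{\mathbf C}_{11}+\lambda_{M,2}\mathbf I)^{-1}n\widetilde{\mathbf C}_{11}(n\widetilde{\mathbf C}_{11}+\lambda_{M,2}\mathbf I)^{-1}\mathbf e_j\le 1/(n\widetilde C_{\min})$. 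A Gaussian maximal inequality over $\widetilde d$ coordinates gives $\max_j$ of this term at most $8\sqrt{\log\widetilde d/(n\widetilde C_{\min})}$ with probability tending to one. The second term is deterministic and bounded by $\frac{\lambda_{M,1}}{2n}\|(\cdot)^{-1}\overrightarrow{\mathbf b}_{\bm\alpha}\|_\infty$. Condition (b) then makes the sum $o(\rho_{\bm\alpha})$.

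For $\widetilde V$: split $\widetilde V_m$ into a deterministic part and a noise part. The deterministic part is $2n\,[\widetilde{\mathbf C}_{21}(\widetilde{\mathbf C}_{11}+\frac{\lambda_{M,2}}n\mathbf I)^{-1}(\frac{\lambda_{M,1}}2\overrightarrow{\mathbf b}_{\bm\alpha}+\lambda_{M,2}\bm\alpha_{\ell,(1)})]_m/n$, which by the EIC is at most $(1-\Psi)\lambda_{M,1}$. The noise part is $2\sum_i(\X_i)_m[\mathbf H_{ii}-\ldots]\epsilon_{i\ell}$, essentially $2\mathbf x_m^\top(\mathbf I-\mathbf P)\bm\epsilon_\ell$ with $\mathbf P$ a ridge hat matrix of norm at most one. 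Its variance is at most $4\|\mathbf x_m\|^2=4n$ by the $\sqrt n$ column normalization. A union bound over the $q-\widetilde d$ coordinates then gives $\mathbb P(\max_m|\text{noise}_m|>\Psi\lambda_{M,1})\le 2(q-\widetilde d)\exp(-\Psi^2\lambda_{M,1}^2/(8n))$, which tends to zero by condition (a).

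The main obstacle is the $\widetilde V$ step. The variance bound needs $\|\mathbf I-\mathbf P\|\le1$, which holds because the ridge hat matrix $\mathbf X_{(1)}(\mathbf X_{(1)}^\top\mathbf X_{(1)}+\lambda_{M,2}\mathbf I)^{-1}\mathbf X_{(1)}^\top$ has eigenvalues in $[0,1)$. There is also a scaling subtlety. As the paper writes condition (a), the ratio $\sqrt{\log(q-\widetilde d)}/(\lambda_{M,1}\sqrt n)\to0$ must be reconciled with the noise term, which is of order $\sqrt{n\log q}$. I would check whether the normalization of $\lambda_{M,1}$ is effectively per-sample, i.e.\ whether the loss is really divided by $n$. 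If it is not, I would state the union bound in the form $\sqrt{n\log(q-\widetilde d)}/\lambda_{M,1}\to0$, matching \citep{Jia-Yu:2010:model.consistency.elastic-net}.
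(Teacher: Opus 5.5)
Your outline is the paper's own proof. The paper proves this theorem only by pointing to the $\bm\beta$ case, and that proof is exactly your reduction via Lemma~\ref{lem:KKT-condition-alpha} to $\mathcal M(\widetilde U)$ and $\mathcal M(\widetilde V)$: the EIC bounds the mean of $\widetilde V_m$ by $(1-\Psi)\lambda_{M,1}$, and the maximal inequality \eqref{lem:concentration.inequality:Ledoux.Talagrand.1991:inequa2} plus Markov's inequality controls the noise.

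Your $\widetilde U$ step matches the paper's, with one imprecision. The maximal inequality bounds the \emph{expected} maximum of the noise part, so ``with probability tending to one'' comes from Markov's inequality and the slack in (b), not from the inequality directly. For $\widetilde V$, your tail-plus-union bound can replace the paper's Markov argument once the variance is right. You also correctly use the full ridge hat matrix $\mathbf P$. The paper's $V_m$ keeps only its diagonal ($1-H_{ii}$), because of a mis-indexed sum in the KKT lemma.

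Your scaling caveat is correct and is the real issue, but it is a defect of the statement, not of your argument. For the estimator defined in \eqref{equa:elastic.net:estimator:beta_k:def} (unnormalized squared loss), your step ``tends to zero by condition (a)'' fails, and nothing can rescue it: the theorem as written is false. Take $q=4$, $\widetilde d=2$, $\bm\zeta=\mathbf 0$, $\mathbf X^\top\mathbf X=n\mathbf I$, $\bm\alpha_{\ell,(1)}=(1,1)^\top$ and $\lambda_{M,1}=\lambda_{M,2}=1$.
\begin{itemize}
\item The EIC holds trivially since $\widetilde{\mathbf C}_{21}=\mathbf 0$, and (a), (b) and the sign condition on $\lambda_{M,2}$ all hold.
\item Yet the problem decouples, and $\widehat\alpha_{m\ell}=0$ for $m\in\{3,4\}$ iff $|2\mathbf x_m^\top\bm\epsilon_\ell|\le 1$.
\item Since $2\mathbf x_m^\top\bm\epsilon_\ell\sim\mathcal N(0,4n)$, this event has probability $O(n^{-1/2})$.
\end{itemize}
This uses column normalization, as you do. Under the literal row normalization, $\mathbf X=\sqrt{n/4}\,\mathbf H$, with $\mathbf H$ stacked copies of a $4\times 4$ Hadamard matrix, fails the same way.

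The paper arrives at (a) only through an error in the $\bm\beta$ proof it transfers. That proof bounds the second moment of the noise part of $V_m$ by $\frac{8}{n}[\cdots]$, which replaces $4(\M_i)_m^2$ by $4/n^2$. This would require $|(\M_i)_m|\le 1/n$, which is incompatible with columns of norm $\sqrt n$; the true order is $n$, as you found.

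Reading the loss as per-sample does not rescue the statement either. It makes (a) correct, but (b) and the EIC contain $\lambda_{M,1}/(2n)$ and $\lambda_{M,2}/n$ and would then be in the wrong scale. Your fallback is the correct repair: replace (a) by $\sqrt{n\log(q-\widetilde d)}/\lambda_{M,1}\to 0$. This is compatible with (b), which essentially requires $\lambda_{M,1}=o(n\rho_{\bm\alpha})$, whenever $\log q=o(n)$ and $\rho_{\bm\alpha}$ is bounded below. Under that condition your proof is complete.

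Your remark about the union over $\ell\le p$ is also needed, and the paper does not address it. Markov-type control of $\widetilde U$ only gives a per-column failure probability of the order of the quantity in (b), which is not summable over a growing $p$. You would need Gaussian tail bounds for $\widetilde U$ as well, which replaces $\log\widetilde d$ and $\log(q-\widetilde d)$ by $\log(p\widetilde d)$ and $\log(p(q-\widetilde d))$. You would also need $\rho_{\bm\alpha}$ and the EIC to hold uniformly in $\ell$. Otherwise keep $p$ fixed.
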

\noindent
Note that proofs of Theorem~\ref{thm:consistency.elastic.net.estimator:beta_k} and Theorem~\ref{thm:consistency.elastic.net.estimator:alpha_l} are similar.
A proof of Theorem~\ref{thm:consistency.elastic.net.estimator:beta_k} is given in the Supplementary materials. 
\begin{proposition}
\label{thm:consistency.elastic.net.estimator:alpha.beta}
For any $1 \leq  k  \leq  T$, with probability tending to $1$, $(\widehat{\bm{\alpha}}  \widehat{\bm{\beta}})_{k}$ converges to $(\bm{\alpha} \bm{\beta})_{k}$.
\end{proposition}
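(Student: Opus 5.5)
The plan is to reduce the statement to consistency of the two factors separately and then pass to the product. Write $(\widehat{\bm{\alpha}}\widehat{\bm{\beta}})_{k} = \widehat{\bm{\alpha}}\widehat{\bm{\beta}}_{k}$, the $k^{th}$ column of the indirect-effect matrix. Use the algebraic decomposition
\begin{align*}
\widehat{\bm{\alpha}}\widehat{\bm{\beta}}_{k} - \bm{\alpha}\bm{\beta}_{k}
= (\widehat{\bm{\alpha}} - \bm{\alpha})\bm{\beta}_{k} + \bm{\alpha}(\widehat{\bm{\beta}}_{k} - \bm{\beta}_{k}) + (\widehat{\bm{\alpha}} - \bm{\alpha})(\widehat{\bm{\beta}}_{k} - \bm{\beta}_{k}).
\end{align*}
It then suffices to show $\widehat{\bm{\alpha}}_{\ell} \to \bm{\alpha}_{\ell}$ in probability for each $\ell$ and $\widehat{\bm{\beta}}_{k} \to \bm{\beta}_{k}$ in probability. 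With $p$, $q$, $T$ treated as fixed along the sequence in which this proposition is read (entrywise), each of the three terms then tends to zero by the continuous mapping theorem, or by Slutsky for the cross term.

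First I would invoke Theorem~\ref{thm:consistency.elastic.net.estimator:beta_k} and Theorem~\ref{thm:consistency.elastic.net.estimator:alpha_l}. Intersecting the two events, with probability tending to one both $\widehat{\bm{\beta}}_{k} =_{s} \bm{\beta}_{k}$ and $\widehat{\bm{\alpha}}_{\ell} =_{s} \bm{\alpha}_{\ell}$ for all $\ell$. On this event the inactive blocks vanish exactly, $\widehat{\bm{\beta}}_{k,(2)} = 0$ and $\widehat{\bm{\alpha}}_{\ell,(2)} = 0$. The active blocks then have the closed forms given by the KKT stationarity equations (Lemma~\ref{lem:KKT-condition} and Lemma~\ref{lem:KKT-condition-alpha}):
\begin{align*}
\widehat{\bm{\beta}}_{k,(1)} - \bm{\beta}_{k,(1)} = \Big(\mathbf{C}_{11} + \tfrac{\lambda_{Y,2}}{n}\mathbf{I}\Big)^{-1}\Big[\tfrac{1}{n}\sum_{i}\M_{i,(1)}\bm{\xi}_{ik} - \tfrac{\lambda_{Y,1}}{2n}\overrightarrow{\mathbf{b}} - \tfrac{\lambda_{Y,2}}{n}\bm{\beta}_{k,(1)}\Big].
\end{align*}
The $\bm{\alpha}$ block satisfies the analogous identity.

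Next I would bound each piece in $\ell_\infty$.
\begin{itemize}
\item The noise piece is Gaussian with covariance at most $(nC_{\min})^{-1}$. A maximal inequality over $d$ coordinates makes it $O_P(\sqrt{\log d/(nC_{\min})})$, which is exactly the first term controlled in condition (b).
\item The $\lambda_{Y,1}$ piece is the second term in (b).
\item The ridge bias $\big(\mathbf{C}_{11}+\tfrac{\lambda_{Y,2}}{n}\mathbf{I}\big)^{-1}\tfrac{\lambda_{Y,2}}{n}\bm{\beta}_{k,(1)}$ equals $\bm{\beta}_{k,(1)}$ minus the vector whose smallest entry in modulus defines $\rho$. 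It must vanish, which requires $\lambda_{Y,2}/n \to 0$ relative to $C_{\min}$.
\end{itemize}
Condition (b) shows the first two pieces are $o(\rho)$. Since $\rho$ is bounded above by roughly $\|\bm{\beta}_{k,(1)}\|_\infty$, they are $o_P(1)$. The $\bm{\alpha}$ block is handled in the same way.

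The main obstacle is this last step. The paper's conditions are stated relative to $\rho$ rather than as absolute rates, so for true consistency (not merely sign recovery) I need the ridge bias to vanish. I would add, or read into the standing tuning constraints, that $\lambda_{Y,2}/n \to 0$ and $\lambda_{M,2}/n \to 0$, with $C_{\min}$ and $\widetilde{C}_{\min}$ bounded away from zero. Under that assumption the decomposition finishes the proof. If $d$ or $\widetilde{d}$ grows, the middle term would require $\|\bm{\alpha}\|$ to be controlled in an operator norm, which I would handle entrywise by a Cauchy--Schwarz bound over the $d$ active mediators.
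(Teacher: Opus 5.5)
Your proposal is correct and follows the same logic as the paper. Both establish $\widehat{\bm{\alpha}}_{\ell}\to\bm{\alpha}_{\ell}$ and $\widehat{\bm{\beta}}_{k}\to\bm{\beta}_{k}$ in probability and pass to the product by continuity, but the paper only asserts this factor-wise consistency as ``already shown'', after a detour through the reduced-form coefficient $\bm{\psi}=\bm{\alpha}\bm{\beta}+\bm{\gamma}$ that its final step does not use. You derive it from the sign-recovery event and the KKT closed form, which is the same ridge-bias / $\lambda_{Y,1}$-shift / noise split the paper uses to prove Theorem~\ref{thm:asymptotic.normality.elastic.net.estimator:beta_k}.

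The conditions you flag as the main obstacle are $\lambda_{Y,2}/n\to0$, $\lambda_{M,2}/n\to0$, and the minimal eigenvalues of $\mathbf{C}_{11}$ and $\widetilde{\mathbf{C}}_{11}$ bounded away from zero. These are exactly the paper's standing assumptions (A5) and (A1), so your argument needs nothing beyond the paper's own hypotheses.
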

\noindent
A proof of Proposition~\ref{thm:consistency.elastic.net.estimator:alpha.beta} is given the Supplementary materials.
\subsection{Asymptotic studies}
\label{subsec:asymptotic.normality}
For $1 \leq  k  \leq T$ and $1 \leq \ell \leq q$, to establish the asymptotic normality of the estimators   $\widehat{\bm{\alpha}}_{\ell}, \widehat{\bm{\zeta}}_{\ell}, \widehat{\bm{\beta}}_{k}, \widehat{\bm{\gamma}}_{k}$, and $ \widehat{\bm{\eta}}_{k}$, we assume the following regularity conditions, which are standard in high-dimensional penalized regression (see~\cite[Section 3]{Zou-Zhang:2009}).
\begin{enumerate}
	\item[(A1)] There exist two positive constants $\delta$ and $\Delta$ such that for all $n \in \N^{*}$,
	\begin{align*}
		\delta  &\leq  \Lambda_{\min} \Big( \frac{1}{n} \sum_{l = 1}^{n} \M_{l,(1)}  \M_{l,(1)}^{\top} \Big)
		\leq  \Lambda_{\max} \Big( \frac{1}{n}  \sum_{l = 1}^{n} \M_{l,(1)}  \M_{l,(1)}^{\top} \Big)  \leq  \Delta ;
		\\
		\delta  &\leq  \Lambda_{\min} \Big( \frac{1}{n} \sum_{l = 1}^{n} \X_{l,(1)}  \X_{l,(1)}^{\top} \Big)
		\leq  \Lambda_{\max} \Big( \frac{1}{n}  \sum_{l = 1}^{n} \X_{l,(1)}  \X_{l,(1)}^{\top} \Big)  \leq  \Delta ;
	\end{align*}
	\item[(A2)] $\lim_{n   \rightarrow   +\infty} \dfrac{\max_{1 \leq i \leq n} \sum_{k=1}^{p} \big( \M_{i} \big)_{k}^{2} }{ n }  =  0$ and $\lim_{n   \rightarrow   +\infty} \dfrac{\max_{1 \leq i \leq n} \sum_{\ell=1}^{q} \big( \X_{i} \big)_{\ell}^{2} }{ n }  =  0$;
	\item[(A3)] $\max_{1 \leq i \leq n} \E \Big( \left\| \bm{\xi}_{i}  \right\|_{\infty}^{2 + \vartheta}   \Big) < \infty$ and $\max_{1 \leq i \leq n} \E \Big( \left\| \bm{\epsilon}_{i}  \right\|_{\infty}^{2 + \vartheta}   \Big) < \infty$, for some $\vartheta > 0$;
	\item[(A4)] $\lim_{n   \rightarrow   +\infty}  \dfrac{\log(p)}{\log(n)} = \nu_{1}$ and $\lim_{n   \rightarrow   +\infty}  \dfrac{\log(q)}{\log(n)} = \nu_{2}$, for some $0 \leq \nu_{1}, \nu_{2} < 1$;
	\item[(A5)] $\lim_{n   \rightarrow   +\infty}  \dfrac{\lambda_{Y,1}}{ \sqrt{n} } = 0$ and $\lim_{n   \rightarrow   +\infty} \dfrac{\lambda_{Y,2}}{n} = 0$;
	\\
	$\lim_{n   \rightarrow   +\infty}  \dfrac{\lambda_{M,1}}{ \sqrt{n} } = 0$ and $\lim_{n   \rightarrow   +\infty} \dfrac{\lambda_{M,2}}{n} = 0$;
	\item[(A6)] $\lim_{n   \rightarrow   +\infty}  \dfrac{\lambda_{Y,2}}{ \sqrt{n} } \left\| \bm{\beta}_{k , (1)} \right\|_{2} = 0$, $\lim_{n   \rightarrow   +\infty}  \dfrac{\lambda_{Y,2}}{ \sqrt{n} } \left\| \bm{\gamma}_{k , (1)} \right\|_{2} = 0$ and $\lim_{n   \rightarrow   +\infty}  \dfrac{\lambda_{Y,2}}{ \sqrt{n} } \left\| \bm{\eta}_{k , (1)} \right\|_{2} \mbox{= 0}$;
	\\
	$\lim_{n   \rightarrow   +\infty}  \dfrac{\lambda_{M,2}}{ \sqrt{n} } \left\| \bm{\alpha}_{\ell , (1)} \right\|_{2} = 0$ and $\lim_{n   \rightarrow   +\infty}  \dfrac{\lambda_{M,2}}{ \sqrt{n} } \left\| \bm{\zeta}_{\ell , (1)} \right\|_{2} = 0$.
\end{enumerate}
The following Proposition~\ref{prop:upper-bound:elastic.net.beta_{k}} provides an upper bound for the mean squared error (MSE) of the estimator $\widehat{\bm{\beta}}_{k}$.
\begin{proposition} [Error bound for $\bm{\beta}$]
	\label{prop:upper-bound:elastic.net.beta_{k}}
	With the estimator  $\widehat{\bm{\beta}}_{k}$ defined in~\eqref{equa:elastic.net:estimator:beta_k:def} for nonnegative parameters $\lambda_{Y,1}$ and  $\lambda_{Y,2}$, under condition (A1):
	\begin{align*}
		\E \Big( \big\| \widehat{\bm{\beta}}_{k} - \bm{\beta}_{k}   \big\|_{2}^{2}   \Big)
		& \leq  \dfrac{  4  \lambda_{Y,2}^{2}   \big\| \bm{\beta}_{k}   \big\|_{2}^{2}  +  8 \, np  \,     \big\| \M \big\|_{\infty}^2   +  \lambda_{Y,1}^{2} \,  p
		}{ \big(  \delta \,  n  +   \lambda_{Y,2}  \big)^{2} }.
	\end{align*}
\end{proposition}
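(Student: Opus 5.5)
I would follow the standard elastic-net argument of Zou and Zhang: compare $\widehat{\bm{\beta}}_{k}$ with a ridge-type oracle, then bound the bias and variance terms separately. Write the $k$-th column problem in matrix form. Let $\mathbf{M}\in\R^{n\times p}$ be the design matrix and $\mathbf{r}_k := \mathbf{Y}_{\cdot k} - \mathbf{X}\bm{\gamma}_k - \mathbf{Z}\bm{\eta}_k = \mathbf{M}\bm{\beta}_k + \bm{\xi}_{\cdot k}$, treating $\bm{\gamma}_k,\bm{\eta}_k$ as fixed at their minimizing values, as in \eqref{equa:elastic.net:estimator:beta_k:def}. The subgradient (KKT) condition for \eqref{equa:elastic.net:estimator:beta_k:def} gives
\[
(\mathbf{M}^{\top}\mathbf{M} + \lambda_{Y,2}\mathbf{I})\widehat{\bm{\beta}}_k = \mathbf{M}^{\top}\mathbf{r}_k - \tfrac{\lambda_{Y,1}}{2}\mathbf{s},
\]
with $\mathbf{s}\in\partial\|\widehat{\bm{\beta}}_k\|_1$, so $\|\mathbf{s}\|_\infty\le 1$ and hence $\|\mathbf{s}\|_2^2\le p$. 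Subtracting $(\mathbf{M}^{\top}\mathbf{M}+\lambda_{Y,2}\mathbf{I})\bm{\beta}_k$ from both sides yields the error representation
\[
\widehat{\bm{\beta}}_k-\bm{\beta}_k = (\mathbf{M}^{\top}\mathbf{M}+\lambda_{Y,2}\mathbf{I})^{-1}\Big(-\lambda_{Y,2}\bm{\beta}_k + \mathbf{M}^{\top}\bm{\xi}_{\cdot k} - \tfrac{\lambda_{Y,1}}{2}\mathbf{s}\Big).
\]

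The next step is to bound the operator norm of the inverse by $(\delta n+\lambda_{Y,2})^{-1}$, using (A1) for $\Lambda_{\min}(\mathbf{M}^{\top}\mathbf{M}/n)\ge\delta$. Here lies the main obstacle: (A1) is stated only for the active block $\mathbf{C}_{11}$, not for the full Gram matrix. I would handle this in one of two ways. The first option is to interpret (A1) as applying to the full design in this proposition, in the spirit of Zou and Zhang's assumption on the Gram matrix. The second is to work on the event of Theorem~\ref{thm:consistency.elastic.net.estimator:beta_k}, where $\widehat{\bm{\beta}}_{k,(2)}=0$, so that only the $d\times d$ block is inverted. I would state the first explicitly as the assumption being used.

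Then I would apply $\|a+b+c\|_2^2\le 4\|a\|^2+2\|b\|^2+4\|c\|^2$, or a comparable weighted split, to match the constants. This gives
\[
\|\widehat{\bm{\beta}}_k-\bm{\beta}_k\|_2^2 \le \frac{4\lambda_{Y,2}^2\|\bm{\beta}_k\|_2^2 + 8\|\mathbf{M}^{\top}\bm{\xi}_{\cdot k}\|_2^2 + \lambda_{Y,1}^2\|\mathbf{s}\|_2^2}{(\delta n+\lambda_{Y,2})^2}.
\]
The weights should be chosen so the $\lambda_{Y,1}$ term has coefficient $1$ after the factor $1/4$ from $(\lambda_{Y,1}/2)^2$.

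Finally, I would take expectations. The term $\lambda_{Y,1}^2\|\mathbf{s}\|_2^2\le \lambda_{Y,1}^2p$ holds deterministically. For the noise term, since $\bm{\xi}_{\cdot k}$ has independent standard normal entries,
\[
\E\|\mathbf{M}^{\top}\bm{\xi}_{\cdot k}\|_2^2 = \mathrm{tr}(\mathbf{M}^{\top}\mathbf{M}) = \sum_{j\le p}\sum_{i\le n} M_{ij}^2 \le np\|\mathbf{M}\|_\infty^2.
\]
This yields the $8np\|\mathbf{M}\|_\infty^2$ term. Combining the three terms gives the stated bound, with no further use of (A2)–(A6).
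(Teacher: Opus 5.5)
Your argument is correct and proves the stated bound. It bounds the lasso part differently from the paper, and that is what makes the constants work. The paper never writes the subgradient equation. It introduces the ridge estimator $\widehat{\bm{\beta}}_k(0;\lambda_{Y,2})$ and compares objective values. From this it gets $\|\widehat{\bm{\beta}}_k-\widehat{\bm{\beta}}_k(0;\lambda_{Y,2})\|_2\le \lambda_{Y,1}\sqrt{p}/(\delta n+\lambda_{Y,2})$. It then bounds ridge bias and variance separately and combines the pieces with $(x+y)^2\le 2x^2+2y^2$.

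Your KKT identity gives the same decomposition in closed form. Your first two terms are exactly $\widehat{\bm{\beta}}_k(0;\lambda_{Y,2})-\bm{\beta}_k$. The third is $\widehat{\bm{\beta}}_k-\widehat{\bm{\beta}}_k(0;\lambda_{Y,2})=-\tfrac{\lambda_{Y,1}}{2}(\mathbf{M}^{\top}\mathbf{M}+\lambda_{Y,2}\mathbf{I})^{-1}\mathbf{s}$, with norm at most $\lambda_{Y,1}\sqrt{p}/\bigl(2(\delta n+\lambda_{Y,2})\bigr)$. That is half of the paper's perturbation bound.

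This factor matters. Squaring and doubling the paper's bound gives $2\lambda_{Y,1}^2p$ in the numerator. The paper's last display records only $\lambda_{Y,1}^2p$, so it drops a factor $2$. Even after tightening its noise term, that route cannot reach coefficient $1$ once $\lambda_{Y,1}^2>4n\|\mathbf{M}\|_\infty^2$. Your weighted split $(4,2,4)$ is valid because $\tfrac14+\tfrac12+\tfrac14=1$. It yields exactly the coefficients $4$ on $\lambda_{Y,2}^2\|\bm{\beta}_k\|_2^2$ and $1$ on $\lambda_{Y,1}^2p$. The paper's route does have one advantage: it uses only objective values, with no subdifferential calculus, and it is reused for the companion lemma on $\check{\bm{\beta}}_{k,(1)}$.

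Three small remarks:
\begin{itemize}
\item With weights $(4,2,4)$ the noise term has coefficient $2$, not the $8$ you wrote. So you actually obtain $2np\|\mathbf{M}\|_\infty^2$; writing $8$ is a harmless weakening.
\item The identity $\mathbf{r}_k=\mathbf{M}\bm{\beta}_k+\bm{\xi}_{\cdot k}$ requires $\bm{\gamma}_k,\bm{\eta}_k$ in \eqref{equa:elastic.net:estimator:beta_k:def} to be the true parameters. That is how the paper treats them; they are not estimated ``minimizing values''. With estimates, an extra term $\mathbf{X}(\bm{\gamma}_k-\widehat{\bm{\gamma}}_k)+\mathbf{Z}(\bm{\eta}_k-\widehat{\bm{\eta}}_k)$, correlated with the noise, would enter, and your variance computation would no longer apply.
\item Reading (A1) as $\Lambda_{\min}(\mathbf{M}^{\top}\mathbf{M})\ge\delta n$ for the full design is exactly what the paper uses without comment in its last step. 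Making that assumption explicit is the right call.
\end{itemize}
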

A proof of Proposition~\ref{prop:upper-bound:elastic.net.beta_{k}}  is given in the Supplementary materials.  

\underline{Remark:} This preliminary error bound plays a key role in deriving the asymptotic normality of $\widehat{\bm{\alpha}}_\ell$ and $\widehat{\bm{\beta}}_k$ in Theorem~\ref{thm:asymptotic.normality.elastic.net.estimator:beta_k}, as it provides control over the mean squared error of the penalized estimators.

Then, adapting arguments from~\cite{Zou-Zhang:2009}, we have the following.

\medskip

\begin{thm}[Asymptotic normality of $\bm{\alpha}$ and $\bm{\beta}$]\label{thm:asymptotic.normality.elastic.net.estimator:beta_k}
	Under conditions \textrm{(A1)-(A6)}, for any $1 \leq k \leq T$ and $1 \leq \ell \leq p$, we have:
	\begin{align*}
		\sqrt{n}  \Big[   \mathbf{v}^{\top}  \Big( \mathbf{I} + \lambda_{Y,2} \big(   \sum_{\ell = 1}^{n}   \M_{\ell , (1)} \M_{\ell , (1)}^{\top} \big)^{-1}  \Big)  \Big(  \sum_{\ell = 1}^{n}    \M_{\ell , (1)} \M_{\ell , (1)}^{\top} \Big)^{1/2} \Big( \widehat{\bm{\beta}}_{k,(1)} - \bm{\beta}_{k,(1)} \Big)   \Big]
		\overset{ d }{\longrightarrow} \mathcal{N} \big(0 ; 1 \big),
	\end{align*}
    \textrm{ and } \quad
    \begin{align*}
        \sqrt{n}  \Big[   \mathbf{v}^{\top}  \Big(\mathbf{I} + \lambda_{M,2} \big(   \sum_{l = 1}^{n}   \X_{l , (1)} \X_{l , (1)}^{\top} \big)^{-1}  \Big)  \Big(  \sum_{l = 1}^{n}    \X_{l , (1)} \X_{l , (1)}^{\top} \Big)^{1/2} \Big( \widehat{\bm{\alpha}}_{\ell,(1)} - \bm{\alpha}_{\ell,(1)} \Big)   \Big]
		\overset{ d }{\longrightarrow} \mathcal{N} \big( 0;1\big),
	\end{align*}
	where $\mathbf{v}$ is a vector of norm $1$.
\end{thm}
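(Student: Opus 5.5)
The plan is to follow \cite{Zou-Zhang:2009}: reduce $\widehat{\bm{\beta}}_{k,(1)}$ to an explicit linear function of the noise on a high-probability event, then split it into a Gaussian main term and a vanishing bias. I treat $\widehat{\bm{\beta}}_k$; the statement for $\widehat{\bm{\alpha}}_\ell$ is identical with $(\M,\bm{\xi},\lambda_{Y,\cdot})$ replaced by $(\X,\bm{\epsilon},\lambda_{M,\cdot})$, using Theorem~\ref{thm:consistency.elastic.net.estimator:alpha_l} instead of Theorem~\ref{thm:consistency.elastic.net.estimator:beta_k}. As in the definition~\eqref{equa:elastic.net:estimator:beta_k:def}, $\bm{\gamma}_k,\bm{\eta}_k$ are held at their true values, so the working response is $(\Y_i)_k-\bm{\gamma}_k^{\top}\X_i-\bm{\eta}_k^{\top}\Z_i=\bm{\beta}_k^{\top}\M_i+\xi_{ik}$. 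If one insists on joint estimation, I would first partial out $(\X,\Z)$ by appending them to the active design; (A1) and (A6) are stated for exactly this purpose.

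\textbf{Step 1: localise to the sign-recovery event.} Let $\mathcal{E}_n$ be the event that $\mathcal{R}(\M,\bm{\beta},\bm{\xi},\lambda_{Y,1},\lambda_{Y,2})$ holds. By Theorem~\ref{thm:consistency.elastic.net.estimator:beta_k}, $\mathbb{P}(\mathcal{E}_n)\to 1$. Convergence in distribution is unaffected by modifying a random variable on a set of vanishing probability, so it suffices to work on $\mathcal{E}_n$. There $\widehat{\bm{\beta}}_{k,(2)}=0$ and $\textrm{sign}(\widehat{\bm{\beta}}_{k,(1)})=\overrightarrow{\mathbf{b}}$. The stationarity part of the KKT system (Lemma~\ref{lem:KKT-condition}) then gives the closed form
\begin{align*}
(\mathbf{S}+\lambda_{Y,2}\mathbf{I})(\widehat{\bm{\beta}}_{k,(1)}-\bm{\beta}_{k,(1)})=\sum_{i=1}^n\M_{i,(1)}\xi_{ik}-\lambda_{Y,2}\bm{\beta}_{k,(1)}-\tfrac{\lambda_{Y,1}}{2}\overrightarrow{\mathbf{b}},
\end{align*}
where $\mathbf{S}=\sum_{\ell}\M_{\ell,(1)}\M_{\ell,(1)}^{\top}$. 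Since $\mathbf{S}$ commutes with $\mathbf{I}+\lambda_{Y,2}\mathbf{S}^{-1}$, we have $\mathbf{S}+\lambda_{Y,2}\mathbf{I}=\mathbf{S}^{1/2}(\mathbf{I}+\lambda_{Y,2}\mathbf{S}^{-1})\mathbf{S}^{1/2}$. Multiplying by $\mathbf{S}^{-1/2}$ therefore yields
\begin{align*}
\mathbf{v}^{\top}(\mathbf{I}+\lambda_{Y,2}\mathbf{S}^{-1})\mathbf{S}^{1/2}(\widehat{\bm{\beta}}_{k,(1)}-\bm{\beta}_{k,(1)})
=\underbrace{\mathbf{v}^{\top}\mathbf{S}^{-1/2}\textstyle\sum_i\M_{i,(1)}\xi_{ik}}_{T_1}
-\underbrace{\mathbf{v}^{\top}\mathbf{S}^{-1/2}\big(\lambda_{Y,2}\bm{\beta}_{k,(1)}+\tfrac{\lambda_{Y,1}}{2}\overrightarrow{\mathbf{b}}\big)}_{T_2}.
\end{align*}

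\textbf{Step 2: the main term.} $T_1$ is a weighted sum of independent noises with weights $w_i=\mathbf{v}^{\top}\mathbf{S}^{-1/2}\M_{i,(1)}$, and $\sum_i w_i^2=\mathbf{v}^{\top}\mathbf{v}=1$.
\begin{itemize}
\item Under the Gaussian model, $T_1\sim\mathcal{N}(0,1)$ exactly.
\item Under only the moment condition (A3), I would verify Lyapunov's condition. By (A1), $\max_i w_i^2\le \max_i\|\M_{i,(1)}\|_2^2/(n\delta)\to0$ using (A2). Hence $\sum_i|w_i|^{2+\vartheta}\E|\xi_{ik}|^{2+\vartheta}\le C\max_i|w_i|^{\vartheta}\to 0$.
\end{itemize}

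\textbf{Step 3: the bias term, and the expected obstacle.} By (A1), $\|\mathbf{S}^{-1/2}\|_{op}\le (n\delta)^{-1/2}$, so
\begin{align*}
|T_2|\le \frac{\lambda_{Y,2}\|\bm{\beta}_{k,(1)}\|_2}{\sqrt{n\delta}}+\frac{\lambda_{Y,1}\sqrt d}{2\sqrt{n\delta}}.
\end{align*}
The first piece vanishes by (A6). The second vanishes by (A5) when $d$ is bounded; if $d$ grows, I would need $\lambda_{Y,1}\sqrt{d/n}\to0$, and I would try to extract this from (A4) together with condition (b) of Theorem~\ref{thm:consistency.elastic.net.estimator:beta_k}. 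Slutsky's lemma then finishes the argument. Proposition~\ref{prop:upper-bound:elastic.net.beta_{k}} is a fallback if one prefers to control the remainder $\widehat{\bm{\beta}}_{k,(1)}-\bm{\beta}_{k,(1)}$ in $L^2$ rather than through the exact KKT identity.

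The main obstacle is this bias control when $d$ grows. A second, related difficulty is the normalisation in the statement: with the unnormalised Gram matrix $\mathbf{S}$ the pivot $T_1-T_2$ already has unit variance. So the prefactor $\sqrt n$ must be read as paired with $\mathbf{S}/n$ (that is, $\mathbf{C}_{11}$ and $\lambda_{Y,2}/n$). I would state the result in that equivalent normalised form to make the constants consistent.
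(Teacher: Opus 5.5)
Your decomposition is the paper's. $T_1$ is its $J_3$, and on the sign event $-T_2$ equals its $J_1+J_2$, with $J_1$ the ridge part and $J_2$ the $\ell_1$ part. You obtain $J_2$ exactly from the KKT identity. The paper instead bounds it by comparing $\check{\bm{\beta}}_{k,(1)}$ with the ridge oracle $\check{\bm{\beta}}_{k,(1)}(0,\lambda_{Y,2})$, which needs only support recovery.

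Your normalisation remark is correct, and it repairs the paper's argument. Since $\sum_i R_i^2=1$, we have $\Var(J_3)=1$, so the paper's ``equivalently, $\sqrt{n}J_3\to\mathcal{N}(0,1)$'' is false. Likewise (A6) gives $J_1=o(1)$, not $\sqrt{n}J_1=o(1)$. With $\sqrt{n}$ paired with $\mathbf{C}_{11}$ and $\lambda_{Y,2}/n$, your Steps 2--3 are the right computations. Using (A2) for Lyapunov also avoids the paper's tacit assumption $d\|\M\|_\infty^2=O(1)$.

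\textbf{The genuine gap is Step 1, which the paper shares.} Theorem~\ref{thm:consistency.elastic.net.estimator:beta_k} needs the EIC and its conditions (a)--(b), none of which is among (A1)--(A6). Worse, that theorem is misstated.
\begin{itemize}
\item With mediator columns of norm $\sqrt{n}$, the noise part of $V_m$ has variance of order $n$, not $O(1/n)$ as its proof claims. So sign recovery needs $\lambda_{Y,1}/\sqrt{n}\to\infty$; the correct condition (a) is $\sqrt{n\log(p-d)}/\lambda_{Y,1}\to0$. This contradicts (A5).
\item Concretely, take $\mathbf{C}=\mathbf{I}$. An inactive coordinate $m$ of $\widehat{\bm{\beta}}_k$ vanishes iff $|\sum_i(\M_i)_m\xi_{ik}|\le\lambda_{Y,1}/2$, i.e.\ iff a standard normal falls in $[-\lambda_{Y,1}/(2\sqrt{n}),\lambda_{Y,1}/(2\sqrt{n})]$.
\item With fixed $p$, an orthogonal design with bounded entries and $\lambda_{Y,1}=n^{1/4}$, every hypothesis of that theorem and of (A1)--(A6) holds. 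Yet $\mathbb{P}(\mathcal{E}_n)\to0$.
\item Conversely, taking $\lambda_{Y,1}\gg\sqrt{n}$ to validate the localisation makes the $\ell_1$ part of your $T_2$ diverge, since it is of order $\lambda_{Y,1}\sqrt{d/n}$.
\end{itemize}
So Steps 1 and 3 cannot hold together for the non-adaptive elastic net. This is its familiar lack of an oracle property, and it is why the result of \cite{Zou-Zhang:2009} being adapted is stated for the adaptive elastic net, whose weights blow up off the support.

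Nor is this only a defect of the proof. Take fixed $p$, $\mathbf{C}(n)\to\mathbf{C}$ nonsingular and $\mathbf{C}_{21}\neq\mathbf{0}$. Under (A5)--(A6), the Knight--Fu argument gives $\sqrt{n}(\widehat{\bm{\beta}}_k-\bm{\beta}_k)\to\mathcal{N}(\mathbf{0},\mathbf{C}^{-1})$. Your pivot then has limiting variance $\mathbf{v}^\top\mathbf{C}_{11}^{1/2}(\mathbf{C}^{-1})_{11}\mathbf{C}_{11}^{1/2}\mathbf{v}$. This exceeds $1$ for suitable $\mathbf{v}$, because $(\mathbf{C}^{-1})_{11}=(\mathbf{C}_{11}-\mathbf{C}_{12}\mathbf{C}_{22}^{-1}\mathbf{C}_{21})^{-1}\succeq\mathbf{C}_{11}^{-1}$ with strict inequality in some direction. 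The statement itself must change, not just the argument.

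Separately, your rescue for growing $d$ would not work. Condition (b) only controls the $\ell_\infty$ quantity $\frac{\lambda_{Y,1}}{2n}\|(\mathbf{C}_{11}+\frac{\lambda_{Y,2}}{n}\mathbf{I})^{-1}\overrightarrow{\mathbf{b}}\|_\infty$ relative to $\rho$. (A4) does not involve $\lambda_{Y,1}$ at all. You must either keep $d$ bounded or assume $\lambda_{Y,1}\sqrt{d/n}\to0$. The paper's $J_2$ step needs the same rate and does not supply it.
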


\noindent
\underline{Remark:} This result extends the asymptotic normality of the elastic net estimator to high-dimensional MMM mediation estimators under standard regularity conditions. A proof of Theorem~\ref{thm:asymptotic.normality.elastic.net.estimator:beta_k} is provided in the Supplementary materials.   
\subsection{Asymptotic normality of the many--to--many--to--many mediation effect}
To study the mediation effect $(\bm{\alpha} \bm{\beta})$, we first rewrite the model ($\mathfrak{M}$) in~\eqref{equa:MMAMA.model:selected.genes:equa1}-\eqref{equa:MMAMA.model:selected.genes:equa2}, for $1 \leq i \leq n$, as below:
\begin{align}
	\ddot{\mathbf{y}}_{i}  &=  { \bm{\beta}}^{\top} \Big(  {\bm{\alpha}}^{\top}\X_{i}  +   \bm{\zeta}^{\top} \Z_{i}  +  \bm{\epsilon}_{i}  \Big)  +     \bm{\gamma}^{\top} \X_{i}  +     \bm{\eta}^{\top} \Z_{i} +  \bm{\xi}_{i}
	\nonumber
	\\
	&=    \Big( \bm{\beta}^{\top} \bm{\alpha}^{\top}    +  \bm{\gamma}^{\top} \Big) { \X_{i} }
	+    \Big(  \bm{\beta}^{\top} \bm{\zeta}^{\top}     +   \bm{\eta}^{\top}  \Big) \Z_{i}  +  \Big( \bm{\beta}^{\top} \bm{\epsilon}_{i}  +  { \bm{\xi} }_{i}  \Big) ,
\end{align}
with $ \bm{\epsilon}_{i}   {\sim} \mathcal{N}(\bm{0} ; \mathbf{I}) $, $ \bm{\xi}_{i}  {\sim} \mathcal{N}(\bm{0} ; \mathbf{I})$, and $ \bm{\epsilon}_{i} $ and $\bm{\xi}_{i}$ are independent for all $1 \leq i \leq n$. Thus, 
we obtain:
\[
\varkappa_{i} := \bm{\beta}^{\top} \bm{\epsilon}_{i}   +  \bm{\xi}_{i}
{\sim} \mathcal{N}(\bm{0} ;  \bm{\Sigma}_{\varkappa})  ,
\quad
\textrm{ with $\bm{\Sigma}_{\varkappa}= \bm{\beta}^{\top}\bm{\beta}  + \mathbf{I}$. }
\]

The following theorem proves the asymptotic normality for the estimator of the mediation effect $(\bm{\alpha} \bm{\beta})_{k}$.
\begin{thm}[Asymptotic normality of the mediation effect]\label{thm:asymptotic.normality.elastic.net.estimator:alpha.beta_k}
	For any $1 \leq  k  \leq  T$, under conditions \textrm{(A1)-(A6)}, the estimator $(\widehat{\bm{\alpha}}  \widehat{\bm{\beta}})_{k,(1)}$ has the following asymptotic normality:
	\begin{align*}
		\sqrt{n}  \Big[   \mathbf{v}^{\top}  \Big( \mathbf{I} +  \lambda_{Y,2} \big(  \sum_{\ell = 1}^{n}   \X_{\ell , (1)} \X_{\ell , (1)}^{\top} \big)^{-1}\Big) \Big( \sum_{\ell = 1}^{n}    \X_{\ell , (1)} \X_{\ell , (1)}^{\top} \Big)^{1/2}  & \Big( (\widehat{\bm{\alpha}}  \widehat{\bm{\beta}})_{k,(1)} - (\bm{\alpha} \bm{\beta})_{k,(1)} \Big)   \Big]
		\\
		&\overset{ d }{\longrightarrow} \mathcal{N} \big(0 ; \bm{\beta}_k^{\top}  \bm{\beta}_k \big),
	\end{align*}
where $\bm {v}$ is a vector of norm $1$.
\end{thm}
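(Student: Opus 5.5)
I would follow the reduced-form route that the paper sets up just before the statement. Substituting the mediator equation into the outcome equation gives $\ddot{\mathbf{y}}_i = (\bm{\alpha}\bm{\beta}+\bm{\gamma})^{\top}\X_i + (\bm{\zeta}\bm{\beta}+\bm{\eta})^{\top}\Z_i + \varkappa_i$, with $\varkappa_i\sim\mathcal{N}(\bm 0;\bm{\Sigma}_{\varkappa})$. So the $k$-th column $(\bm{\alpha}\bm{\beta})_k = \bm{\alpha}\bm{\beta}_k$ is a regression coefficient on $\X$, and the natural Gram matrix is $\sum_\ell \X_{\ell,(1)}\X_{\ell,(1)}^{\top}$. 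This explains why the normalizing matrix in the statement involves $\X$ and not $\M$. The estimator itself is the product $\widehat{\bm{\alpha}}\widehat{\bm{\beta}}_k$, so I would work with the decomposition
\[
\widehat{\bm{\alpha}}\widehat{\bm{\beta}}_k - \bm{\alpha}\bm{\beta}_k = (\widehat{\bm{\alpha}}-\bm{\alpha})\bm{\beta}_k + \bm{\alpha}(\widehat{\bm{\beta}}_k-\bm{\beta}_k) + (\widehat{\bm{\alpha}}-\bm{\alpha})(\widehat{\bm{\beta}}_k-\bm{\beta}_k),
\]
restricted to the rows indexed by the active set $(1)$ of $\X$.

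\textbf{Step 1 (reduce to the support).} By Theorems~\ref{thm:consistency.elastic.net.estimator:beta_k} and~\ref{thm:consistency.elastic.net.estimator:alpha_l}, the estimates $\widehat{\bm{\alpha}}$ and $\widehat{\bm{\beta}}_k$ recover the sign patterns with probability tending to one. On that event, all coordinates outside the active sets vanish, and both estimators equal their closed-form elastic-net expressions on the supports from Lemmas~\ref{lem:KKT-condition} and~\ref{lem:KKT-condition-alpha}. Since events of vanishing probability do not affect convergence in distribution, I may work on this event throughout.

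\textbf{Step 2 (leading term).} Let $S_X := \sum_\ell \X_{\ell,(1)}\X_{\ell,(1)}^{\top}$. For each column $j$ of $\bm{\alpha}$, the KKT representation gives
\[
(S_X+\lambda_{M,2}\mathbf I)(\widehat{\bm{\alpha}}_{j,(1)}-\bm{\alpha}_{j,(1)}) = \sum_i \X_{i,(1)}\epsilon_{ij} - \tfrac{\lambda_{M,1}}{2}\,\mathrm{sign}(\bm{\alpha}_{j,(1)}) - \lambda_{M,2}\bm{\alpha}_{j,(1)}.
\]
Multiplying by $\beta_{jk}$ and summing over $j$, the stochastic part becomes $\sum_i \X_{i,(1)}\,(\bm{\beta}_k^{\top}\bm{\epsilon}_i)$. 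Its summands are independent with covariance $(\bm{\beta}_k^{\top}\bm{\beta}_k)\,\X_{i,(1)}\X_{i,(1)}^{\top}$, which is where the limiting variance $\bm{\beta}_k^{\top}\bm{\beta}_k$ comes from. Note that $\mathbf I+\lambda S_X^{-1} = S_X^{-1}(S_X+\lambda\mathbf I)$. I would then check that the normalized statistic in the theorem, after applying the factor $(\mathbf I+\lambda S_X^{-1})S_X^{1/2}$ as in Theorem~\ref{thm:asymptotic.normality.elastic.net.estimator:beta_k}, reduces to $\mathbf{v}^{\top}S_X^{-1/2}\sum_i \X_{i,(1)}\bm{\beta}_k^{\top}\bm{\epsilon}_i$, up to the $\sqrt n$ scaling convention used there. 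A Lindeberg check then follows exactly as in that theorem, using (A2) for the design and (A3) for the errors, and gives $\mathcal{N}(0;\bm{\beta}_k^{\top}\bm{\beta}_k)$.

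The bias terms $\lambda_{M,1}\mathrm{sign}(\cdot)$ and $\lambda_{M,2}\bm{\alpha}$ are negligible after normalization:
\begin{itemize}
\item by (A1), $S_X^{-1/2}$ has operator norm of order $(\delta n)^{-1/2}$;
\item by (A5), the $\lambda_{M,1}$ contribution is $O(\lambda_{M,1}/\sqrt n)\to 0$;
\item by (A6), the $\lambda_{M,2}\bm{\alpha}$ contribution vanishes.
\end{itemize}
Here the regularization parameter appearing in the statement should be read as $\lambda_{M,2}$, since the stage-one penalty acts on $\X$.

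\textbf{Step 3 (remainder terms; main obstacle).} I must show that the second and third terms of the decomposition, after normalization, are $o_p(1)$.

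For the cross term, Proposition~\ref{prop:upper-bound:elastic.net.beta_{k}} and its analogue for $\bm{\alpha}$ give mean-squared errors of order $O(p/n)$ under (A1). Under (A4), $p,q=n^{\nu}$ with $\nu<1$, so the product of two such errors, multiplied by the $O(\sqrt n)$ normalization, should vanish. This needs the exponents to satisfy roughly $\nu<1/2$, or else the sharper support-restricted rates $d/n$.

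The term $\bm{\alpha}(\widehat{\bm{\beta}}_k-\bm{\beta}_k)$ is the real difficulty. It is of the same order $n^{-1/2}$ as the leading term, driven by $\bm{\xi}$, and it is not expressed in the $\X$-geometry. My plan is to argue that $\widehat{\bm{\beta}}_k$ is asymptotically independent of $\bm{\epsilon}$: the noises $\bm{\epsilon}$ and $\bm{\xi}$ are independent, and conditional on the mediators the fluctuation of $\widehat{\bm{\beta}}_k$ is driven by $\bm{\xi}$ alone. The two contributions are then uncorrelated, so the total variance becomes $\bm{\beta}_k^{\top}\bm{\beta}_k$ plus a term from $\bm{\xi}$.

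Matching the stated variance exactly requires that this extra term vanish. If it does not, I would fall back to one of two options. The first is to interpret the statement conditionally on $\widehat{\bm{\beta}}_k$, replacing $\bm{\beta}_k$ by a plug-in, in which case the second term is absorbed through Slutsky's lemma. The second is to use the reduced-form representation, where $\varkappa_{ik}$ has variance $\bm{\beta}_k^{\top}\bm{\beta}_k+1$, and track how the "$+1$" from $\bm{\xi}$ is attributed to $\bm{\gamma}$. I expect this bookkeeping to be the delicate part of the proof.
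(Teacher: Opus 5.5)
Your product decomposition is the right way to analyse $\widehat{\bm\alpha}\widehat{\bm\beta}_k$, and your worry in Step~3 is justified. But this is a genuine gap that cannot be closed, not a matter of bookkeeping.

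Write $\widetilde D=\{1,\dots,\widetilde d\}$ for the active exposures and $D$ for the support of $\bm\beta_k$. Set $S_X=\sum_\ell\X_{\ell,(1)}\X_{\ell,(1)}^{\top}$ and $S_M=\sum_\ell\M_{\ell,(1)}\M_{\ell,(1)}^{\top}$. On the sign-recovery event your second term is $\bm\alpha_{\widetilde D D}(\widehat{\bm\beta}_{k,D}-\bm\beta_{k,D})$. By the closed form in the proof of Lemma~\ref{lem:KKT-condition}, its stochastic part is $\bm\alpha_{\widetilde D D}(S_M+\lambda_{Y,2}\mathbf I)^{-1}\sum_i\M_{i,(1)}\xi_{ik}$. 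With $\X,\Z$ fixed, $\M$ is a function of $\bm\epsilon$. So conditionally on $\bm\epsilon$ this piece is exactly centred Gaussian, while your leading term $\mathbf v^{\top}S_X^{-1/2}\sum_i\X_{i,(1)}\bm\beta_k^{\top}\bm\epsilon_i$ is $\bm\epsilon$-measurable.

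The two pieces are therefore asymptotically independent, and their variances add rather than cancel. The normalized error converges to $\mathcal N(0;\bm\beta_k^{\top}\bm\beta_k+\tau_{\mathbf v}^2)$, where $\tau_{\mathbf v}^2\approx\mathbf v^{\top}S_X^{1/2}\bm\alpha_{\widetilde D D}S_M^{-1}\bm\alpha_{\widetilde D D}^{\top}S_X^{1/2}\mathbf v$. This is the matrix analogue of the Sobel (delta-method) variance.

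Under (A1) this extra term is of order one. For example, $\tau_{\mathbf v}^2\geq(\delta/\Delta)\|\bm\alpha_{\widetilde D D}^{\top}\mathbf u\|_2^2$ for $\mathbf v\propto S_X^{-1/2}\mathbf u$ with $\|\mathbf u\|_2=1$, and nothing in (A1)--(A6) removes it. It vanishes for every unit $\mathbf v$ only if $\bm\alpha_{\widetilde D D}=0$. That forces $(\bm\alpha\bm\beta)_{k,(1)}=\bm\alpha_{\widetilde D D}\bm\beta_{k,D}=0$, and it contradicts the paper's convention that each $\bm\alpha_\ell$ is nonzero on $\widetilde D$. Your first fallback (plugging in or conditioning on $\widehat{\bm\beta}_k$) changes the statement rather than proving it.

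Separately, do not defer to ``the $\sqrt n$ convention used there''. Since $S_X^{1/2}$ already has eigenvalues of order $\sqrt n$, the statistic without the leading $\sqrt n$ has a nondegenerate limit, so with it the statistic diverges. The proof of Theorem~\ref{thm:asymptotic.normality.elastic.net.estimator:beta_k} makes the same slip: it computes $\Var(J_3)=1$ and then asserts $\sqrt n J_3\to\mathcal N(0;1)$. Your own bias bounds of order $\lambda_{M,1}/\sqrt n$ already presuppose the normalization without the extra $\sqrt n$.

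Your second fallback is in fact the paper's proof. The paper passes to the reduced form and, by analogy with Theorem~\ref{thm:asymptotic.normality.elastic.net.estimator:beta_k}, obtains a CLT with variance $\sigma_\varkappa^2=\|\bm\beta_k\|_2^2+1$ for an estimator $\widehat{\bm\psi}_k$ of $\bm\psi_k=\bm\alpha\bm\beta_k+\bm\gamma_k$. This is where the $\X$-Gram matrix and $\lambda_{Y,2}$ in the statement come from; by (A5) the choice between $\lambda_{Y,2}$ and $\lambda_{M,2}$ is immaterial. It then asserts variance $1$ for $\widehat{\bm\gamma}_k$ in the same normalization and concludes that the remaining $\|\bm\beta_k\|_2^2$ belongs to $\widehat{\bm\alpha}\widehat{\bm\beta}_k$.

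That last step silently assumes three things:
\begin{itemize}
\item $\widehat{\bm\alpha}\widehat{\bm\beta}_k=\widehat{\bm\psi}_k-\widehat{\bm\gamma}_k$, which is exact only for unpenalized least squares with common covariates, not for separate elastic-net fits;
\item $\widehat{\bm\gamma}_k$ has covariance $\approx S_X^{-1}$;
\item $\widehat{\bm\gamma}_k$ is uncorrelated with $\widehat{\bm\alpha}\widehat{\bm\beta}_k$.
\end{itemize}
Even for least squares the last two fail. Let $G\approx\sum_i\bm\epsilon_i\bm\epsilon_i^{\top}$ be the residualized mediator Gram matrix. Then $\widehat{\bm\gamma}_k$ from the regression on $(\M,\X,\Z)$ has covariance $\approx S_X^{-1}+\bm\alpha G^{-1}\bm\alpha^{\top}$, and its covariance with $\widehat{\bm\alpha}\widehat{\bm\beta}_k$ is $\approx-\bm\alpha G^{-1}\bm\alpha^{\top}$. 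Combined with $\Var(\widehat{\bm\psi}_k)=(1+\|\bm\beta_k\|_2^2)S_X^{-1}$, this gives $\Var(\widehat{\bm\alpha}\widehat{\bm\beta}_k)\approx\|\bm\beta_k\|_2^2S_X^{-1}+\bm\alpha G^{-1}\bm\alpha^{\top}$, which is your extra term again.

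So your route is the more careful one. What it shows is not that an idea is missing from your argument, but that the theorem needs the limit $\mathcal N(0;\bm\beta_k^{\top}\bm\beta_k+\tau_{\mathbf v}^2)$, or a version in which $\bm\beta_k$ is treated as known.
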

\noindent
A proof of Theorem~\ref{thm:asymptotic.normality.elastic.net.estimator:alpha.beta_k} is given in the Supplementary materials.  

\section{Simulation Studies}
\label{sec:simulation}

\subsection{Objectives}

We conducted simulation studies to evaluate the finite-sample behavior of the proposed MMM mediation estimator under controlled and structured multivariate settings.
The experiments examine five questions:
\textit{(i)}~the accuracy of recovery of the coefficient matrices $(\bm{\alpha},\bm{\beta})$ and the indirect-effect matrix $\bm{\alpha}\bm{\beta}$,
\textit{(ii)}~stability of the estimated indirect effects across bootstrap runs,
\textit{(iii)}~false-positive behavior on null paths,
\textit{(iv)}~robustness of the MMM estimator to varying noise levels and sample sizes,
and \textit{(v)}~empirical agreement between the estimated indirect-effect matrices and the asymptotic theory in \textbf{Sec.}~\ref{sec:theoretical_results}.
\textbf{Fig.}~\ref{fig:simulation} summarizes these results.

\subsection{Data-generating mechanisms and parameter design}

We generated data from the multivariate LSEM in~\eqref{equa:MMAMA.model:selected.genes:equa1}--\eqref{equa:MMAMA.model:selected.genes:equa2}.
Each simulated dataset contains $q=20$ exposures, $p=20$ mediators, $T=10$ outcomes, and $s=2$ covariates; namely, $\mathbf X,\mathbf M\in\mathbb{R}^{n\times 20}$, $\mathbf Y\in\mathbb{R}^{n\times 10}$, and $\mathbf Z\in\mathbb{R}^{n\times 2}$.
The ground-truth matrices $\bm{\alpha}_0$ and $\bm{\beta}_0$ were designed to be structured and sparse, with block-like nonzero regions motivated by imaging--genetics settings, which in turn induces an indirect-effect matrix $\bm{\alpha}_0\bm{\beta}_0$ with clustered pathways.
The covariate matrix $\mathbf Z$ was generated from a truncated normal variable (age) and a Bernoulli variable (sex).
The exposure matrix $\mathbf X$ was sampled from a correlated multivariate normal distribution with decaying off-diagonal dependence.
Conditional on $(\mathbf X,\mathbf Z)$, the mediators and outcomes were generated from the two structural equations using fixed coefficients $(\bm{\alpha}_0,\bm{\beta}_0,\bm{\gamma}_0,\bm{\zeta}_0,\bm{\eta}_0)$.

Noise magnitude was indexed by $\sigma\in\{50,100,200,500,1000\}$, corresponding respectively to low, low-to-moderate, moderate, high, and very high noise.
We set the mediator noise as $\bm{\varepsilon}_i=\sigma \bm{\varepsilon}_i^{(0)}$ and the outcome noise as $\bm{\xi}_i=\sigma^2\bm{\xi}_i^{(0)}$, where $\bm{\varepsilon}_i^{(0)}\sim\mathcal{N}(\bm{0},\mathbf{I})$ and $\bm{\xi}_i^{(0)}\sim\mathcal{N}(\bm{0},\mathbf{I})$.
The quadratic scaling in the outcome equation was chosen deliberately to make the second stage harder than the first and to reflect the fact that variability introduced at the mediator level propagates into the outcome model, so that accurate recovery of $\beta$ and especially of $\alpha\beta$ is tested under a stringent regime.

We varied the sample size over
$n\in\{50,100,500,1000,5000,10000\}$.
For each $(n,\sigma)$ configuration, we estimated $(\widehat{\bm{\alpha}},\widehat{\bm{\beta}},\widehat{\bm{\alpha}}\widehat{\bm{\beta}})$ using the proposed MMM mediation analysis procedure.
For the stability analysis in \textbf{Fig.}~\ref{fig:simulation}\textit{(b)}, each configuration was additionally evaluated across $10$ bootstrap runs, and the reported stability index is the average agreement of the estimated indirect-effect maps across those runs.

\begin{figure}[!t]
    \centering
    \includegraphics[width=\textwidth]{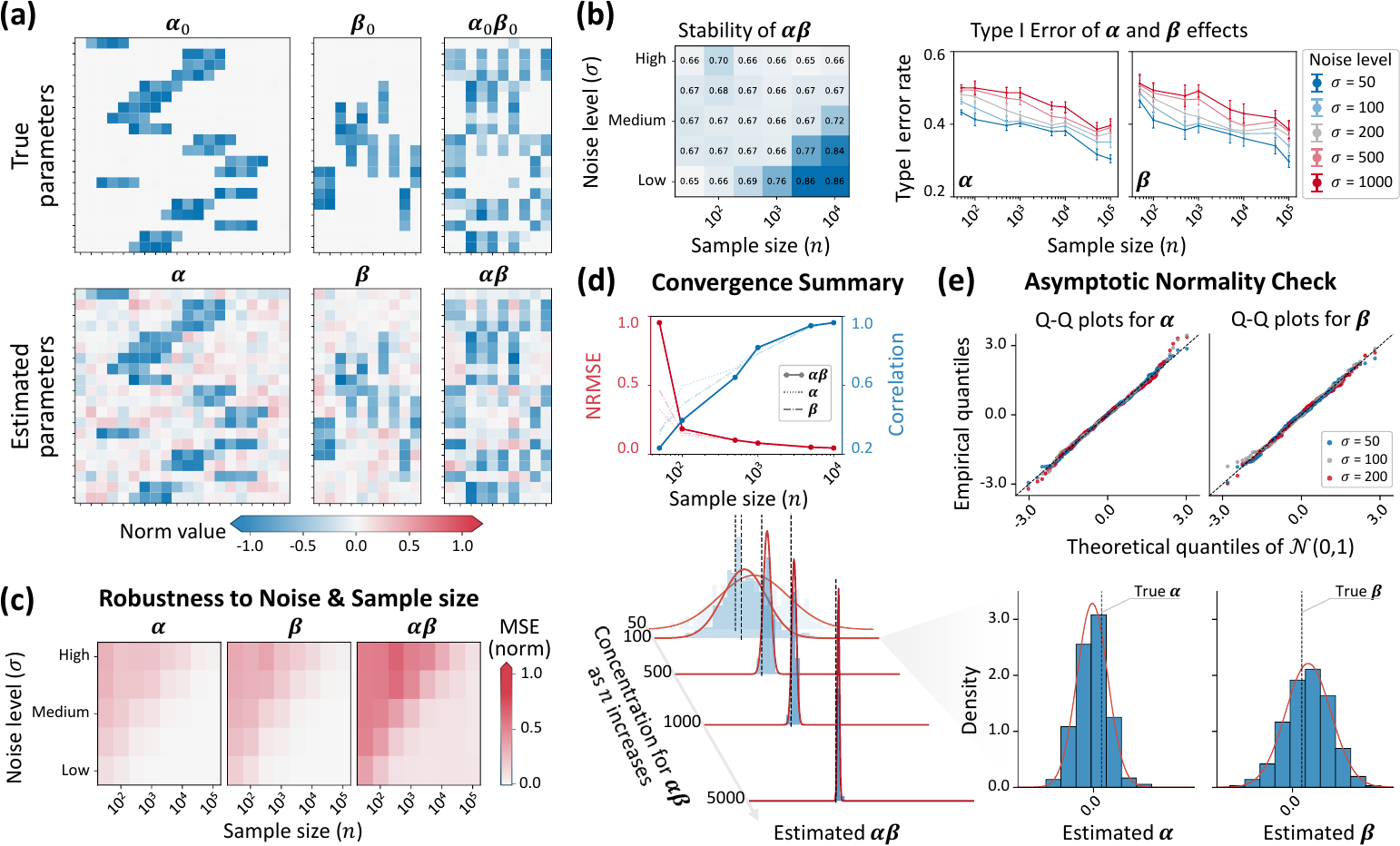}
    \caption{
        \textbf{Simulation results for the MMM mediation framework.}
        { \footnotesize
        \textit{(a)} Heatmaps comparing the ground-truth coefficient matrices
        $\bm{\alpha}_0$, $\bm{\beta}_0$, and the indirect-effect matrix $\bm{\alpha}_0\bm{\beta}_0$
        with their corresponding estimates
        $\widehat{\bm{\alpha}}$, $\widehat{\bm{\beta}}$, and $\widehat{\bm{\alpha}}\widehat{\bm{\beta}}$
        under a representative simulation setting.
        \textit{(b)} Stability of
        $\bm{\alpha}\bm{\beta}$ across combinations of sample size and noise level,
        and Type~I error rates under null mediation paths.
        \textit{(c)} Estimation error of $(\bm{\alpha},\bm{\beta},\bm{\alpha}\bm{\beta})$ as a function
        of sample size and noise level, summarizing robustness to high-noise regimes.
        \textit{(d)} Convergence patterns of Normalized Root Mean Square Error (NRMSE) and correlation with the ground truth as a function of sample size ($n$),
        with representative histograms showing concentration of the estimates as $n$ increases.
        \textit{(e)} Empirical distributions and Q--Q plots of normalized estimators
        illustrating asymptotic normality for selected entries of
        $\bm{\alpha}$ and $\bm{\beta}$.
        }
    }
    \label{fig:simulation}
\end{figure}

\subsection{Simulation experiments}

\subsubsection{Parameter recovery}
\textbf{Fig.}~\ref{fig:simulation}\textit{(a)} compares the ground-truth matrices $(\bm{\alpha}_0,\bm{\beta}_0,\bm{\alpha}_0\bm{\beta}_0)$ with their estimates.
The estimator successfully recovers the main block structure of $\bm{\alpha}_0$ and $\bm{\beta}_0$, preserving the dominant nonzero regions while keeping most null regions near zero.
Furthermore, the estimated indirect-effect matrix $\widehat{\bm{\alpha}}\widehat{\bm{\beta}}$ identifies the key mediation pathways present in $\bm{\alpha}_0\bm{\beta}_0$. Together, these highlight the ability of the two-stage regularization in MMM to uncover the underlying many-to-many-to-many mediation structure.

\subsubsection{Stability and false-positive behavior of indirect-effect matrix estimates}
\textbf{Fig.}~\ref{fig:simulation}\textit{(b)} evaluates two complementary aspects of reliability.
First, the stability index of $\widehat{\bm{\alpha}}\widehat{\bm{\beta}}$ (higher values indicate greater stability) remains uniformly moderate to high across the full grid, ranging from $0.65$ to $0.86$.
Stability improves with sample size and is highest in the large-$n$ settings, where repeated bootstrap runs produce nearly identical indirect-effect maps.
This is consistent with the visual concentration of the estimates in the large-$n$ panels and indicates that the MMM mediation analysis pipeline is reproducible under resampling.

Second, the empirical Type~I error curves for null $\bm{\alpha}$ and $\bm{\beta}$ entries decrease as $n$ grows and increase with the noise level.
For both coefficient matrices, the error rates are highest in the smallest and noisiest regimes and decline steadily toward the largest sample sizes (these rates are computed from thresholded penalized estimates not post-selection-corrected tests, and should be interpreted as a relative measure of spurious activation). Overall, larger sample sizes and lower noise yield a cleaner separation between true and null effects. 

\subsubsection{Robustness of the MMM mediation estimator to noise and sample size}

\textbf{Fig.}~\ref{fig:simulation}\textit{(c)} reports the normalized mean-squared error across the $(n,\sigma)$ grid for $\bm{\alpha}$, $\bm{\beta}$, and $\bm{\alpha}\bm{\beta}$.
The pattern is monotone in the expected direction: estimation error decreases with sample size and increases with noise magnitude.
The indirect-effect matrix $\bm{\alpha}\bm{\beta}$ is more sensitive than the individual $\bm{\alpha}$ or $\bm{\beta}$ matrices, which is natural since its estimation accumulates errors from both  the estimated $\bm{\alpha}$ and $\bm{\beta}$ components.
Even so, the heatmaps show substantial improvement once the sample size enters the mid-to-large regime, indicating that the estimator remains stable and accurate outside the most adverse settings.

\subsubsection{Convergence behavior of the estimators}
\textbf{Fig.}~\ref{fig:simulation}\textit{(d)} summarizes convergence using the normalized root mean squared error (NRMSE) and correlation as functions of $n$.
NRMSE decreases rapidly with sample size for all three targets ($\bm{\alpha}$, $\bm{\beta}$, and $\bm{\alpha}\bm{\beta}$), while the corresponding correlations with their  truth counterparts increase toward one.
The convergence is fastest for the individual coefficient matrices ($\bm{\alpha}$ and $\bm{\beta}$) and slightly slower for the indirect-effect matrix ($\bm{\alpha}\bm{\beta}$), again reflecting the two-stage nature of mediation estimation.
The overlaid histograms of selected entries of $\widehat{\bm{\alpha}}\widehat{\bm{\beta}}$ further support this pattern: as $n$ increases, the empirical distributions become more concentrated and tightly centered around the true values.
Taken together, these results are consistent with the large-sample consistency guarantees established in \textbf{Sec.} \ref{sec:theoretical_results}.

\subsubsection{Assessment of asymptotic normality}
\textbf{Fig.}~\ref{fig:simulation}\textit{(e)} evaluates the behavior of asymptotic normal approximations for representative entries of $\widehat{\bm{\alpha}}$ and $\widehat{\bm{\beta}}$.
Across the displayed noise levels, the Q--Q plots lie close to the $45^\circ$ line, with only mild deviations in the extreme tails.
The accompanying zoomed-in histograms (shown for $n=100$) are already close to bell-shaped for both $\bm{\alpha}$ and $\bm{\beta}$, indicating that the Gaussian approximation becomes reasonable at a moderate sample size.
These finite-sample diagnostics provide empirical support for the element-wise asymptotic normality established in Theorem~\ref{thm:asymptotic.normality.elastic.net.estimator:alpha.beta_k}.

Overall, \textbf{Fig.}~\ref{fig:simulation} shows that the proposed MMM mediation estimator recovers the dominant many-to-many-to-many mediation structure, remains stable under resampling, degrades gracefully with increasing noise, and exhibits the convergence and distributional behavior consistent with the theoretical results.

\section{Application to Alzheimer's Disease}
\label{sec:data_analysis}

After presenting the theoretical framework, methodology, and simulation studies, we apply the MMM mediation analysis method to genetic, brain imaging, and cognitive-behavioral data from subjects who are cognitively normal, individuals with mild cognitive impairment, and patients with Alzheimer’s disease (AD) to study many-to-many-to-many genetic–neural–cognitive mediation effects in AD.

The application of the MMM mediation analysis to AD is suitable for two reasons. First, AD is a polygenic disorder in which multiple genes are associated with disease risk~\citep{harrisonPolygenicScoresPrecision2020, bellenguez2022new}. Second, structural and functional changes in the brains of AD patients affect multiple cognitive and behavioral domains, including memory, executive function, visuospatial abilities, and language~\citep{Diaz2025OPTIMUS}. 

The application of the MMM mediation analysis to AD is also advantageous. Despite advances, previous studies of the multivariate genetic and neural underpinnings of brain diseases have been largely conducted in isolation: either focusing on the relationship between high-dimensional genetic variables and disease outcomes~\citep{kunkle2019genetic}, or between high-dimensional brain features and disease outcomes~\citep{smith2015positive}. In parallel, some studies have examined many-to-many pathways between genetic signatures and brain features~\citep{tissink2024abundant}. While these approaches are informative for pairwise relationships, they do not capture the complex interplay among genes, brain regions, and cognitive outcomes, particularly when both genetic and neural data are high-dimensional and multiple outcomes are considered.
The proposed MMM mediation analysis framework provides a platform to identify and estimate many-to-many-to-many genetic–neural–cognitive mediation effects, and it can also be used to predict multivariate outcomes based on genetic information and brain representations derived from genetic signatures. 

In this study, we apply MMM to the Alzheimer’s Disease Neuroimaging Initiative (ADNI) database (\url{adni.loni.usc.edu}). Specifically, the exposure layer $\X$ consists of 688 genome-wide significant SNPs selected from approximately 1.5 million genotyped variants via genome-wide association studies (GWAS). The mediator layer $\M$ comprises MRI-derived cortical thickness measurements from 202 brain regions. The outcome layer $\Y$ includes 11 AD-related endpoints: diagnosis (AD versus non-AD) and ten cognitive-behavior scores, including MMSE, CDRSB, FAQ, TRABSCOR, ADAS11, ADAS13, ADASQ4, RAVLT immediate, RAVLT \%forgetting, and RAVLT learning scores. The covariate vector $\Z$ contains sex, age, race, education, and marital status. \textbf{Fig.}~\ref{fig:adni}\textit{(a)} summarizes the resulting many-to-many-to-many mediation analysis design.

\begin{figure}[!ht]
    \centering
    \includegraphics[width=\textwidth]{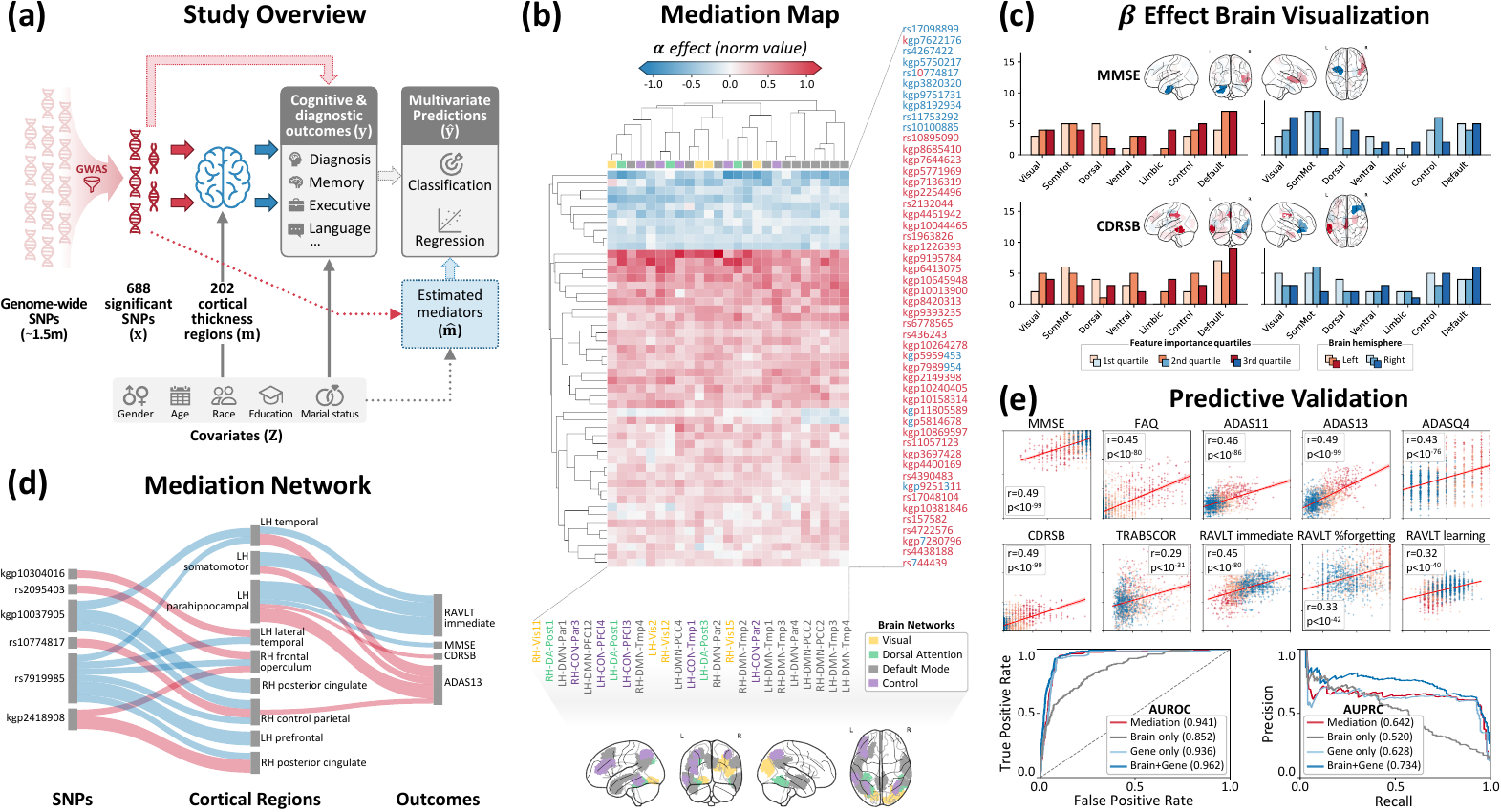}
    \caption{
        \textbf{Application of many-to-many-to-many mediation analysis to Alzheimer's disease.}
        { \footnotesize
        \textit{(a)} A schematic overview of the analysis structure, showing
        688 genome-wide significant SNPs ($\X$),
        202 cortical-thickness mediators ($\M$),
        and 11 cognitive-behavior and diagnostic outcomes ($\Y$),
        together with covariates.
        \textit{(b)} Heatmap of the estimated $\bm{\alpha}$ effects,
        highlighting structured genetic--brain map.
        \textit{(c)} Cortical surface visualization of the top mediating brain regions.
        \textit{(d)} Mediation network connecting the strongest SNPs,
        cortical mediators, and outcomes, with edge thickness proportional to mediation strength.
        \textit{(e)} Predictive performance comparison between baseline models
        and high-dimensional MMM mediation models, including scatter plots of observed versus predicted outcomes.
        }
    }
    \label{fig:adni}
\end{figure}


\subsection{Fitting the MMM model on ADNI data}
We fit the MMM mediation analysis model in \textbf{Sec.}~\ref{sec:method} to the ADNI imaging--genetics data, treating the selected SNPs as exposures, cortical-thickness measures as mediators, and the cognitive-behavior or diagnostic variables as outcomes.
The fitted model produces the exposure--to--mediator matrix $\widehat{\bm{\alpha}}$, the mediator--to--outcome matrix $\widehat{\bm{\beta}}$, and the indirect-effect matrix $\widehat{\bm{\alpha}}\widehat{\bm{\beta}}$ linking genetic variation, cortical thickness organization, and AD-related outcomes.
For interpretation, we focus on the strongest and most stable pathways in $\widehat{\bm{\alpha}}$, $\widehat{\bm{\beta}}$, and $\widehat{\bm{\alpha}}\widehat{\bm{\beta}}$; for prediction, we consider predicting both continuous outcomes and categorical AD diagnosis in previously unseen subjects.

\subsection{The \texorpdfstring{$\bm{\alpha}$}{alpha} map for studying the many--to--many genetic--brain relationship}
\textbf{Fig.}~\ref{fig:adni}\textit{(b)} shows that the estimated exposure--to--mediator effects
($\widehat{\bm{\alpha}}$) are highly structured rather than diffuse.
The strongest signals are concentrated in four large-scale cortical systems:
the Visual network, the Dorsal Attention network, the Default Mode network (DMN),
and the Control/frontoparietal network.
Representative brain regions include the right visual cortex
(\texttt{RH-Vis11}, \texttt{RH-Vis12}, \texttt{RH-Vis15}),
posterior dorsal-attention cortex
(\texttt{RH-DA-Post1}, \texttt{LH-DA-Post1}, \texttt{LH-DA-Post3}),
default-mode parietal, temporal, and posterior-cingulate/precuneus regions
(e.g., \texttt{LH-DMN-Par1}, \texttt{RH-DMN-Par2},
\texttt{LH-DMN-Tmp1/3/4}, \texttt{RH-DMN-Tmp2/3/4},
\texttt{LH-DMN-PCC2/4}, \texttt{RH-DMN-PCC2}),
and control-network parietal and lateral prefrontal regions
(e.g., \texttt{RH-CON-Par3}, \texttt{LH-CON-Par2},
\texttt{LH-CON-PFCl3/4}, \texttt{LH-CON-Tmp1}).

This spatial pattern is biologically explainable.
DMN regions such as the posterior cingulate/precuneus, lateral temporal cortex, and parietal association cortex are among the most consistently implicated systems in AD, and cortical thinning in temporal, parietal, posterior-cingulate, and prefrontal territories is a well-established feature of disease progression~\citep{migliaccio2015mapping,mcevoy2010quantitative}.
The involvement of control and dorsal-attention regions is also consistent with the broader network view of AD, in which cognitive decline is linked not only to canonical memory-related DMN disruption but also to abnormalities in attentional and frontoparietal systems~\citep{katsumi2024greater}.
More generally, the concentration of effects in specific cortical systems agrees with large-scale imaging--genetics evidence that cortical morphology is strongly polygenic and regionally patterned, rather than driven by uniform whole-brain effects~\citep{grasby2020genetic}.

The sign structure in \textbf{Fig.}~\ref{fig:adni}\textit{(b)} is also informative.
A relatively small subset of SNPs shows predominantly negative $\widehat{\bm{\alpha}}$ effects, whereas a larger group shows predominantly positive effects, suggesting that the selected genetic variants act through partially antagonistic cortical pathways.
Rather than interpreting any single SNP--ROI pair in isolation, we view this map as evidence for a coordinated many-to-many genetic influence on cortical organization.
In particular, the dominant signal lies in temporo--parietal, posterior-cingulate/precuneus, and lateral prefrontal territories, which are precisely the association cortices known to be vulnerable in AD and relevant for higher-order cognitive decline~\citep{mcevoy2010quantitative,migliaccio2015mapping}.

\subsection{The \texorpdfstring{$\bm{\beta}$}{beta} map for studying brain regions related to different cognitive-behavior outcomes}

\textbf{Fig.}~\ref{fig:adni}\textit{(c)} summarizes the estimated mediator--to--outcome effects ($\widehat{\bm{\beta}}$) for two representative endpoints, MMSE and CDRSB, and reveals a clear anatomical organization rather than diffuse whole-cortex contributions.
For both outcomes, the largest aggregate effects are concentrated in the Default Mode regions, with additional contributions from Control, Somatomotor, Dorsal Attention, and Visual systems.
This pattern is biologically plausible: temporo--parietal and posteromedial components of the Default Mode Network (DMN), including the posterior cingulate/precuneus and lateral temporal cortex, are among the cortical systems most consistently altered in AD, and these regions are tightly linked to memory loss and global cognitive decline~\citep{mohan2016significance,lee2020posterior,tang2024multimodal}.

The cortical renderings refine this picture by localizing the strongest brain--to--cognitive/behavioral outcome effects to the temporo--limbic and posteromedial association cortex, including the parahippocampal, lateral temporal, posterior cingulate, frontal opercular, and prefrontal regions.
This is again consistent with existing knowledge: medial temporal and parahippocampal circuits are closely related to episodic-memory dysfunction, whereas posterior cingulate and parietal association regions are central nodes of the AD-vulnerable network architecture~\citep{berron2020medial,lee2020posterior,tang2024multimodal}.
The prominence of control and dorsal attention systems is also reasonable, since AD-related cognitive impairment is not confined to canonical memory systems but extends to attentional and executive-control networks, especially in clinically heterogeneous or posterior-predominant phenotypes~\citep{katsumi2024greater,katsumi2023association}.

A further finding from \textbf{Fig.}~\ref{fig:adni}\textit{(c)} is the apparent left-hemispheric predominance of the strongest effects.
We interpret this cautiously, although there is prior evidence that hemispheric asymmetry is altered in AD and that left temporal and left parietal association cortex can be especially informative in AD-related structural and functional phenotypes~\citep{mizrak2024investigation,park2017improved}.

Taken together, these results suggest that the links between the brain and cognitive/behavioral outcomes are anatomically informative: the outcomes are associated primarily with a distributed set of AD-relevant association cortices, with the DMN and adjacent temporo--parietal systems playing the leading role.

\subsection{The many--to--many--to--many genetic--brain--cognitive mediation effect \texorpdfstring{$\bm{\alpha}\bm{\beta}$}{alpha beta}}

\textbf{Fig.}~\ref{fig:adni}\textit{(d)} makes the many--to--many--to--many structure explicit by extracting the strongest SNP--brain region--cognitive/behavioral outcome pathways from the fitted model.
Rather than a diffuse graph, the estimated network has a hub-like organization: a small subset of SNPs converges onto a refined set of cortical mediators, and these cortical mediators then project to a small number of clinically central outcomes.
The key cortical hubs are concentrated in left temporal, left parahippocampal, left lateral temporal, left prefrontal, right posterior cingulate, right frontal operculum, and right control-parietal territories.
This organization is notable because it bridges regions classically linked to episodic memory and AD-vulnerable association cortex with outcomes spanning memory performance, global cognition, and clinical severity.

The strongest memory-related branch of the network runs through temporal and parahippocampal mediators toward RAVLT immediate.
This pattern is consistent with the central role of medial temporal circuitry in memory encoding and retrieval and with evidence that disruption of temporal and parahippocampal systems is among the earliest and most consequential features of AD~\citep{berron2020medial,tang2024multimodal}.
The branches leading to MMSE, CDRSB, and ADAS13 are weighted by posteromedial, parietal-control, and prefrontal mediators, which is biologically reasonable because these outcomes reflect broader global impairment rather than memory in isolation~\citep{lee2020posterior,katsumi2023association}.
The repeated appearance of posterior cingulate and control-parietal regions is especially important, as these areas sit near the interface of Default Mode and higher-order control systems that are repeatedly implicated in AD progression and cognitive deterioration~\citep{mohan2016significance,lee2020posterior,katsumi2024greater}.

The network also contains both positive and negative paths, indicating that the selected genetic variants do not act through a single monotone cortical mechanism.
In particular, the MMM estimates reveal partially opposing cortical effects: some variants are associated with thicker or likely relatively preserved cortical mediators, whereas others are associated with thinner or potentially more adverse cortical patterns.
We, therefore, view \textbf{Fig.}~\ref{fig:adni}\textit{(d)} not as a collection of isolated SNP--ROI--disease links, but as evidence that polygenic variation is funneled through a compact set of biologically meaningful cortical hubs that jointly shape multiple AD-related outcomes.
This is the type of structure that is lost in univariate-exposure/univariate-outcome analyses but retained by the MMM formulation.

\subsection{Multivariate outcome prediction in previously unseen subjects} \label{subsection:prediction}
We evaluate whether the MMM mediation framework yields useful downstream predictors when only genetic exposures and the predicted brain data (brain representation predicted by genetic data) are used during prediction (see \textbf{Fig.}~\ref{fig:adni}\textit{(e)}).

For the continuous outcomes, the MMM-based predictions show clear and strong quantitative agreement with the observed values, with the best performance obtained for global cognition and disease-severity measures such as MMSE, CDRSB, and ADAS13.
This indicates that the estimated mediation effects capture meaningful cross-domain variation in AD-related phenotypes rather than signal restricted to a single cognitive score.

For diagnosis classification, the MMM mediation model outperforms both the brain-only and gene-only baselines, although its performance does not exceed that of the model using both observed brain and genetic features.
This ordering is expected and important for interpretation.
The MMM predictor does not use MRI measurements at test time; instead, it relies solely on genetic inputs and the brain patterns learned through the MMM mediation model.
Its strong performance demonstrates that a substantial portion of the predictive information carried by brain structure can be transferred into a genetically mediated representation, even though some information is inevitably lost relative to a model that directly includes both modalities for prediction.

Taken together, the out-of-sample prediction studies suggest two complementary points.
First, the mediation fit is not merely descriptive: it retains practical predictive value across multiple AD-related outcomes.
Second, the genetically mediated brain representation provides a principled compromise between interpretability and deployability, since it incorporates information learned from brain imaging data during training but does not require imaging data for prediction in new subjects.

\subsection{Interpretation}
The application of the MMM mediation analysis model on ADNI data indicates that polygenic variation influences AD-related phenotypes through a structured set of cortical systems rather than via diffuse, nonspecific areas.
The dominant mediating regions are located in the Default Mode, Control, Dorsal Attention, and Visual networks, whereas the strongest mediator--to--outcome effects are concentrated in temporo--parietal, posteromedial, and prefrontal association cortex.
This overall pattern is consistent with current perspectives on AD as both a polygenic disorder and a network-level brain disease, in which distributed cortical vulnerability gives rise to impairments across multiple cognitive domains~\citep{grasby2020genetic,mohan2016significance,tang2024multimodal, Diaz2025OPTIMUS}.


\section{Conclusion}\label{sec-conc}
In this work, we study many--to--many--to--many (MMM) mediation analysis where exposure $\mathbf{x}$, mediator $\mathbf{m}$ and outcome $\mathbf{y}$ are all multivariate, and both exposures and mediators may be high-dimensional. We first establish the theoretical properties of the MMM mediation estimators, including the consistency, asymptotic normality, and error bounds. We then demonstrate their empirical performance through simulation experiments. Finally, we apply the MMM mediation analysis framework to ADNI data to investigate the genetic-neural-cognitive/behavior mediation effects in Alzheimer's disease. 

\medskip

There are a few potential extensions of this paper. First, one can generalize the MMM mediation analysis framework to accommodate non-linear relationships among exposures, mediators, and outcomes. One way to do so is to consider basis expansions or nonparametric function approximation of maps between exposures to mediators and mediators to outcomes.
Second, in the present work, we assume that the mediators (e.g., brain regions) are independent. A natural extension is to account for the correlation structure among mediator, for example, by combining the MMM mediation analysis framework with the generalized estimation equations.
Third, it may be useful to explore a Bayesian MMM mediation analysis formulation by incorporating prior distributions on the exposures (e.g., genes) and mediators (e.g., brain regions). Such an approach could leverage existing biological knowledge to more precisely identify relevant genes and brain regions, improving interpretability, and potentially increasing estimation accuracy in high-dimensional settings.
Finally, one can consider a longitudinal version of the MMM method to study how mediators evolve over time in transmitting the effects of exposures to outcomes. A beginning can perhaps be made by linking the current MMM mediation method with mixed-effects models, GEE, and partial differential equations (PDEs) (extending $\M$ to $\M(t,s)$, where $t$ denotes time and $s$ indexes spatial location).


\section{Data Availability Statement}\label{data-availability-statement}
Data used in preparation of this article were obtained from the Alzheimer's Disease Neuroimaging Initiative (ADNI) database (\url{adni.loni.usc.edu}). The investigators within ADNI contributed to the design and implementation of ADNI and/or provided data but did not participate in analysis or writing of this paper. A complete listing of ADNI investigators can be found at: \url{http://adni.loni.usc.edu/wp-content/uploads/how_to_apply/ADNI_Acknowledgement_List.pdf}.


\section*{Acknowledgement}

This project is partly funded by the Swiss National Science Foundation (SNSF) grants 3200-0-239967 and CR00I5-235987.

Data collection and sharing for the Alzheimer's Disease Neuroimaging Initiative (ADNI) is funded by the National Institute on Aging (National Institutes of Health Grant U19AG024904). The grantee organization is the Northern California Institute for Research and Education. In the past, ADNI has also received funding from the National Institute of Biomedical Imaging and Bioengineering, the Canadian Institutes of Health Research, and private sector contributions through the Foundation for the National Institutes of Health (FNIH) including generous contributions from the following: AbbVie, Alzheimer's Association; Alzheimer's Drug Discovery Foundation; Araclon Biotech; BioClinica, Inc.; Biogen; Bristol-Myers Squibb Company; CereSpir, Inc.; Cogstate; Eisai Inc.; Elan Pharmaceuticals, Inc.; Eli Lilly and Company; EuroImmun; F. Hoffmann-La Roche Ltd and its affiliated company Genentech, Inc.; Fujirebio; GE Healthcare; IXICO Ltd.; Janssen Alzheimer Immunotherapy Research \& Development, LLC.; Johnson \& Johnson Pharmaceutical Research \& Development LLC.; Lumosity; Lundbeck; Merck \& Co., Inc.; Meso Scale Diagnostics, LLC.; NeuroRx Research; Neurotrack Technologies; Novartis Pharmaceuticals Corporation; Pfizer Inc.; Piramal Imaging; Servier; Takeda Pharmaceutical Company; and Transition Therapeutics.

\section*{Author Contribution}

OYC and DCC conceived and designed the study. TDN provided the theoretical and methodological results. TKT checked the theory and methods, and conducted the simulation studies and data analysis. CKT performed the GWAS analyses. DCC wrote the discussion section. TBN provided guidance to TKT. TDN, TKT, DCC, and OYC wrote the paper, with comments from all authors.  

\bibliography{bibliography.bib}
\clearpage
\appendix

\begin{center}
\textbf{\Large Supplementary Materials to}\\[3mm]
\textbf{\Large High-dimensional Many--to--many--to--many}\\
\textbf{\Large Mediation Analysis}
\end{center}

This document contains the Supplementary Materials to the paper ``High-dimensional Many--to--many--to--many Mediation Analysis''.
Appendix~\ref{sup:consistency} provides the proofs of all theorems and lemmas related to the consistency of the proposed method.
Appendix~\ref{sup:asymptotic_normality} presents the proofs of the results concerning the asymptotic theory. 
Appendix~\ref{sup:alpha_beta} contains the proof of the mediation effect of $\bm{\alpha}\bm{\beta}$. 
Appendix~\ref{sec:proof:causal.mediation} includes proof for causal mediation analysis in the many--to--many--to--many setting.

\section{Proofs}
\label{app. asymptotic theory}
\renewcommand{\theequation}{A\arabic{equation}}
\setcounter{equation}{0}
\subsection{Proof for the consistency}
\label{sup:consistency}
\subsubsection{Proof of Lemma~\ref{lem:KKT-condition}}
\label{section:proof:lem:KKT-condition}
\begin{proof}[Proof of Lemma~\ref{lem:KKT-condition}]
We follow the proof of Lemma 1 in~\cite{Jia-Yu:2010:model.consistency.elastic-net}. 
For $1 \leq k \leq T$,  taking the first derivative w.r.t. $\bm{\beta}_{k}$ for the loss function in~\eqref{equa:elastic.net:loss.function:beta_k}, we get   
	\begin{align*}
		\dfrac{\partial}{\partial \bm{\beta}_{k}} L_{1}(\bm{\beta})   
		= -  &2 \sum_{i=1}^{n} \Big[  (Y_{i})_{k} \M_{i} -   \M_{i}  \M_{i}^{\top}  \bm{\beta}_{k} -  \M_{i} \big( \X_{i}^{\top} \bm{\gamma}_{k} \big) - \M_{i} \big(  \Z_{i}^{\top} \bm{\eta}_{k} \big) \Big] 
		+ 2 \lambda_{Y,2}   \bm{\beta}_{k} + \lambda_{Y,1}    
		\,  \Upsilon ,  
	\end{align*}
	where 
	\begin{align*}
		\Upsilon =  \left\{  \begin{array}{ll}
			\textrm{sign}(\beta_{jk})	&  \textrm{ if }  \beta_{jk}  \neq  0
			\\[0.2cm]   
			\textrm{any real number which } \in [-1 ; 1] 	&  \textrm{ if }  \beta_{jk}  =  0
		\end{array}
		\right. , \quad  \textrm{  with  }  1  \leq  j  \leq  p .   
	\end{align*} 
	Thus, by KKT conditions for optimality in a convex program, the point $\widehat{\bm{\beta}}_{k}$ is optimal if and only if:
	\begin{align}
		& -  \sum_{i=1}^{n} \Big[ 2 (\Y_{i})_{k} \M_{i}  -   2 \M_{i} \M_{i}^{\top}  \widehat{\bm{\beta}}_{k}  - 2 \M_{i} \big( \X_{i}^{\top} \bm{\gamma}_{k} \big) - 2 \M_{i} \big(   \Z_{i}^{\top}\bm{\eta}_{k} \big) \Big] 
		+ 2 \lambda_{Y,2}   \widehat{\bm{\beta}}_{k} + \lambda_{Y,1}   \widehat{ \Upsilon }   
		=  0,    
	\end{align} 
	where 
	\begin{align*}
		\widehat{ \Upsilon }   =  \left\{  \begin{array}{ll}
			\textrm{sign}(\widehat{\beta}_{jk})	&  \textrm{ if }  \widehat{\beta}_{jk}  \neq  0
			\\[0.2cm]   
			\textrm{any real number which } \in [-1 ; 1] 	&  \textrm{ if }  \widehat{\beta}_{jk}  =  0
		\end{array}
		\right. , \quad  \textrm{  with  }  1  \leq  j  \leq  p .   
	\end{align*} 
	Then, substituting $( \Y_{i} )_{k}$ by $ \Big(  \M_{i}^{\top}\bm{\beta}_{k}  +    \X_{i}^{\top} \bm{\gamma}_{k}       +    \Z_{i}^{\top}\bm{\eta}_{k}   +   ( \bm{\xi}_{i} )_{k}  \Big) $ yields: 
	\begin{align}
		& -  \sum_{i=1}^{n} \Big[ 2    \M_{i} \M_{i}^{\top} \big( \bm{\beta}_{k}  - \widehat{\bm{\beta}}_{k} \big) + 2 ( \bm{\xi}_{i} )_{k} \M_{i}   \Big]   
		+   2 \lambda_{Y,2}   \widehat{\bm{\beta}}_{k} + \lambda_{Y,1}  \widehat{ \Upsilon }    
		=  0.   
		\label{proof:lem1:KKT:beta:equa1}
	\end{align}
	Remind that, we assume the first $d$ elements of $\bm{\beta}_{k}$ are non-zeroes, so we  consider $\bm{\beta}_{k,(1)} = \big( \beta_{1k} , ..., \beta_{dk} \big)$. 
    The condition $\mathcal{R}(\M, \bm{\beta}, \bm{\xi}, \lambda_{Y,1}, \lambda_{Y,2})$ holds if and only if we have $\widehat{\bm{\beta}}_{k,(2)} = 0$, $\textrm{sign}(\widehat{\bm{\beta}}_{k,(1)}) = \textrm{sign}(\bm{\beta}_{k,(1)})$ and $
	\big\| \widehat{ \Upsilon}_{(2)} \big\|_{\infty} \leq 1$.   
	From these conditions and using~\eqref{proof:lem1:KKT:beta:equa1}, we conclude that the condition $\mathcal{R}(\M, \bm{\beta}, \bm{\xi}, \lambda_{Y,1}, \lambda_{Y,2})$ holds if and only if 
	\begin{align*}
		-  \sum_{i=1}^{n} \Big[ 2 \M_{i,(1)}  \M_{i,(1)}^{\top} \big( \bm{\beta}_{k,(1)}  - \widehat{\bm{\beta}}_{k,(1)} \big)     + 2 \big( \bm{\xi}_{i} \big)_{k}   \M_{i,(1)}    \Big]   
		+   2 \lambda_{Y,2}   \widehat{\bm{\beta}}_{k,(1)} + \lambda_{Y,1}  
		\big[ \textrm{sign} \big( \bm{\beta}_{j,k} \big)  \big]_{1 \leq j \leq  d}  =  0  ,    
	\end{align*}
	since  $\textrm{sign}(\widehat{\bm{\beta}}_{k,(1)}) = \textrm{sign}(\bm{\beta}_{k,(1)})$, and 
	\begin{align*}
	 \sum_{i=1}^{n} \Big[ 2 \M_{i,(2)} \M_{i,(1)}^{\top}  \big( \bm{\beta}_{k,(1)}  - \widehat{\bm{\beta}}_{k,(1)} \big)    +  2   \big( \bm{\xi}_{i}   \big)_{k}    \M_{i,(2)}    \Big]   
		=  \lambda_{Y,1}  \widehat{ \Upsilon}_{(2)}.   
	\end{align*}
	With $ \textrm{sign} \big( \widehat{\bm{\beta}}_{k,(1)} \big) =  \big[ \textrm{sign} \big( \widehat{\bm{\beta}}_{j,k} \big)  \big]_{1 \leq j \leq  d}$, solve for  $\widehat{\bm{\beta}}_{k,(1)}$ to conclude that:  
	\begin{align*}  
		\widehat{\bm{\beta}}_{k,(1)} = \Big(  \sum_{\ell = 1}^{n} \M_{\ell,(1)}   \M_{\ell , (1)}^{\top}  + \lambda_{Y,2} \mathbf{I}      \Big)^{-1} 
		\Big[ &  \sum_{i=1}^{n} \Big( \M_{i, (1)} \M_{i, (1)}^{\top}  \bm{\beta}_{k , (1)}  +  \M_{i,(1)}   \big( \bm{\xi}_{i} \big)_{k} \Big) 
		-   \dfrac{\lambda_{Y,1}}{2} \textrm{sign} \big( \bm{\beta}_{k,(1)} \big)  \Big]     
	\end{align*}
		\textrm{ and }
	\begin{align*}
		-\lambda_{Y,1} \widehat{ \Upsilon }_{(2)} =  2  \sum_{i=1}^{n}  \Big[   \M_{i,(2)} \M_{i,(1)}^{\top} \Big( \sum_{\ell = 1}^{n}  \M_{\ell , (1)} \M_{\ell , (1)}^{\top}  +  \lambda_{Y,2} \mathbf{I} \Big)^{-1}   
		\Big( & \M_{i,(1)} \big( \bm{\xi}_{i} \big)_{k} -  \dfrac{\lambda_{Y,1}}{2} \textrm{sign} \big( \bm{\beta}_{k,(1)} \big)  
		\\
		& -  \lambda_{Y,2} \bm{\beta}_{k,(1)} \Big) 
		 +  \M_{i,(2)}   \big( \bm{\xi}_{i} \big)_{k}  \Big]  . 
	\end{align*}
	The conditions $\textrm{sign}(\widehat{\bm{\beta}}_{(1)}) = \textrm{sign}(\bm{\beta}_{(1)})$ and $\big\| \widehat{ \Upsilon}_{(2)} \big\|_{\infty} \leq 1$ are exactly~\eqref{lem:KKT-condition:condition1} and~\eqref{lem:KKT-condition:condition2}. This concludes the proof of Lemma~\ref{lem:KKT-condition}.   
\end{proof}
\subsubsection{Proof of Theorem~\ref{thm:consistency.elastic.net.estimator:beta_k}}
\label{section:proof:thm:consistency.elastic.net.estimator:beta_k}
Before proving Theorem~\ref{thm:consistency.elastic.net.estimator:beta_k}, we introduce without proof a well-known comparison result on Gaussian maxima (see~\cite{Ledoux-Talagrand:1991}).  
\begin{lem}
	\label{lem:concentration.inequality:Ledoux.Talagrand.1991}
	For any Gaussian random vector $(W_{1}, ..., W_{n})$, we have:
	\begin{align}
		\mathbb{E} \Big( \max_{1 \leq i \leq n} W_{i} \Big)  \leq  3 \sqrt{\log(n)} \max_{1 \leq i \leq n} \sqrt{ \mathbb{E} \big( W_{i}^{n} \big) } .  
	\end{align}
\end{lem}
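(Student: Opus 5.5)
The plan is to prove the bound by the standard Chernoff/moment-generating-function argument for maxima of Gaussians. Two reading assumptions come first.

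\emph{Second moment.} I read the right-hand side as $3\sqrt{\log(n)}\max_{i}\sqrt{\mathbb{E}(W_i^2)}$, i.e.\ the exponent $n$ inside the expectation is taken to be $2$. This is the form in which the lemma is used: it controls $\mathbb{E}\max_m |V_m|$ and $\mathbb{E}\max_j |U_j|$ through their variances.

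\emph{Centering.} I take each $W_i$ to be centered. This covers the noise parts of $U_j$ and $V_m$, which are linear in the Gaussian $\bm{\xi}_{ik}$, once the deterministic $\lambda$-terms are split off. For a non-centered vector the inequality is false, since shifting all coordinates by a constant changes the left side but can be absorbed only through the second moments in a non-uniform way. So centering is genuinely part of the hypothesis.

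Set $\sigma := \max_{i}\sqrt{\mathbb{E}(W_i^2)}$. If $\sigma = 0$, every $W_i$ is zero a.s. and there is nothing to prove. The case $n=1$ is also trivial, because $\mathbb{E}W_1 = 0$. So assume $n \ge 2$ and $\sigma>0$.

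\begin{enumerate}
\item Fix $t>0$. By Jensen's inequality applied to the convex map $x \mapsto e^{tx}$, and then bounding the exponential of a maximum by a sum,
\begin{align*}
e^{t\,\mathbb{E}(\max_i W_i)} \le \mathbb{E}\, e^{t \max_i W_i} \le \sum_{i=1}^{n} \mathbb{E}\, e^{tW_i}.
\end{align*}
\item Since each $W_i \sim \mathcal{N}(0,\sigma_i^2)$ with $\sigma_i \le \sigma$, the Gaussian moment generating function gives $\mathbb{E}\,e^{tW_i} = e^{t^2\sigma_i^2/2} \le e^{t^2\sigma^2/2}$. Hence, after taking logarithms,
\begin{align*}
\mathbb{E}\big(\max_i W_i\big) \le \frac{\log(n)}{t} + \frac{t\,\sigma^2}{2}.
\end{align*}
\item Choose $t = \sqrt{2\log(n)}/\sigma$. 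This gives $\mathbb{E}(\max_i W_i) \le \sigma\sqrt{2\log(n)}$. Since $\sqrt{2} < 3$, the claimed bound follows.
\end{enumerate}

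Joint Gaussianity of the vector is not used; only the marginal Gaussian tails enter. The only delicate point is the hypothesis bookkeeping above, namely centering and the second-moment exponent, rather than any computation.

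If the lemma is later applied to $\max_i |W_i|$, as is needed for the events $\mathcal{M}(V)$ and $\mathcal{M}(U)$, apply the same argument to the $2n$ centered Gaussians $\pm W_i$. This yields $\sigma\sqrt{2\log(2n)} \le 3\sigma\sqrt{\log(n)}$ for $n \ge 2$, since $2\log(2n) \le 9\log(n)$ there.
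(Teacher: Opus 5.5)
Your argument is correct. It also does something the paper does not: the paper states this lemma ``without proof'' and imports it as a black box from Ledoux--Talagrand, whereas you give a self-contained Chernoff/union-bound proof.

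Your reading of the exponent as $2$ is the right one, since the very next display in the paper uses $\sqrt{\mathbb{E}(W_i^2)}$. Centering is implicitly assumed in the paper as well: its step $\mathbb{E}\max_i|W_i|\le\mathbb{E}|W_1|+2\,\mathbb{E}\max_i W_i$ relies on the symmetry $-W\overset{d}{=}W$.

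Your route buys three things:
\begin{itemize}
\item the sharper constant $\sqrt{2}$ in place of $3$;
\item no use of joint Gaussianity, since only the marginal moment generating functions enter;
\item via the $2n$ variables $\pm W_i$, a direct bound $\mathbb{E}\max_i|W_i|\le 3\sqrt{\log(n)}\max_i\sqrt{\mathbb{E}(W_i^2)}$ for $n\ge 2$.
\end{itemize}
The last point improves the constant $8$ that the paper gets by combining the black-box lemma with the symmetrization step. Either constant suffices for the proof of Theorem~\ref{thm:consistency.elastic.net.estimator:beta_k}.

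One side remark of yours is overstated: centering is essential only when $n=1$, for example $W_1\equiv 1$, where the right-hand side vanishes. For $n\ge 2$, your own computation with $\mathbb{E}e^{tW_i}=e^{t\mu_i+t^2\sigma_i^2/2}$ gives
\[
\mathbb{E}\max_i W_i\le \max_i\mu_i+\sqrt{2\log(n)}\max_i\sigma_i\le\big(1+\sqrt{2\log(n)}\big)\max_i\sqrt{\mathbb{E}(W_i^2)} .
\]
Since $1+\sqrt{2\log(n)}\le 3\sqrt{\log(n)}$ once $n\ge 2$, the lemma as stated (with exponent $2$) holds for arbitrary Gaussian vectors with at least two coordinates. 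This does not affect the application, where $\widetilde{V}_m$ and $W_j$ are centered anyway.
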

With this lemma, we have when $n > 1$, 
\begin{align}
	\mathbb{E} \Big( \max_{1 \leq i \leq n} \big| W_{i} \big| \Big)   &\leq    \mathbb{E} \big( |W_{1}| \big) + 2 \mathbb{E} \Big( \max_{1 \leq i \leq n} W_{i} \Big)   
	\nonumber   
	\\
	&\leq   \sqrt{ \mathbb{E} \big( W_{1}^{2} \big) }  +  6 \sqrt{\log(n)} \max_{1 \leq i \leq n} \sqrt{ \mathbb{E} \big( W_{i}^{2} \big) }  
	\nonumber   
	\\
	&\leq  8 \sqrt{\log(n)} \max_{1 \leq i \leq n}  \sqrt{ \mathbb{E} \big( W_{i}^{2} \big) } ,
	\label{lem:concentration.inequality:Ledoux.Talagrand.1991:inequa2}  
\end{align} 
where the first inequality comes from~\cite{Ledoux-Talagrand:1991}, the second from Jensen's inequality along with Lemma~\ref{lem:concentration.inequality:Ledoux.Talagrand.1991}, and the third from the fact that $2 \log(n) >  1$ when $n > 1$.   

\medskip 

We now start to prove Theorem~\ref{thm:consistency.elastic.net.estimator:beta_k}. 
\begin{proof}[Proof of Theorem~\ref{thm:consistency.elastic.net.estimator:beta_k}]
We adapt the proof of Theorem~1 in~\cite{Jia-Yu:2010:model.consistency.elastic-net}.
	\subsubsection*{Analysis of $\mathcal{M}(V)$}
	We have $V_{m}$ is Gaussian random variable with mean:
	\begin{align*}
		\mu_{V_m} =  \mathbb{E}(V_{m} )  =  \Big(  \sum_{i = 1}^{n} (\M_{i})_{m}  \M_{i,(1)}^{\top}    \Big)  
		\Big( \sum_{\ell = 1}^{n}  \M_{\ell,(1)} \M_{\ell,(1)}^{\top}  + \lambda_{Y,2} \mathbf{I} \Big)^{-1} \Big( \lambda_{Y,1} \overrightarrow{\mathbf{b}}  +  2 \lambda_{Y,2} \bm{\beta}_{k , (1)}   \Big) .
	\end{align*}
	The EIC implies that:
	\begin{align*}  
		\Big| \dfrac{1}{n}  \sum_{i=1}^{n}  \M_{i,(2)} \M_{i,(1)}^{\top}   \Big( \frac{1}{n} \sum_{\ell = 1}^{n}  \M_{\ell,(1)}    \M_{\ell,(1)}^{\top}  +  \dfrac{\lambda_{Y,2}}{n} \mathbf{I}  \Big)^{-1}  \Big[ \textrm{sign} \big( \bm{\beta}_{k , (1)} \big)  +  \dfrac{2 \lambda_{Y,2}}{ \lambda_{Y,1} } \bm{\beta}_{k , (1)}  \big) \Big]    
		\Big|  \leq  1 - \Psi ,     
	\end{align*}
	thus, $\big| \mu_{V_m} \big|  \leq  (1 - \Psi) \lambda_{Y,1}$. 
	Let $\widetilde{V}_{m} :=  2 \sum_{i=1}^{n} \big( \M_{i} \big)_{m} \Big[ 1 - \M_{i,(1)}^{\top} \Big(  \sum_{\ell = 1}^{n}   \M_{\ell,(1)} \M_{\ell,(1)}^{\top}  +  \lambda_{Y,2} \mathbf I \Big)^{-1} \M_{i,(1)}  \Big]  \big( \bm{\xi}_{i}  \big)_{k} $,  
	so $V_{m} = \mu_{V_m} + \widetilde{V}_{m}$. Note that  $\mathcal{M}(V)$ holds if and only if  $\displaystyle{ \max_{m \in D^{c}} } \dfrac{V_{m}}{\lambda_{Y,1}}  \leq  1 $ and  $\displaystyle{ \min_{m \in  D^{c} } }  \dfrac{V_{m}}{\lambda_{Y,1}}  \geq  -1 $. Since:    
	\begin{align*}  
		\dfrac{\max_{m \in D^{c}} V_{m} }{\lambda_{Y,1}}  &=    \dfrac{ \max_{m \in D^{c}} \Big( \mu_{V_m} + \widetilde{V}_{m} \Big) }{\lambda_{Y,1}}  \leq  (1-\Psi) + \dfrac{1}{\lambda_{Y,1}}  \max_{1 \leq m \leq p} \widetilde{V}_{m}  
		\\
		\textrm{ and }   \hspace*{3cm}   &   
		\\
		\dfrac{\min_{m \in D^{c}} V_{m} }{\lambda_{Y,1}}  &=   \dfrac{ \min_{m \in D^{c}} \Big( \mu_{V_m} + \widetilde{V}_{m} \Big) }{\lambda_{Y,1}}  \geq  - (1-\Psi) + \dfrac{1}{\lambda_{Y,1}}  \min_{1 \leq m \leq p} \widetilde{V}_{m};     
	\end{align*}    
	we need to show that:  
	\begin{align*}
		\lim_{n   \rightarrow   +\infty}  \mathbb{P} \Bigg( \Big( \dfrac{1}{\lambda_{Y,1}} \max_{m \in D^{c}} \widetilde{V}_{m}  > \Psi  \Big)  \cup   \Big( \dfrac{1}{\lambda_{Y,1}} \min_{m \in D^{c}} \widetilde{V}_{m}  <  - \Psi  \Big) \Bigg)  =  0  .  
	\end{align*}
	It is, however, sufficient to show that  $\displaystyle{  \lim_{n   \rightarrow   +\infty}  \mathbb{P} \Big( \max_{m \in D^{c}}   \dfrac{ \big|  \widetilde{V}_{m}   \big|  }{ \lambda_{Y,1} }   > \Psi  \Big)  =  0}$. By applying Markov's inequality and~\eqref{lem:concentration.inequality:Ledoux.Talagrand.1991:inequa2}, we obtain: 
	\begin{align*}
		\mathbb{P} \Bigg(    \dfrac{ \displaystyle{  \max_{m \in D^{c}} }   \big|   \widetilde{V}_{m}   \big| }{ \lambda_{Y,1} }  > \Psi  \Bigg)  \leq  \dfrac{ \mathbb{E} \Big( \displaystyle{  \max_{m \in D^{c}}  }  |\widetilde{V}_{m}| \Big) }{ \lambda_{Y,1} \, \Psi }   \leq   \dfrac{ 8 \sqrt{ \log(p - d) } }{ \lambda_{Y,1} \,  \Psi }  \max_{m \in D^{c}} \sqrt{ \mathbb{E} \big( \widetilde{V}_{m}^{2}   \big)  } .  
	\end{align*}   
	Now, by straightforward computation, since $\E(\widetilde{V}_{m}) = 0$, one has $\mathbb{E} \big(  \widetilde{V}_{m}^{2}  \big) =  \textrm{Var} \big(   \widetilde{V}_{m} \big)$, so: 
	\begin{align*}  
		\mathbb{E} \big(  \widetilde{V}_{m}^{2}  \big)  
		&=   4   \sum_{i=1}^{n}        \Big( \big( \M_{i} \big)_{m} \Big)^{2}    \Big[  1 -  \M_{i,(1)}^{\top}  \Big(  \sum_{\ell = 1}^{n}    \M_{\ell,(1)}    \M_{\ell,(1)}^{\top}  + \lambda_{Y,2} \mathbf{I}  \Big)^{-1} \M_{i,(1)}  \Big]^{2}    \textrm{Var} \Big(   \big( \bm{\xi}_{i} \big)_{k}   \Big) 
		\\ 
		&\leq  \dfrac{4}{n^{2}} \sum_{i=1}^{n}       \Big[ 2  +  2 \Lambda_{max}  \Big(  \Big( \dfrac{1}{n} \sum_{\ell = 1}^{n}    \M_{\ell,(1)}    \M_{\ell,(1)}^{\top}  + \dfrac{ \lambda_{Y,2} }{n} \mathbf{I}  \Big)^{-1} \Big)^{2} \left\| \M_{i,(1)} \right\|_{2}^{2} \Big]
		\\ 
		&\leq  \dfrac{8}{n}    \Big[ 1  +   \Lambda_{max}  \Big(  \Big( \dfrac{1}{n}   \sum_{\ell = 1}^{n}    \M_{\ell,(1)}    \M_{\ell,(1)}^{\top}  + \dfrac{ \lambda_{Y,2} }{n} \mathbf{I}  \Big)^{-1} \Big)^{2} \,  d \, \left\|  \M \right\|_{\infty}^{2} \Big], 
	\end{align*}
	where $\Lambda_{max}(\mathbf{A})$ denotes the largest eigenvalue of matrix $\mathbf{A}$ and $\left\|  \M \right\|_{\infty} = \underset{1 \leq i \leq n, 1 \leq k \leq p}{\max}  | (\M_{i})_{k} |$. 
	Then, we get:
	\begin{align*}
		& \mathbb{P} \Bigg(    \dfrac{ \displaystyle{  \max_{m \in D^{c}} }   \big|   \widetilde{V}_{m}   \big| }{ \lambda_{Y,1} }  > \Psi  \Bigg)  
		\\  
		&\leq   \dfrac{ 8 \sqrt{ \log(p - d) } }{ \lambda_{Y,1} \,  \Psi }  \max_{m} \sqrt{ \mathbb{E} \big( \widetilde{V}_{m}^{2}   \big)  } 
		\\
		&\leq  \dfrac{ 32  \sqrt{ \log(p - d) }  \,  \sqrt{ 1  +   \Lambda_{max}  \Big(   \Big( \frac{1}{n}  \sum_{\ell = 1}^{n}    \M_{\ell,(1)}    \M_{\ell,(1)}^{\top}  + \frac{ \lambda_{Y,2} }{n} \mathbf{I}  \Big)^{-1} \Big)^{2} \,  d \, \left\|  \M \right\|_{\infty}^{2} }  }{ \lambda_{Y,1} \,  \Psi \,  \sqrt{n}  }  \overset{n  \rightarrow  +\infty}{\longrightarrow} 0 ,   
	\end{align*}
	under the condition (a) of Theorem~\ref{thm:consistency.elastic.net.estimator:beta_k}. 
	\subsubsection*{Analysis of $\mathcal{M}(U)$}
	Let $W_{j} :=  \sum_{i = 1}^{n} e_{j}^{\top} \big( \sum_{\ell = 1}^{n} \M_{\ell , (1)} \M_{\ell , (1)}^{\top}  + \lambda_{Y,2} \mathbf{I} \big)^{-1}  \M_{i,(1)} (\bm{\xi
    }_{i})_{k} $ , so that: 
	\begin{align*}
		\max_{j \in  D} |U_{j}|  &=  \max_{j \in  D} \Big| W_{j} - \dfrac{1}{2}     e_{j}^{\top} \Big(  \sum_{\ell = 1}^{n} \M_{\ell , (1)} \M_{\ell , (1)}^{\top}  +   \lambda_{Y,2} \mathbf{I} \Big)^{-1}    \lambda_{Y,1} \overrightarrow{\mathbf{b}} 
		\Big|
		\\
		&\leq  
		\max_{j \in  D} |W_{j}|  +  \dfrac{1}{2} \lambda_{Y,1} \left\|   \Big(  \sum_{\ell = 1}^{n} \M_{\ell , (1)}  \M_{\ell , (1)}^{\top}  +   \lambda_{Y,2} \mathbf{I} \Big)^{-1}    \overrightarrow{\mathbf{b}}   \right\|_{\infty}.
	\end{align*}
    We have: 
	\begin{align*}
		& \Var(W_{j})  
		\\  
		&=     \sum_{i = 1}^{n} \mathbf{e}_{j}^{\top} \Big(  \sum_{\ell = 1}^{n} \M_{\ell , (1)} \M_{\ell , (1)}^{\top}  +  \lambda_{Y,2} \mathbf{I} \Big)^{-1}  \Big(  \M_{i,(1)} \M_{i,(1)}^{\top}   \Big)  \Big(  \sum_{\ell = 1}^{n} \M_{\ell , (1)} \M_{\ell , (1)}^{\top}   +   \lambda_{Y,2} \mathbf{I} \Big)^{-1}  \mathbf{e}_{j}
		\\
		&=    \Big\{     \mathbf{e}_{j}^{\top} \Big(  \sum_{\ell = 1}^{n} \M_{\ell , (1)} \M_{\ell , (1)}^{\top}  +  \lambda_{Y,2} \mathbf{I} \Big)^{-1}  \Big(    \sum_{i = 1}^{n}  \M_{i,(1)} \M_{i,(1)}^{\top}  +  \lambda_{Y,2} \mathbf{I}  \Big)  
		\\
		& \hspace{10cm}  \Big( \sum_{\ell = 1}^{n} \M_{\ell , (1)} \M_{\ell , (1)}^{\top}   +   \lambda_{Y,2} \mathbf{I} \Big)^{-1}  \mathbf{e}_{j}  \Big\}
		\\
		& \hspace{1cm} -        \mathbf{e}_{j}^{\top} \Big(  \sum_{\ell = 1}^{n} \M_{\ell , (1)} \M_{\ell , (1)}^{\top}  +  \lambda_{Y,2} \mathbf{I} \Big)^{-1}  \Big( \lambda_{Y,2} \mathbf{I}  \Big)  \Big(  \sum_{\ell = 1}^{n} \M_{\ell , (1)} \M_{\ell , (1)}^{\top}   +   \lambda_{Y,2} \mathbf{I} \Big)^{-1}  \mathbf{e}_{j} 
		\\
		&\leq  \dfrac{1}{n}   \mathbf{e}_{j}^{\top} \Big( \dfrac{1}{n} \sum_{\ell = 1}^{n} \M_{\ell , (1)} \M_{\ell , (1)}^{\top}  + \dfrac{ \lambda_{Y,2} }{n} \mathbf{I} \Big)^{-1}  \mathbf{e}_{j}
		\\
		&\leq   \dfrac{1}{n C_{\min}} ; 
	\end{align*}
	Then, using~\eqref{lem:concentration.inequality:Ledoux.Talagrand.1991:inequa2} we obtain: 
	\begin{align*}
		\E \Big[ \max_{j \in  D} \big| W_{j} \big| \Big]  \leq  8 \sqrt{ \dfrac{ \log(d) }{n \,  C_{\min}} } . 
	\end{align*}
	This implies that: 
	\begin{align*}  
		\mathbb{P} \Big[ \max_{j \in  D} \big| U_{j} \big| \geq  \rho \Big]
		&\leq  \mathbb{P} \Bigg[  \dfrac{1}{\rho} \Bigg\{  \max_{j \in  D} \big| W_{j} \big| + \dfrac{\lambda_{Y,1}}{2}  \left\|  \big(  \sum_{\ell = 1}^{n} \M_{\ell , (1)} \M_{\ell , (1)}^{\top} +  \lambda_{Y,2} \mathbf{I} \big)^{-1}    \overrightarrow{\mathbf{b}}   \right\|_{\infty}
		\Bigg\}  \geq  1  \Bigg]
		\\
		&\leq   \dfrac{1}{\rho}  \Bigg\{ \E \Big[  \max_{j \in  D} \big| W_{j} \big| \Big]  + \dfrac{\lambda_{Y,1}}{2}  \left\|  \big(  \sum_{\ell = 1}^{n} \M_{\ell , (1)} \M_{\ell , (1)}^{\top} +  \lambda_{Y,2} \mathbf{I} \big)^{-1}    \overrightarrow{\mathbf{b}}   \right\|_{\infty}
		\Bigg\}  
		\\
		&\leq   \dfrac{1}{\rho}  \Bigg\{ 8 \sqrt{ \dfrac{ \log(d)   }{ n C_{\min} } }  + \dfrac{\lambda_{Y,1}}{2}  \left\|  \big(  \sum_{\ell = 1}^{n} \M_{\ell , (1)} \M_{\ell , (1)}^{\top} +  \lambda_{Y,2} \mathbf{I} \big)^{-1}    \overrightarrow{\mathbf{b}}   \right\|_{\infty}
		\Bigg\}  
	\end{align*}
	which converges to $0$ as $n \rightarrow \infty$ under the condition (b)  of Theorem~\ref{thm:consistency.elastic.net.estimator:beta_k}.
	This concludes the proof of Theorem~\ref{thm:consistency.elastic.net.estimator:beta_k}. 
\end{proof}
\subsection{Proof of the asymptotic normality}
\label{sup:asymptotic_normality}
\subsubsection{Proof of Proposition~\ref{prop:upper-bound:elastic.net.beta_{k}}}   
\label{sec:proof:prop:upper-bound:elastic.net.beta_{k}}
We adapt the proof  in~\cite[Theorem  3.1]{Zou-Zhang:2009}.  
\begin{proof}[Proof of Proposition~~\ref{prop:upper-bound:elastic.net.beta_{k}}]   
	Let us define: 
	\begin{align}
		\widehat{\bm{\beta}}_{k} ( 0 ; \lambda_{Y,2} )  =  \underset{   {\beta_{k} } }{ \textrm{argmin} }  \Bigg[ \sum_{i=1}^{n}  \left(   \big(  (\Y_{i})_{k}   -     \M_{i}^{\top}\bm{\beta}_{k}  -   \X_{i}^{\top} \bm{\gamma}_{k}   -   \Z_{i}^{\top} \bm{\eta}_{k}     
		\big)^{2}   
		\right)   
		+  \lambda_{Y,2}  \big\|  \bm{\beta}_{k}   \big\|_{2}^{2}  \Bigg]  .
	\end{align}
	We have: 
	\begin{align*}  &   \sum_{i=1}^{n}  \left(   \big(  (\Y_{i})_{k}   -    \M_{i}^{\top} \widehat{\bm{\beta}}_{k}  -   \X_{i}^{\top} \bm{\gamma}_{k}   -   \Z_{i}^{\top} \bm{\eta}_{k}     
		\big)^{2}   
		\right)   
		+  \lambda_{Y,2}  \big\|  \widehat{\bm{\beta}}_{k}   \big\|_{2}^{2}   
		\\
		\geq  &   \,   
		\sum_{i=1}^{n}  \left(   \big(  (\Y_{i})_{k}  -  \M_{i}^{\top} \big(  \widehat{\bm{\beta}}_{k} ( 0 ; \lambda_{Y,2}  )   \big)  -   \X_{i}^{\top} \bm{\gamma}_{k}   -   \Z_{i}^{\top} \bm{\eta}_{k}     
		\big)^{2}   
		\right)   
		+  \lambda_{Y,2}  \big\|   \widehat{\bm{\beta}}_{k} ( 0 ; \lambda_{Y,2} )    \big\|_{2}^{2}   
	\end{align*}
	and  
	\begin{align*}
		&   \sum_{i=1}^{n}    \left(  \big(    (\Y_{i})_{k}  -  \M_{i}^{\top} \big(  \widehat{\bm{\beta}}_{k} ( 0 ; \lambda_{Y,2}  )   \big)  -   \X_{i}^{\top} \bm{\gamma}_{k}   -   \Z_{i}^{\top} \bm{\eta}_{k}     
		\big)^{2}   
		\right)   
		+  \lambda_{Y,2}  \big\|   \widehat{\bm{\beta}}_{k} ( 0 ; \lambda_{Y,2} )    \big\|_{2}^{2}   +  \lambda_{Y,1}  \big\|   \widehat{\bm{\beta}}_{k} ( 0 ; \lambda_{Y,2} )    \big\|_{1}
		\\
		\geq   &  \,  
		\sum_{i=1}^{n}  \left(   \big(  (\Y_{i})_{k}   -    \M_{i}^{\top} \widehat{\bm{\beta}}_{k}  -   \X_{i}^{\top} \bm{\gamma}_{k}   -   \Z_{i}^{\top} \bm{\eta}_{k}     
		\big)^{2}   
		\right)   
		+   \lambda_{Y,2}    \big\|   \widehat{\bm{\beta}}_{k}    \big\|_{2}^{2}   +   \lambda_{Y,1}  \big\|  \widehat{\bm{\beta}}_{k}    \big\|_{1}    
	\end{align*}
	this implies that:  
	\begin{align}
		\lambda_{Y,1}  \Big(   \big\|   \widehat{\bm{\beta}}_{k} (0;\lambda_{Y,2})    \big\|_{1}  -  \big\|   \widehat{\bm{\beta}}_{k}    \big\|_{1}   \Big)  
		&\geq      
		\Big[ \sum_{i=1}^{n}  \left(  \big(  (\Y_{i})_{k}   -    \M_{i}^{\top} \widehat{\bm{\beta}}_{k}  -   \X_{i}^{\top} \bm{\gamma}_{k}   -   \Z_{i}^{\top} \bm{\eta}_{k}     
		\big)^{2}   
		\right)   
		+   \lambda_{Y,2}    \big\|   \widehat{\bm{\beta}}_{k}    \big\|_{2}^{2} \Big]   
		\nonumber    
		\\
		& \qquad - \Big[     \sum_{i=1}^{n}  \left(   \big(  (\Y_{i})_{k}   -     \M_{i} ^{\top}\big(   \widehat{\bm{\beta}}_{k} (  0 ; \lambda_{Y,2}  )   \big)  -   \X_{i}^{\top} \bm{\gamma}_{k}   -   \Z_{i}^{\top} \bm{\eta}_{k}     
		\big)^{2}   
		\right)    \nonumber  
		\\
		& \qquad  \qquad  +    \lambda_{Y,2}    \big\|   \widehat{\bm{\beta}}_{k} ( 0 ; \lambda_{Y,2}  )    \big\|_{2}^{2} \Big].  
	\end{align}
	On the other hand, one has:
	\begin{align*}
		&\Big[ \sum_{i=1}^{n} \left(  \big(  (\Y_{i})_{k}   -    \M_{i}^{\top} \widehat{\bm{\beta}}_{k}  -   \X_{i}^{\top} \bm{\gamma}_{k}   -    \Z_{i}^{\top}\bm{\eta}_{k}  \big)^{2}   
		\right)   +   \lambda_{Y,2}  \big\|   \widehat{\bm{\beta}}_{k}   \big\|_{2}^{2} \Big]  
		\\
		& \quad - \Big[    \sum_{i=1}^{n}  \left(   \big(  (\Y_{i})_{k}   -   \M_{i} ^{\top}\big(   \widehat{\bm{\beta}}_{k} (  0 ; \lambda_{Y,2}  )   \big)  -   \X_{i}^{\top} \bm{\gamma}_{k}   -   \Z_{i}^{\top} \bm{\eta}_{k}   
		\big)^{2}  \right)   
		+   \lambda_{Y,2}    \big\|   \widehat{\bm{\beta}}_{k} ( 0 ; \lambda_{Y,2}  )    \big\|_{2}^{2} \Big]
		\\
		=& \, \Big( \widehat{\bm{\beta}}_{k}  - \widehat{\bm{\beta}}_{k} (0 ; \lambda_{Y,2})  \Big)^{\top}  
		\Big( \sum_{i=1}^{n}  \M_{i} \M_{i}^{\top} 
		+ \lambda_{Y,2} \mathbf{I}  \Big) 
		\Big( \widehat{\bm{\beta}}_{k}  - \widehat{\bm{\beta}}_{k} (0 ; \lambda_{Y,2})  \Big)
	\end{align*}   
	and 
	\begin{align*}
		\big\|   \widehat{\bm{\beta}}_{k} (0;\lambda_{Y,2})    \big\|_{1}   -    \big\|   \widehat{\bm{\beta}}_{k}    \big\|_{1}    \leq   \big\|   \widehat{\bm{\beta}}_{k} (0;\lambda_{Y,2})      -   \widehat{\bm{\beta}}_{k}    \big\|_{1}  \leq  \sqrt{p}  \,   \big\|   \widehat{\bm{\beta}}_{k} (0;\lambda_{Y,2})      -   \widehat{\bm{\beta}}_{k}    \big\|_{2} . 
	\end{align*}  
	Note that  $\Lambda_{\min} \big(  \sum_{i=1}^{n} \M_{i} \M_{i}^{\top}  + \lambda_{Y,2} \mathbf{I} \big) = \Lambda_{\min} \big(  \sum_{i=1}^{n} \M_{i} \M_{i}^{\top} \big)  + \lambda_{Y,2}$. 
	Thus, we get:  
	\begin{align*}
		&\Big( \Lambda_{\min} \big( \sum_{i=1}^{n} \M_{i} \M_{i}^{\top} \big)  + \lambda_{Y,2} \Big)  \big\|   \widehat{\bm{\beta}}_{k} (0;\lambda_{Y,2})      -   \widehat{\bm{\beta}}_{k}     \big\|_{2}^{2}   
		\\
		\leq  & \,   \Big( \widehat{\bm{\beta}}_{k}  - \widehat{\bm{\beta}}_{k} (0 ; \lambda_{Y,2})  \Big)^{\top}  
		\Big( \sum_{i=1}^{n} \M_{i} \M_{i}^{\top}   
		+ \lambda_{Y,2} \mathbf{I}  \Big) 
		\Big( \widehat{\bm{\beta}}_{k}  - \widehat{\bm{\beta}}_{k} (0 ; \lambda_{Y,2})  \Big) 
		\\
		\leq  &  \,  \lambda_{Y,1}  \Big(   \big\|   \widehat{\bm{\beta}}_{k} (0;\lambda_{Y,2})    \big\|_{1}  -  \big\|   \widehat{\bm{\beta}}_{k}    \big\|_{1}   \Big)    
		\\
		\leq  &  \,  \lambda_{Y,1} \sqrt{p} \,  \big\|   \widehat{\bm{\beta}}_{k} (0;\lambda_{Y,2})      -   \widehat{\bm{\beta}}_{k}    \big\|_{2},   
	\end{align*} 
	which implies that:
    \begin{align}
		\big\|   \widehat{\bm{\beta}}_{k} (0;\lambda_{Y,2})      -   \widehat{\bm{\beta}}_{k}    \big\|_{2}  
		\leq   \dfrac{ \lambda_{Y,1} \sqrt{p} }{  \Lambda_{\min} \big(  \sum_{i=1}^{n} \M_{i} \M_{i}^{\top} \big)  + \lambda_{Y,2} } .  
		\label{proof:prop:upper-bound:elastic.net.beta_{k}:bound1}
	\end{align}    
	Moreover, note that: 
	\begin{align*}
		\widehat{\bm{\beta}}_{k} ( 0 ; \lambda_{Y,2} ) - \bm{\beta}_{k}    &=    
		\Big( \sum_{\ell = 1}^{n} \M_{\ell}   \M_{\ell}^{\top}  + \lambda_{Y,2} \mathbf{I}      \Big)^{-1}   \sum_{i=1}^{n} \Big[  \M_{i} \M_{i}^{\top}  \bm{\beta}_{k}  +  ( \bm{\xi}_{i} )_{k} \M_{i}   \Big]  -  \bm{\beta}_{k}  
		\\
		&=   - \lambda_{Y,2}  \Big( \sum_{\ell = 1}^{n} \M_{\ell}   \M_{\ell}^{\top}  + \lambda_{Y,2} \mathbf{I}      \Big)^{-1}  \bm{\beta}_{k}  +  \Big(  \sum_{\ell = 1}^{n} \M_{\ell}   \M_{\ell}^{\top}  + \lambda_{Y,2} \mathbf{I}      \Big)^{-1} \Big(  \sum_{i=1}^{n}  (\bm{\xi}_{i})_{k}  \M_{i} \Big).
	\end{align*}  
	Thus:
	\begin{align}  
		&\E \Big( \big\|   \widehat{\bm{\beta}}_{k} (0;\lambda_{Y,2})      -   \bm{\beta}_{k}    \big\|_{2}^{2} \Big)
		\nonumber  
		\\
		\leq & \,  2 \lambda_{Y,2}^{2}   \big\|  \Big(  \sum_{\ell = 1}^{n} \M_{\ell}   \M_{\ell}^{\top}  + \lambda_{Y,2} \mathbf{I}      \Big)^{-1}  \bm{\beta}_{k}  \big\|_{2}^{2}   
		+  2 \E  \Big( \big\|  \Big(  \sum_{\ell = 1}^{n} \M_{\ell}   \M_{\ell}^{\top}  + \lambda_{Y,2} \mathbf{I}      \Big)^{-1} \Big( \sum_{i=1}^{n}  (\bm{\xi}_{i})_{k}  \M_{i}   \Big)   \big\|_{2}^{2}  \Big)
		\nonumber  
		\\
		\leq & \,  2 \lambda_{Y,2}^{2}  \Big(  \Lambda_{\min} \big( \sum_{\ell = 1}^{n} \M_{\ell}   \M_{\ell}^{\top}  \big) + \lambda_{Y,2}  \Big)^{-2}   \big\| \bm{\beta}_{k}   \big\|_{2}^{2}    \nonumber
		\\
		& \qquad  +   4np \Big(  \Lambda_{\min} \big( \sum_{\ell = 1}^{n} \M_{\ell}   \M_{\ell}^{\top}  \big) + \lambda_{Y,2}  \Big)^{-2}   \E \Big( (\bm{\xi}_{1})_{k}^{2} \Big)  \big\|  
        \M \big\|_{\infty}^{2}
		\nonumber  
		\\
		\leq &  \,  2  \Big(  \Lambda_{\min} \big(  \sum_{\ell = 1}^{n} \M_{\ell}   \M_{\ell}^{\top}  \big) + \lambda_{Y,2}  \Big)^{-2}   \Big(  \lambda_{Y,2}^{2}   \big\| \bm{\beta}_{k}   \big\|_{2}^{2}  +  2 \,n p  \,     \big\| \M \big\|^2_{\infty} \Big) .
		\label{proof:prop:upper-bound:elastic.net.beta_{k}:bound2}
	\end{align} 
	Therefore,   combining~\eqref{proof:prop:upper-bound:elastic.net.beta_{k}:bound1} and~\eqref{proof:prop:upper-bound:elastic.net.beta_{k}:bound2}, we  obtain: 
	\begin{align*}
		\E \Big( \big\|  \widehat{\bm{\beta}}_{k}  -  \bm{\beta}_{k}  \big\|_{2}^{2} \Big)   
		\leq &  \, 2 \E \Big( \big\|  \widehat{\bm{\beta}}_{k}(0 ; \lambda_{Y,2})  -  \widehat{\bm{\beta}}_{k} \big\|_{2}^{2} \Big)    +  2 \E \Big( \big\|    \bm{\beta}_{k}   -  \widehat{\bm{\beta}}_{k} (0 ; \lambda_{Y,2}) \big\|_{2}^{2} \Big)    
		\\
		\leq & \,   \dfrac{ \lambda_{Y,1}^{2} p }{ \Big(  \Lambda_{\min} \big(  \sum_{i=1}^{n} \M_{i} \M_{i}^{\top} \big)  +   \lambda_{Y,2}  \Big)^{2} }  
		\\ &  \qquad \qquad   +
		4  \Big(  \Lambda_{\min} \big(  \sum_{\ell = 1}^{n} \M_{\ell}   \M_{\ell}^{\top}  \big) + \lambda_{Y,2}  \Big)^{-2}   \Big(  \lambda_{Y,2}^{2}   \big\| \bm{\beta}_{k}   \big\|_{2}^{2}  +  2 \, np  \,    \big\| \M \big\|_{\infty} \Big)   
			%
		\\
		\leq  &  \,  
		\dfrac{ \lambda_{Y,1}^{2} p  +  4  \lambda_{Y,2}^{2}   \big\| \bm{\beta}_{k}   \big\|_{2}^{2}  +     8 \,n p  \,     \big\| \M \big\|^2_{\infty} 
		}{ \big( \delta \,  n  +   \lambda_{Y,2}  \big)^{2} },
	\end{align*} 
	where we have used condition (A1) in the last inequality.  This conclude the proof of Proposition~\ref{prop:upper-bound:elastic.net.beta_{k}}. 
\end{proof}   
\subsubsection{Proof of Theorem~\ref{thm:asymptotic.normality.elastic.net.estimator:beta_k}}
\label{sec:proof:thm:asymptotic.normality.elastic.net.estimator:beta_k}
We adapt the proof in~\cite[Theorem 3.3]{Zou-Zhang:2009}. 
For $\lambda_{Y,1}, \lambda_{Y,2} \geq  0$,  define  $\check{\bm{\beta}}_{k,(1)} \equiv  \check{\bm{\beta}}_{k,(1)} \big( \lambda_{Y,1} , \lambda_{Y,2} \big) \in \R^{d \times 1}$ by:  
\begin{align}
	\check{\bm{\beta}}_{k,(1)} = & \,  \underset{ \widetilde{\bm{\beta}}_{k,(1)}    }{\textrm{argmin}}   
	\Bigg[  \sum_{i=1}^{n}  \left(   \big(  (\Y_{i})_{k}   -     \M_{i,(1)} ^{\top}\widetilde{\bm{\beta}}_{k,(1)} -     \X_{i }^{\top}  \bm{\gamma}_{k}   -    \Z_{i }^{\top} \bm{\eta}_{k}     
	\big)^{2}   
	\right)   
	+  \lambda_{Y,2}  \big\|  \widetilde{ \bm{\beta}}_{k, (1)}   \big\|_{2}^{2}  +  \lambda_{Y,1}  \big\|  \widetilde{\bm{\beta}}_{k, (1)}   \big\|_{1}   \Bigg]  \,  .   
	\label{proof-asymptotic.normality:check{beta}_{k,(1)}.definition}    
\end{align}
First, we present  some  preliminary results for $\check{\bm{\beta}}_{k,(1)}$. 
\begin{proposition}
	\label{prop:proof.asymptotic.normality:check{beta}_{k,(1)}}
	With probability tending to $1$, $\big( \check{\bm{\beta}}_{k,(1)} , 0 \big)$ is the solution to~\eqref{equa:elastic.net:loss.function:beta_k}
	. 
\end{proposition}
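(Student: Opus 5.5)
We verify directly that the padded vector $(\check{\bm{\beta}}_{k,(1)},0)$ satisfies the full KKT system of the convex problem \eqref{equa:elastic.net:loss.function:beta_k}, mirroring the proof of Lemma~\ref{lem:KKT-condition}. Since $\check{\bm{\beta}}_{k,(1)}$ minimizes the restricted problem \eqref{proof-asymptotic.normality:check{beta}_{k,(1)}.definition}, the first $d$ stationarity equations hold automatically. The subgradient $\check\Upsilon_{(1)}\in[-1,1]^d$ is the one coming from the restricted problem. Hence $(\check{\bm{\beta}}_{k,(1)},0)$ is a solution of the full problem as soon as the remaining $p-d$ coordinates admit a valid subgradient, that is, as soon as
\begin{align*}
\Big\| 2\sum_{i=1}^n \M_{i,(2)}\Big[\M_{i,(1)}^{\top}\big(\bm{\beta}_{k,(1)}-\check{\bm{\beta}}_{k,(1)}\big) + (\bm{\xi}_i)_k\Big]\Big\|_\infty \le \lambda_{Y,1}.
\end{align*}
So the statement reduces to showing that this event has probability tending to one.

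If $\textrm{sign}(\check{\bm{\beta}}_{k,(1)})=\textrm{sign}(\bm{\beta}_{k,(1)})$, the restricted KKT equations can be solved in closed form:
\begin{align*}
\check{\bm{\beta}}_{k,(1)}=\Big(\sum_{\ell}\M_{\ell,(1)}\M_{\ell,(1)}^{\top}+\lambda_{Y,2}\mathbf I\Big)^{-1}\Big[\sum_i\big(\M_{i,(1)}\M_{i,(1)}^{\top}\bm{\beta}_{k,(1)}+\M_{i,(1)}(\bm{\xi}_i)_k\big)-\tfrac{\lambda_{Y,1}}{2}\overrightarrow{\mathbf b}\Big].
\end{align*}
Substituting this into the display above makes the inequality identical to condition \eqref{lem:KKT-condition:condition1}, i.e.\ the event $\mathcal M(V)$. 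The sign agreement itself is condition \eqref{lem:KKT-condition:condition2}, which, by the discussion following Lemma~\ref{lem:KKT-condition} and the standing choice \eqref{lem:KKT-condition:condition2:equivalent.condition:choose.lambdaY2} of $\lambda_{Y,2}$, is implied by $\mathcal M(U)$. Consequently
\begin{align*}
\mathbb P\big((\check{\bm{\beta}}_{k,(1)},0)\text{ solves \eqref{equa:elastic.net:loss.function:beta_k}}\big)\ge 1-\mathbb P(\mathcal M(V)^c)-\mathbb P(\mathcal M(U)^c).
\end{align*}
The proof of Theorem~\ref{thm:consistency.elastic.net.estimator:beta_k} shows both error probabilities vanish, using the EIC together with conditions (a) and (b). Strict convexity, which holds because $\lambda_{Y,2}>0$, makes this solution unique.

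The main obstacle is that the proposition sits among the asymptotic normality results, which assume (A1)--(A6) rather than conditions (a)--(b) and the EIC. If (a)--(b) are not taken as standing assumptions, I would derive them from (A1)--(A6).
\begin{itemize}
\item[(1)] Under (A1), $C_{\min}\ge\delta$, and the norm $\|(\mathbf C_{11}+\tfrac{\lambda_{Y,2}}{n}\mathbf I)^{-1}\overrightarrow{\mathbf b}\|_\infty$ is at most $\sqrt d/\delta$.
\item[(2)] By (A5) and (A6), $\rho$ is bounded below by roughly $\min_j|\beta_{jk}|$ up to $o(1)$, so (b) reduces to $\lambda_{Y,1}/n\to0$ plus a minimal-signal requirement.
\item[(3)] Condition (a) needs $\lambda_{Y,1}\gg\sqrt{\log p}/\sqrt n$. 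By (A4), $\log p\asymp\nu_1\log n$, so this is compatible with $\lambda_{Y,1}=o(\sqrt n)$ from (A5).
\end{itemize}
A minimal-signal condition and the EIC are still needed, and I would state them explicitly as carried-over hypotheses from Theorem~\ref{thm:consistency.elastic.net.estimator:beta_k}.
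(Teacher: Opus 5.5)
Your route differs from the paper's. The paper's proof of this proposition never uses $\mathcal M(U)$, $\mathcal M(V)$ or the EIC. Following \cite{Zou-Zhang:2009}, it bounds $\mathbb P(\bigcup_j\Omega_j^c)$ by two pieces. The first is $\mathbb P(\widehat\varrho\le\varrho/2)$, handled by Markov's inequality and the MSE bound of Proposition~\ref{prop:upper-bound:elastic.net.beta_{k}}. The second is a Chebyshev bound $\frac{4n^2}{\lambda_{Y,1}^2}\E\sum_j|\cdot|^2$ on the inactive KKT residuals. There the part $\M_{i,(1)}^{\top}(\bm{\beta}_{k,(1)}-\check{\bm{\beta}}_{k,(1)})$ is controlled by the ridge-comparison lemma and the explicit ridge error, nominally under (A1)--(A6) only. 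Your reduction is instead what the paper does in the first line of the proof of Theorem~\ref{thm:asymptotic.normality.elastic.net.estimator:beta_k}. It can be shortened further. On the event $\mathcal R$ of Theorem~\ref{thm:consistency.elastic.net.estimator:beta_k}, $\widehat{\bm{\beta}}_{k,(2)}=0$, so $\widehat{\bm{\beta}}_{k,(1)}$ minimizes the restricted objective and equals $\check{\bm{\beta}}_{k,(1)}$ by strict convexity.

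As for what each route buys: the paper's route appears to avoid the EIC, but its final bound does not tend to zero. It contains $\frac{8n^2}{\lambda_{Y,1}^2}\cdot\frac{q}{n}\|\M\|_\infty^2$, which diverges under (A5). It also contains $\frac{8n^2}{\lambda_{Y,1}^2}\Delta^2\frac{2\lambda_{Y,1}^2p}{(\delta n+\lambda_{Y,2})^2}\to 16\Delta^2p/\delta^2$, which does not vanish. In Zou--Zhang these terms are suppressed by adaptive weights that are large on inactive coordinates, and the plain elastic net has no such weights. So stating the EIC and a lower bound on $\lambda_{Y,1}$ as explicit hypotheses, as you propose, is correct and necessary, not a weakness.

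Drop bullet (3), though. Its claim that (a) is compatible with (A5) rests on the paper's bound $\E(\widetilde V_m^2)\le\frac{8}{n}[\cdots]$, which carries a spurious factor $n^{-2}$. The printed \eqref{lem:KKT-condition:condition1} and $V_m$ also couple $\M_{i,(1)}$ only with $(\bm{\xi}_i)_k$ of the same $i$, so your ``identical to \eqref{lem:KKT-condition:condition1}'' holds only for the correctly derived form. Substituting your closed form correctly, the random part of the inactive residual is $2\mathbf{c}_m^{\top}(\mathbf I-\mathbf H)\bm{\xi}_{\cdot k}$. Here $\mathbf c_m=((\M_1)_m,\dots,(\M_n)_m)^{\top}$, $\bm{\xi}_{\cdot k}=((\bm{\xi}_1)_k,\dots,(\bm{\xi}_n)_k)^{\top}$, and $\mathbf H$ is the ridge hat matrix of the active columns. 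Its variance is typically of order $n$, and exactly $4n$ for an inactive column orthogonal to the active ones. Hence $\mathcal M(V)$ needs $\lambda_{Y,1}/\sqrt n\to\infty$ (the maximal-inequality argument needs $\lambda_{Y,1}\gg\sqrt{n\log(p-d)}$), which contradicts (A5). In an orthogonal design the inactive coefficient is zero with probability $\mathbb P(|\mathcal N(0,1)|\le\lambda_{Y,1}/(2\sqrt n))\to 0$ when $\lambda_{Y,1}=o(\sqrt n)$. Your argument therefore proves the proposition under the correctly rescaled sign-consistency hypotheses, which is the most that can be proved. It does not hold jointly with (A5), as the paper intends.
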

\begin{lem}
	\label{lem:proof-asymptotic.normality:check{beta}_{k,(1)}.bound}
	For any $\lambda_{Y,1}, \lambda_{Y,2} > 0$, 
	\begin{align*}
		\left\| \check{\bm{\beta}}_{k,(1)} -  \check{\bm{\beta}}_{k,(1)} \big( 0 , \lambda_{Y,2} \big)  \right\|_{2}  \leq  
			\dfrac{ \lambda_{Y,1} \sqrt{p} }{\delta n + \lambda_{Y,2} } 
			.   
	\end{align*} 
\end{lem}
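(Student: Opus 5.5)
The bound will follow by copying the argument behind \eqref{proof:prop:upper-bound:elastic.net.beta_{k}:bound1} in the proof of Proposition~\ref{prop:upper-bound:elastic.net.beta_{k}}, applied to the reduced problem \eqref{proof-asymptotic.normality:check{beta}_{k,(1)}.definition}. This problem only involves the active design $\M_{i,(1)}$. Write
\[
F(\mathbf b)=\sum_{i=1}^n\big((\Y_i)_k-\M_{i,(1)}^{\top}\mathbf b-\X_i^{\top}\bm{\gamma}_k-\Z_i^{\top}\bm{\eta}_k\big)^2+\lambda_{Y,2}\|\mathbf b\|_2^2 .
\]
Then $\check{\bm{\beta}}_{k,(1)}(0,\lambda_{Y,2})$ is the unique minimizer of $F$, which is a ridge problem. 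Likewise, $\check{\bm{\beta}}_{k,(1)}$ minimizes $F+\lambda_{Y,1}\|\cdot\|_1$.

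First, compare the two minimality statements. Optimality of $\check{\bm{\beta}}_{k,(1)}$ for the penalized objective, evaluated against the competitor $\check{\bm{\beta}}_{k,(1)}(0,\lambda_{Y,2})$, gives
\[
F(\check{\bm{\beta}}_{k,(1)})-F(\check{\bm{\beta}}_{k,(1)}(0,\lambda_{Y,2}))\le\lambda_{Y,1}\big(\|\check{\bm{\beta}}_{k,(1)}(0,\lambda_{Y,2})\|_1-\|\check{\bm{\beta}}_{k,(1)}\|_1\big).
\]
Second, since $F$ is an exact quadratic with Hessian $2(\sum_i\M_{i,(1)}\M_{i,(1)}^{\top}+\lambda_{Y,2}\mathbf I)$ and its gradient vanishes at the ridge minimizer, the left side equals
\[
\mathbf u^{\top}\Big(\sum_i\M_{i,(1)}\M_{i,(1)}^{\top}+\lambda_{Y,2}\mathbf I\Big)\mathbf u,\qquad \mathbf u=\check{\bm{\beta}}_{k,(1)}-\check{\bm{\beta}}_{k,(1)}(0,\lambda_{Y,2}).
\]
This is the same identity used in the proof of Proposition~\ref{prop:upper-bound:elastic.net.beta_{k}}. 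By (A1), the quadratic form is at least $(\delta n+\lambda_{Y,2})\|\mathbf u\|_2^2$.

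Third, bound the right side. The reverse triangle inequality and Cauchy--Schwarz give it as at most $\lambda_{Y,1}\|\mathbf u\|_1\le\lambda_{Y,1}\sqrt d\,\|\mathbf u\|_2\le\lambda_{Y,1}\sqrt p\,\|\mathbf u\|_2$, using $d<p$. Dividing by $\|\mathbf u\|_2$ (the case $\mathbf u=0$ is trivial) yields the claim.

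There is no real obstacle here. The only point needing care is the exact quadratic expansion in the second step, which relies on the first-order optimality of the ridge solution. The use of (A1) for $\Lambda_{\min}(\sum_i\M_{i,(1)}\M_{i,(1)}^{\top})\ge\delta n$ is also legitimate here precisely because only the active block appears. Notably, the bound holds deterministically, for every realization of the noise.
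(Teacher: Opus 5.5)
Your proposal is correct and follows essentially the same argument as the paper. Both compare the elastic-net and ridge optimality inequalities, identify the difference of ridge objectives as the quadratic form $\mathbf u^{\top}\big(\sum_i \M_{i,(1)}\M_{i,(1)}^{\top}+\lambda_{Y,2}\mathbf{I}\big)\mathbf u$, lower-bound it by $(\delta n+\lambda_{Y,2})\|\mathbf u\|_2^2$ via (A1), and upper-bound the $\ell_1$ gap by $\sqrt{p}\,\|\mathbf u\|_2$; your explicit appeal to the vanishing ridge gradient and the step $\sqrt d\le\sqrt p$ are small clarifications the paper leaves implicit (one line of its chain even writes the full design $\sum_i \M_i\M_i^{\top}$ where the active block is meant).
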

\noindent   
Proofs of Proposition~\ref{prop:proof.asymptotic.normality:check{beta}_{k,(1)}} and Lemma~\ref{lem:proof-asymptotic.normality:check{beta}_{k,(1)}.bound} are given in Section~\ref{sec:proof:prop:proof.asymptotic.normality:check{beta}_{k,(1)}} and~\ref{sec:proof:lem:proof-asymptotic.normality:check{beta}_{k,(1)}.bound}, respectively. 

\medskip

Now, we can start the proof of  Theorem~\ref{thm:asymptotic.normality.elastic.net.estimator:beta_k}. 
\begin{proof}[Proof of Theorem~\ref{thm:asymptotic.normality.elastic.net.estimator:beta_k}]
	For any $1 \leq k \leq T$, by Theorem~\ref{thm:consistency.elastic.net.estimator:beta_k}, with probability tending to $1$, the estimator of $\bm{\beta}_{k}$ is equal to $\Big( \check{\bm{\beta}}_{k,(1)} , 0  \Big)$.   
	Let:  
	\begin{align*}
		\Xi_{n} = \mathbf{v}^{\top} \Big( \mathbf{I} + \lambda_{Y,2}  \big( \sum_{\ell = 1}^{n} \M_{\ell,(1)}  \M_{\ell,(1)}^{\top} \big)^{-1}  \Big)  
		\Big( \sum_{\ell = 1}^{n} \M_{\ell,(1)}  \M_{\ell,(1)}^{\top} \Big)^{1/2}  \Big( \check{\bm{\beta}}_{k,(1)} - \bm{\beta}_{k,(1)} \Big)  ,     
	\end{align*}
	where $\mathbf{v}$ is a vector of norm $1$. 
	We have  
	\begin{align*}
		&\mathbf{v}^{\top}  \Big(  \mathbf{I} + \lambda_{Y,2} \big(  \sum_{\ell = 1}^{n} \M_{\ell,(1)}  \M_{\ell,(1)}^{\top} \big)^{-1} \Big) \, 
		\Big( \sum_{\ell = 1}^{n} \M_{\ell,(1)}  \M_{\ell,(1)}^{\top} \Big)^{1/2}  \Big( \check{\bm{\beta}}_{k,(1)} - \bm{\beta}_{k,(1)} \Big)  
		\\
		=& \,  \mathbf{v}^{\top}  \Big( \mathbf{I} + \lambda_{Y,2} \big(  \sum_{\ell = 1}^{n} \M_{\ell,(1)}  \M_{\ell,(1)}^{\top} \big)^{-1}  \Big) \,      
		\Big( \sum_{\ell = 1}^{n} \M_{\ell,(1)}  \M_{\ell,(1)}^{\top} \Big)^{1/2}  \Big( \check{\bm{\beta}}_{k,(1)} -  \check{\bm{\beta}}_{k,(1)} \big( 0 , \lambda_{Y,2} \big)  \Big) 
		\\
		& + \mathbf{v}^{\top}  \Big( \mathbf{I} + \lambda_{Y,2} \big( \sum_{\ell = 1}^{n} \M_{\ell,(1)}  \M_{\ell,(1)}^{\top} \big)^{-1}  \Big) \,      
		\Big( \sum_{\ell = 1}^{n} \M_{\ell,(1)}  \M_{\ell,(1)}^{\top} \Big)^{1/2}  \Big(  \check{\bm{\beta}}_{k,(1)} \big( 0 , \lambda_{Y,2} \big)  -  \bm{\beta}_{k, (1)} \Big) .
	\end{align*}
	Moreover, solving~\eqref{proof-asymptotic.normality:check{beta}_{k,(1)}.definition} with $\lambda_{Y,1} = 0$ and $\lambda_{Y,2} > 0$ gives:   
	\begin{align*}
		\check{\bm{\beta}}_{k,(1)} \big( 0 , \lambda_{Y,2} \big)
		%
		=  \Big( \sum_{\ell = 1}^{n} \M_{\ell,(1)}   \M_{\ell , (1)}^{\top}  + \lambda_{Y,2} I      \Big)^{-1}  \sum_{i=1}^{n} \Big[      \M_{i,(1)} \M_{i,(1)}^{\top}  \bm{\beta}_{k,(1)}  +  ( \xi_{i} )_{k} \M_{i,(1)}   \Big];
	\end{align*}    
	thus:   
	\begin{align*}
		& \Big( \mathbf{I} + \lambda_{Y,2} \big(  \sum_{\ell = 1}^{n} \M_{\ell,(1)}  \M_{\ell,(1)}^{\top} \big)^{-1}  \Big) \,      
		\Big( \sum_{\ell = 1}^{n} \M_{\ell,(1)}  \M_{\ell,(1)}^{\top} \Big)^{1/2}  \Big(  \check{\bm{\beta}}_{k,(1)} \big( 0 , \lambda_{Y,2} \big)  -  \bm{\beta}_{k, (1)} \Big)  
		\\
		=&  - \lambda_{Y,2}   \Big( \sum_{\ell = 1}^{n} \M_{\ell,(1)}  \M_{\ell,(1)}^{\top} \Big)^{-1/2}  \bm{\beta}_{k, (1)}  
		+  \Big( \sum_{\ell = 1}^{n} \M_{\ell,(1)}  \M_{\ell,(1)}^{\top} \Big)^{-1/2}  \Big(  \sum_{i = 1}^{n} \big( \bm{\xi}_{i} \big)_{k}  \M_{i,(1)}  \Big)  . 
	\end{align*}
	Therefore, by  
	Proposition~\ref{prop:proof.asymptotic.normality:check{beta}_{k,(1)}}, it follows that, with probability tending to $1$, $\Xi_{n} = J_{1} + J_{2} + J_{3}$, where: 
	\begin{align*}
		J_{1}  &=  - \big( \lambda_{Y,2} \big)  \mathbf{v}^{\top}  \Big(  \sum_{\ell = 1}^{n} \M_{\ell,(1)}  \M_{\ell,(1)}^{\top} \Big)^{-1/2}  \bm{\beta}_{k, (1)} , 
		\\
		J_{2}  &=    \mathbf{v}^{\top}  \Big( \mathbf{I} + \lambda_{Y,2} \big(  \sum_{\ell = 1}^{n} \M_{\ell,(1)}  \M_{\ell,(1)}^{\top} \big)^{-1}  \Big) \,      
		\Big( \sum_{\ell = 1}^{n} \M_{\ell,(1)}  \M_{\ell,(1)}^{\top} \Big)^{1/2}  \Big( \check{\bm{\beta}}_{k,(1)} -  \check{\bm{\beta}}_{k,(1)} \big( 0 , \lambda_{Y,2} \big)  \Big) , 
		\\
		J_{3}  &=  \mathbf{v}^{\top}   \Big( \sum_{\ell = 1}^{n} \M_{\ell,(1)}  \M_{\ell,(1)}^{\top} \Big)^{-1/2}  \Big(  \sum_{i = 1}^{n} \big( \bm{\xi}_{i} \big)_{k}  \M_{i,(1)}  \Big). 
	\end{align*}
	\indent   
	First, by the condition~(A1) and $\mathbf{v}^{\top} \mathbf{v} = 1$, we have: 
	\begin{align}
		J_{1}^{2}  =   \left\| \lambda_{Y,2}  \Big( \sum_{\ell = 1}^{n} \M_{\ell,(1)}  \M_{\ell,(1)}^{\top} \Big)^{-1/2}  \bm{\beta}_{k, (1)}   \right\|_{2}^{2} 
		&= \dfrac{1}{n} \left\| \lambda_{Y,2}  \Big( \dfrac{1}{n} \sum_{\ell = 1}^{n} \M_{\ell,(1)}  \M_{\ell,(1)}^{\top} \Big)^{-1/2}  \bm{\beta}_{k, (1)}   \right\|_{2}^{2}  \nonumber   
		\\
		&\leq  \dfrac{1}{n}  \lambda_{Y,2}^{2} \left\| \bm{\beta}_{k, (1)} \right\|_{2}^{2} 
		\Lambda_{\min} \Big( \dfrac{1}{n} \sum_{\ell = 1}^{n} \M_{\ell,(1)}  \M_{\ell,(1)}^{\top} \Big). 
	\end{align}
	Hence, the condition (A6) implies that $\sqrt{n} J_{1} = o(1)$.  

	\medskip
	
	Similarly, using Lemma~\ref{lem:proof-asymptotic.normality:check{beta}_{k,(1)}.bound}, we can bound $J_{2}$ as follows. For $\iota  >  0$,  	
	\begin{align*}
		& \quad J_{2}^{2}  
		\\
		& \leq  \Big( 1 +  \lambda_{Y,2}  \,   \Big[    \Lambda_{\min} \big( \sum_{\ell = 1}^{n} \M_{\ell,(1)}  \M_{\ell,(1)}^{\top} \big)  \Big]^{-1}     \Big)^{2}   \Big\|  \Big( \sum_{\ell = 1}^{n} \M_{\ell,(1)}  \M_{\ell,(1)}^{\top} \Big)^{1/2}  \Big( \check{\bm{\beta}}_{k,(1)} -  \check{\bm{\beta}}_{k,(1)} \big( 0 , \lambda_{Y,2} \big)  \Big)  \Big\|_{2}^{2}
		\\
		&\leq  \Big( 1 +  \lambda_{Y,2}  \,   \Big[    \Lambda_{\min} \big(  \dfrac{1}{n}   \sum_{\ell = 1}^{n} \M_{\ell,(1)}  \M_{\ell,(1)}^{\top} \big)  \Big]^{-1}     \Big)^{2} 
		\Lambda_{\max} \big( \dfrac{1}{n}   \sum_{\ell = 1}^{n} \M_{\ell,(1)}  \M_{\ell,(1)}^{\top} \big)    
		\Big(  \dfrac{ \lambda_{Y,1}   }{\delta  \,  n  +   \lambda_{Y,2}} \Big)^{2}  
	\end{align*} 
	and  $
		\dfrac{ \lambda_{Y,1}  }{\delta  \,  n  +   \lambda_{Y,2}}  = O \Big( \dfrac{1}{n} \Big)  $, this leads to the fact that 
	$n J_{2}^{2}  =     O_{\mathbb{P}}(1) 
	$.

	\medskip 
	
	Eventually, consider $J_{3}$ and let $R_{i} := \,  \mathbf{v}^{\top}   \Big( \sum_{\ell = 1}^{n} \M_{\ell,(1)}  \M_{\ell,(1)}^{\top} \Big)^{-1/2}     \M_{i,(1)} $. We can write: 
	\begin{align*}
		J_{3}  &=  \mathbf{v}^{\top}   \Big( \sum_{\ell = 1}^{n} \M_{\ell,(1)}  \M_{\ell,(1)}^{\top} \Big)^{-1/2}   \Big(  \sum_{i = 1}^{n} \big( \bm{\xi}_{i} \big)_{k}  \M_{i,(1)}  \Big)  
		\\
		&=   \sum_{i = 1}^{n}  \Big[  \mathbf{v}^{\top}  \Big(  \sum_{\ell = 1}^{n} \M_{\ell,(1)}  \M_{\ell,(1)}^{\top} \Big)^{-1/2}     \M_{i,(1)}  \Big]    \big( \bm{\xi}_{i} \big)_{k}  
		=  \sum_{i = 1}^{n}  R_{i}  \,   \big( \bm{\xi}_{i} \big)_{k}   . 
	\end{align*}
	On the other hand, one has: 
	\begin{align}
		\sum_{i = 1}^{n}  R_{i}^{2}   &=        \sum_{i = 1}^{n}   \mathbf{v}^{\top}  \Big(  \sum_{\ell = 1}^{n} \M_{\ell,(1)}  \M_{\ell,(1)}^{\top} \Big)^{-1/2}     \M_{i,(1)}   \M_{i,(1)}^{\top}    \Big( \sum_{\ell = 1}^{n} \M_{\ell,(1)}  \M_{\ell,(1)}^{\top} \Big)^{-1/2}  \mathbf{v}
		\nonumber  
		\\
		&=   \, \mathbf{v}^{\top}  \Big(  \sum_{\ell = 1}^{n} \M_{\ell,(1)}  \M_{\ell,(1)}^{\top} \Big)^{-1/2} 
		\Big( \sum_{i = 1}^{n} \M_{i,(1)}  \M_{i,(1)}^{\top} \Big)     
		\Big( \sum_{\ell = 1}^{n} \M_{\ell,(1)}  \M_{\ell,(1)}^{\top} \Big)^{-1/2}  \mathbf{v} 
		\nonumber   
		\\
		&=  \mathbf{v}^{\top}  \mathbf{v}   =  1  \,. 
		\label{asymptotic.normality:Lyapunov.condition:cond1}
	\end{align}
	So $\Var(J_{3}) = \Var(\xi_{1}) \displaystyle \sum_{i = 1}^{n}  R_{i}^{2} = 1 $. 
	Furthermore, we have, for $\varpi > 0 $, 
	\begin{align*}
		\sum_{i = 1}^{n}  \mathbb{E} \Big[  \big| \big( \bm{\xi}_{i} \big)_{k} \big|^{2 + \varpi} \Big] \big| R_{i} \big|^{2 + \varpi}   \leq   
		\mathbb{E} \big[  \big| \big( \bm{\xi}_{1} \big)_{k} \big|^{2 + \varpi}  \big]  \Big( \max_{1 \leq i \leq n} \big| R_{i} \big|^{ \varpi}  \sum_{i = 1}^{n}  \big| R_{i} \big|^{2}  \Big)  
		=  \,  \mathbb{E} \big[  \big| \big( \bm{\xi}_{1} \big)_{k} \big|^{2 + \varpi}  \big]   \Big(  \max_{1 \leq i \leq n}  R_{i}^{2}  \Big)^{\frac{\varpi}{2}}  \,. 
	\end{align*}
	Moreover, for all $1 \leq i \leq n$, since $\mathbf{v}^{\top} \mathbf{v}  =  1$,  
	\begin{align*}
		R_{i}^{2}  \leq   \Big\| \Big(  \sum_{\ell = 1}^{n} \M_{\ell,(1)}  \M_{\ell,(1)}^{\top} \Big)^{-1/2}     \M_{i,(1)} \Big\|_{2}^{2}  
		&=  \dfrac{1}{n} \Big\| \Big(  \dfrac{1}{n}   \sum_{\ell = 1}^{n} \M_{\ell,(1)}  \M_{\ell,(1)}^{\top} \Big)^{-1/2}     \M_{i,(1)} \Big\|_{2}^{2}  
		\\
		&\leq   \dfrac{1}{n}   \Big(  \Lambda_{\min} \big( \dfrac{1}{n}   \sum_{\ell = 1}^{n} \M_{\ell,(1)}  \M_{\ell,(1)}^{\top} \big)   \Big)^{-1} d  \,  \big\| \M  \big\|_{\infty}^{2}.
	\end{align*}
	Consequently, 
	\begin{align}
		&\dfrac{ 1 }{ \big( \sqrt{ \textrm{Var}(J_{3}) }  \big)^{2 + \varpi} } \sum_{i = 1}^{n}  \mathbb{E} \Big[  \big|  \big( \bm{\xi}_{i}   \big)_{k}   \big|^{2 + \varpi} \Big] \big| R_{i} \big|^{2 + \varpi}   
		\nonumber   
		\\
		\leq  &    \mathbb{E} \big[  \big|  \big( \bm{\xi}_{1} \big)_{k}  \big|^{2 + \varpi}  \big]   
		\Big[ \dfrac{1}{n}  \Big(  \Lambda_{\min} \big( \dfrac{1}{n} \sum_{\ell = 1}^{n} \M_{\ell,(1)}  \M_{\ell,(1)}^{\top} \big)   \Big)^{-1} d  \,  \big\| \M  \big\|_{\infty}^{2} \Big]^{ \frac{\varpi}{2}}  =  O(n^{-\frac{\varpi}{2}})  
		\overset{ n \rightarrow \infty }{\longrightarrow}  0  .  
		\label{asymptotic.normality:Lyapunov.condition:cond2}
	\end{align}
	From~\eqref{asymptotic.normality:Lyapunov.condition:cond1} and~\eqref{asymptotic.normality:Lyapunov.condition:cond2}, Lyapunov conditions for the central limit theorem are established. Hence, $\dfrac{J_{3} - \E(J_{3}) }{ \sqrt{\Var(J_{3})} } \underset{d}{  \overset{ n \rightarrow \infty }{\longrightarrow} } \mathcal{N} \big( 0 ; 1 \big) $, or  equivalently, $ \sqrt{n}  J_{3} \underset{d}{  \overset{ n \rightarrow \infty }{\longrightarrow} } \mathcal{N} \big( 0 ; 1 \big)$.    
	
	\medskip
	
	Therefore, we have shown that $\sqrt{n}  J_{1} = o(1)$, $\sqrt{n} J_{2} = o_{\mathbb{P}}(1)$ and $\sqrt{n} J_{3} \underset{d}{  \overset{ n \rightarrow \infty }{\longrightarrow} } \mathcal{N} \big( 0 ; 1 \big) $. Then, by Slutsky' theorem, we obtain that $ \sqrt{n} \,   \Xi_{n} \underset{d}{ \overset{ n \rightarrow \infty }{\longrightarrow} } \mathcal{N} \big( 0 ; 1\big)$.  
	This concludes the proof of Theorem~\ref{thm:asymptotic.normality.elastic.net.estimator:beta_k}. 
\end{proof}   
\subsubsection{Proof of Proposition~\ref{prop:proof.asymptotic.normality:check{beta}_{k,(1)}} }
\label{sec:proof:prop:proof.asymptotic.normality:check{beta}_{k,(1)}}
We follow the proof of~\cite[Theorem 3.2]{Zou-Zhang:2009}. 
\begin{proof}[Proof of Proposition~\ref{prop:proof.asymptotic.normality:check{beta}_{k,(1)}}]
	We will show that $\big( \check{\bm{\beta}}_{k,(1)} ; 0 \big)$ satisfies that KKT conditions of~\eqref{equa:elastic.net:loss.function:beta_k} with probability tending to $1$. 
	First, as in the proof of Lemma~\ref{lem:KKT-condition}, by the standard KKT conditions for optimality in a convex program, the point $\big( \check{\bm{\beta}}_{k,(1)} ; 0 \big)$ is optimal if and only if 
	\begin{align*}
		& -  \sum_{i=1}^{n} \Big[ 2 (\Y_{i})_{k} \M_{i}  -   2 \M_{i} \M_{i}^{\top}  \big( \check{\bm{\beta}}_{k,(1)} ; 0 \big)  - 2 \M_{i} \big( \X_{i}^{\top} \bm{\gamma}_{k} \big) - 2 \M_{i} \big(   \Z_{i}^{\top}\bm{\eta}_{k} \big) \Big] 
		+ 2 \lambda_{Y,2}   \big( \check{\bm{\beta}}_{k,(1)} ; 0 \big) + \lambda_{Y,1}   \check{ \Upsilon }   
		=  0,    
	\end{align*} 
	where 
$
		\check{ \Upsilon }   =  \left\{  \begin{array}{ll}
			\textrm{sign}(\check{\beta}_{jk})	&  \textrm{ if }  \check{\beta}_{jk}  \neq  0
			\\[0.2cm]   
			\textrm{any real number which } \in [-1 ; 1] 	&  \textrm{ if }  \check{\beta}_{jk}  =  0
		\end{array}
		\right. , \quad  \textrm{  with  }  1  \leq  j  \leq  p .   
$
	\\
	Then, by the definition of $\check{\bm{\beta}}_{k,(1)}$ and following the proof of Lemma~\ref{lem:KKT-condition}, it is sufficient to show that:
	\[
	\mathbb{P} \Bigg(  \bigcap_{j = d+1}^{q} \Omega_{j} \Bigg)   \overset{ n   \rightarrow  \infty    }{\longrightarrow}  1, \] 
	with $j \in \{ d+1, ..., q \}$, and
	\begin{align}
		\Omega_{j}  :=  \left\{  \Big| 2   \sum_{i=1}^{n} \Big[  \big( \M_{i,(2)} \big)_{j} \M_{i,(1)}^{\top}  \big( \sum_{\ell = 1}^{n} \M_{\ell , (1)} \M_{\ell , (1) }^{ \top } + \lambda_{Y, 2} \mathbf{I} \big)^{-1} +   \big(  \xi_{i}   \big)_{k}    \big( \M_{i,(2)} \big)_{j}    \Big]   \Big|  \leq     \lambda_{Y,1}  
		\right\}.
	\end{align}  
	Let $\varrho  = \min_{1 \leq j \leq d} | \beta_{kj} | $ and $\widehat{ \varrho }   = \min_{1 \leq j \leq d} | \widehat{\beta}_{kj} | $. We note that: 
	\begin{align*}
		&\mathbb{P} \Bigg(  \bigcup_{j = d+1}^{q} 
		\big( \Omega_{j} \big)^{c} \Bigg)
		\leq  \sum_{j=d+1}^{q}  \mathbb{P} \Big( \big( \Omega_{j} \big)^{c}   \cap  \Big( \widehat{ \varrho } > \frac{\varrho}{2} \Big) \Big)     +     \mathbb{P} \Big( \widehat{ \varrho }  \leq   \frac{\varrho}{2} \Big)
	\end{align*}
	and   
	\begin{align*}
		\mathbb{P}  \Big( \widehat{\varrho}  \leq  \frac{\varrho}{2} \Big)   \leq  \mathbb{P} \Big( \| \widehat{\bm{\beta}}_{k,(1)} - \bm{\beta}_{k , (1)} \|_{2}  \geq  \frac{\varrho}{2} \Big)  \leq  \dfrac{ \E \Big(  \| \widehat{\bm{\beta}}_{k} - \bm{\beta}_{k} \|_{2}^{2} \Big) }{\frac{ \varrho^{2} }{ 4 }} .   
	\end{align*}
	By Proposition~\ref{prop:upper-bound:elastic.net.beta_{k}}, we obtain:  
	\begin{align}
		\mathbb{P}  \Big( \widehat{\varrho}  \leq  \frac{\varrho}{2} \Big)    
		&\leq   \dfrac{  16  \lambda_{Y,2}^{2}   \big\| \bm{\beta}_{k}   \big\|_{2}^{2}+  32 \, p  \,  \big\| \M \big\|_{\infty}   +  4 \lambda_{Y,1}^{2} \,  p   
		}{ \big(  \delta \, n   +   \lambda_{Y,2}  \big)^{2}  \varrho^{2} }  .  
		\label{proof:prop:proof.asymptotic.normality:check{beta}_{k,(1)}:bound1}
	\end{align} 
	On the other hand, 
	we have:   
	\begin{align*}
		&\sum_{j=d+1}^{q}  \mathbb{P} \Big( \big( \Omega_{j} \big)^{c}   \cap  \Big( \widehat{ \varrho } > \frac{\varrho}{2} \Big)  \Big)   
		\\
		\leq  &  \,    
		\dfrac{4 n^{2}}{\lambda_{Y,1}^{2}} \,  \E \Bigg\{  \sum_{j=d+1}^{q}  \Big| \dfrac{1}{n} \sum_{i=1}^{n} \Big[  \big( \M_{i,(2)}   \big)_{j}  \,   \M_{i,(1)}^{\top}  \big( \bm{\beta}_{k,(1)}  - \check{\bm{\beta}}_{k,(1)} \big)    +     \big( \bm{\xi}_{i}   \big)_{k}     \big( \M_{i,(2)}  \big)_{j}      \Big]   \Big|^{2}
		\mathbf{1}_{ \big\{ \widehat{ \varrho } \, > \frac{\varrho}{2} \big\} }  \Bigg\}.  
	\end{align*}
	Moreover, one gets:    
	\begin{align*}
		& \sum_{j=d+1}^{q}  \Big|  \dfrac{1}{n}  \sum_{i=1}^{n} \Big[  \big( \M_{i,(2)}   \big)_{j}  \,   \M_{i,(1)}^{\top}  \big( \bm{\beta}_{k,(1)}  - \check{\bm{\beta}}_{k,(1)} \big)    +     \big( \bm{\xi}_{i}   \big)_{k}     \big( \M_{i,(2)}  \big)_{j}      \Big]   \Big|^{2}      
		\\
		\leq  &  \,  2   \sum_{j=d+1}^{q}  \Big| \dfrac{1}{n}  \sum_{i=1}^{n} \Big[  \big( \M_{i,(2)}   \big)_{j}  \,   \M_{i,(1)}^{\top}  \big( \bm{\beta}_{k,(1)}  - \check{\bm{\beta}}_{k,(1)} \big)  \Big]   \Big|^{2}   
		+ 2   \sum_{j=d+1}^{q}  \Big| \dfrac{1}{n}  \sum_{i=1}^{n} \big( \bm{\xi}_{i}   \big)_{k}    \big( \M_{i,(2)}  \big)_{j}      \Big|^{2}   
		\\
		\leq  & \,  2 \Big[  \Lambda_{\max} \Big( \dfrac{1}{n} \sum_{i = 1}^{n}  \M_{i} \M_{i}^{\top}  \Big) \Big]^{2}  \big\|  \bm{\beta}_{k,(1)}  - \check{\bm{\beta}}_{k,(1)}  \big\|_{2}^{2}  
		+   2   \sum_{j=d+1}^{q}  \Big| \dfrac{1}{n}   \sum_{i=1}^{n} \big( \xi_{i}   \big)_{k}    \big( \M_{i,(2)}  \big)_{j}      \Big|^{2} 
	\end{align*}
	and 
	\begin{align*}
		\E \Big[ \sum_{j=d+1}^{q}  \Big| \dfrac{1}{n}   \sum_{i=1}^{n} \big( \bm{\xi}_{i}  \big)_{k}  \big( \M_{i,(2)}  \big)_{j}    \Big|^{2}  \Big]  \leq  \dfrac{q}{n^{2}} \big\| \M \big\|_{\infty}^{2}   \sum_{i=1}^{n} \E \Big[ \big( \bm{\xi}_{i}   \big)_{k}^{2} \Big] = \dfrac{q}{n} \big\| \M \big\|_{\infty}^{2}   , 
	\end{align*}
	which give us the inequality:
	\begin{align}  
		&\E \Bigg\{  \sum_{j=d+1}^{q}  \Big| \dfrac{1}{n}   \sum_{i=1}^{n} \Big[  \big( \M_{i,(2)}   \big)_{j}  \,   \M_{i,(1)}^{\top}  \big( \bm{\beta}_{k,(1)}  - \check{\bm{\beta}}_{k,(1)} \big)    +     \big( \bm{\xi}_{i}   \big)_{k}     \big( \M_{i,(2)}  \big)_{j}      \Big]   \Big|^{2}
		\mathbf{1}_{ \big\{ \widehat{ \varrho } \, > \frac{\varrho}{2} \big\} }  \Bigg\}  
		\nonumber   
		\\
		\leq  &  \,    2 \Big[  \Lambda_{\max} \Big( \dfrac{1}{n} \sum_{i = 1}^{n}  \M_{i} \M_{i}^{\top}  \Big) \Big]^{2}   \E \Big[ \big\|  \bm{\beta}_{k,(1)}  - \check{\bm{\beta}}_{k,(1)}  \big\|_{2}^{2}   \mathbf{1}_{ \big\{ \widehat{ \varrho } \, > \frac{\varrho}{2} \big\} }   \Big]
		+ 2 \dfrac{q}{n}   \big\| \M \big\|_{\infty}^{2}       .  
		\label{proof:prop:proof.asymptotic.normality:check{beta}_{k,(1)}:bound2}
	\end{align}
	We now bound $\E \Big[ \big\|  \bm{\beta}_{k,(1)}  - \check{\bm{\beta}}_{k,(1)}  \big\|_{2}^{2}   \mathbf{1}_{ \big\{ \widehat{ \varrho } \, > \frac{\varrho}{2} \big\} }   \Big]$. Let  $  \check{\bm{\beta}}_{k,(1)} \big( 0 , \lambda_{Y,2} \big) \in \R^{d \times 1}$ be defined by:  
	\begin{align*}
		\check{\bm{\beta}}_{k,(1)} \big( 0 , \lambda_{Y,2} \big) = & \,  \underset{ \widetilde{\bm{\beta}}_{k,(1)}    }{\textrm{argmin}} 
		\Bigg[     \sum_{i=1}^{n}  \left(   \big(  (\Y_{i})_{k}   -     \M_{i,(1)} ^{\top}\widetilde{\bm{\beta}}_{k,(1)} -     \X_{i }^{\top}  \bm{\gamma}_{k}   -    \Z_{i }^{\top} \bm{\eta}_{k}     
		\big)^{2}   
		\right)   
		+  \lambda_{Y,2}  \big\|  \widetilde{ \bm{\beta}}_{k, (1)}   \big\|_{2}^{2}     \Bigg]  \,  .   
	\end{align*}
	By 
	Lemma~\ref{lem:proof-asymptotic.normality:check{beta}_{k,(1)}.bound}, we attain:  
	\begin{align}
		\left\| \check{\bm{\beta}}_{k,(1)} -  \check{\bm{\beta}}_{k,(1)} \big( 0 , \lambda_{Y,2} \big)  \right\|_{2}  \leq  
		\dfrac{ \lambda_{Y,1}  \sqrt{p} }{\delta n + \lambda_{Y,2} } .  
	\end{align}
	Moreover, note that:  
	\begin{align*}
		&\check{\bm{\beta}}_{k,(1)} ( 0 ; \lambda_{Y,2} ) - \bm{\beta}_{k,(1)}  
		\\ 
		&=      
		\Big(   \dfrac{1}{n} \sum_{\ell = 1}^{n} \M_{\ell,(1)}   \M_{\ell,(1)}^{\top}  + \lambda_{Y,2} \mathbf{I}      \Big)^{-1}   \dfrac{1}{n} \sum_{i=1}^{n} \Big[   \M_{i,(1)} \M_{i,(1)}^{\top}  \bm{\beta}_{k,(1)}  +  ( \bm{\xi}_{i} )_{k} \M_{i,(1)}   \Big]  -  \bm{\beta}_{k,(1)}  
		\\
		&=   - \lambda_{Y,2}  \Big(   \dfrac{1}{n} \sum_{\ell = 1}^{n} \M_{\ell,(1)}   \M_{\ell,(1)}^{\top}  + \lambda_{Y,2} \mathbf{I}      \Big)^{-1}  \bm{\beta}_{k,(1)}  +  \Big(   \dfrac{1}{n} \sum_{\ell = 1}^{n} \M_{\ell,(1)}   \M_{\ell,(1)}^{\top}  + \lambda_{Y,2} \mathbf{I}      \Big)^{-1} \Big( \dfrac{1}{n} \sum_{i=1}^{n}  (\bm{\xi}_{i})_{k}  \M_{i,(1)}   \Big).
	\end{align*}  
	Thus, 
	\begin{align*}  
		&\E \Big( \big\|   \check{\bm{\beta}}_{k,(1)} (0;\lambda_{Y,2})      -   \bm{\beta}_{k,(1)}    \big\|_{2}^{2} \Big)
		\nonumber  
		\\
		\leq & \,  2 \lambda_{Y,2}^{2}   \big\|    \Big(   \dfrac{1}{n} \sum_{\ell = 1}^{n} \M_{\ell,(1)}   \M_{\ell,(1)}^{\top}  + \lambda_{Y,2} \mathbf{I}      \Big)^{-1}  \bm{\beta}_{k,(1)}  \big\|_{2}^{2}   \nonumber  
		\\
		& \qquad  \qquad   +  2 \E  \Big( \big\|  \Big(   \dfrac{1}{n} \sum_{\ell = 1}^{n} \M_{\ell,(1)}   \M_{\ell,(1)}^{\top}  + \lambda_{Y,2} \mathbf{I}      \Big)^{-1} \Big( \dfrac{1}{n} \sum_{i=1}^{n}  (\bm{\xi}_{i})_{k}  \M_{i,(1)}   \Big)   \big\|_{2}^{2}  \Big)
		\nonumber  
		\\
		\leq & \,  2 \lambda_{Y,2}^{2}  \Big(  \Lambda_{\min} \big( \dfrac{1}{n} \sum_{\ell = 1}^{n} \M_{\ell,(1)}   \M_{\ell,(1)}^{\top}  \big) + \lambda_{Y,2}  \Big)^{-2}   \big\| \bm{\beta}_{k,(1)}   \big\|_{2}^{2}   \nonumber   
		\\
		& \qquad  \qquad  +  2 \Big(  \Lambda_{\min} \big( \dfrac{1}{n} \sum_{\ell = 1}^{n} \M_{\ell,(1)}   \M_{\ell,(1)}^{\top}  \big) + \lambda_{Y,2}  \Big)^{-2} \dfrac{2}{n}  \E \Big( (\bm{\xi}_{1})_{k}^{2} \Big) p \big\|  \M \big\|_{\infty}^{2}
		\nonumber  
		\\
		\leq &  \,  2  \Big(  \Lambda_{\min} \big( \dfrac{1}{n} \sum_{\ell = 1}^{n} \M_{\ell,(1)}   \M_{\ell,(1)}^{\top}  \big) + \lambda_{Y,2}  \Big)^{-2}   \Big(  \lambda_{Y,2}^{2}   \big\| \bm{\beta}_{k,(1)}   \big\|_{2}^{2}  +   \dfrac{2 \, p}{n}  \,    \big\| \M \big\|_{\infty} \Big) .
	\end{align*} 
	Note that $\Lambda_{\min} \Big( \frac{1}{n} \sum_{i=1}^{n} \M_{i,(1)} \M_{i,(1)}^{\top} \Big) \geq \Lambda_{\min} \Big( \frac{1}{n} \sum_{i=1}^{n} \M_{i} \M_{i}^{\top} \Big) \geq \delta$ 
	and $\Lambda_{\max} \Big( \frac{1}{n} \sum_{i=1}^{n} \M_{i,(1)} \M_{i,(1)}^{\top} \Big) \leq \Lambda_{\max} \Big( \frac{1}{n} \sum_{i=1}^{n} \M_{i} \M_{i}^{\top} \Big) \leq \Delta$. Following similar arguments in proof of  of Proposition~\ref{prop:upper-bound:elastic.net.beta_{k}}, we deduce:  
	\begin{align}
		\E \Big[ \big\|  \bm{\beta}_{k,(1)}  - \check{\bm{\beta}}_{k,(1)}  \big\|_{2}^{2}   \mathbf{1}_{ \big\{ \widehat{ \varrho } \, > \frac{\varrho}{2} \big\} }   \Big]  
		\leq & \,  2 \E \Big[ \big\|  \bm{\beta}_{k,(1)}  - \check{\bm{\beta}}_{k,(1)} \big( 0 , \lambda_{Y,2} \big) \big\|_{2}^{2}   \mathbf{1}_{ \big\{ \widehat{ \varrho } \, > \frac{\varrho}{2} \big\} }   \Big]   \nonumber  
		\\
		& \qquad  +  2  \E \Big[ \big\|  \check{\bm{\beta}}_{k,(1)} \big( 0 , \lambda_{Y,2} \big)  - \check{\bm{\beta}}_{k,(1)}  \big\|_{2}^{2}   \mathbf{1}_{ \big\{ \widehat{ \varrho } \, > \frac{\varrho}{2} \big\} }   \Big]    
		\nonumber   
		\\  
		\leq & \,   \dfrac{   4  \lambda_{Y,2}^{2}   \big\| \bm{\beta}_{k,(1)}   \big\|_{2}^{2}  +    \dfrac{8 \, p}{n}  \,     \big\| \M \big\|_{\infty} 
			+  2 \lambda_{Y,1}^{2} p 
		}{ \Big(  \Lambda_{\min} \big( \frac{1}{n} \sum_{i=1}^{n} \M_{i} \M_{i}^{\top} \big)  +   \lambda_{Y,2}  \Big)^{2} }
		\nonumber   
		\\
		\leq  &  \,  
		\dfrac{ 4  \lambda_{Y,2}^{2}   \big\| \bm{\beta}_{k,(1)}   \big\|_{2}^{2}  +    \dfrac{8 \, p}{n}  \,    \big\| \M \big\|_{\infty} 
			+  2 \lambda_{Y,1}^{2} p  
		}{ \big(  \delta \, n   +   \lambda_{Y,2}  \big)^{2} }
		.
		\label{proof:prop:proof.asymptotic.normality:check{beta}_{k,(1)}:bound3}
	\end{align}
	Now, the combination of~\eqref{proof:prop:proof.asymptotic.normality:check{beta}_{k,(1)}:bound1},~\eqref{proof:prop:proof.asymptotic.normality:check{beta}_{k,(1)}:bound2}~and~\eqref{proof:prop:proof.asymptotic.normality:check{beta}_{k,(1)}:bound3} yields:  
	\begin{align*}
		\mathbb{P} \Bigg(  \bigcup_{j = d+1}^{q} 
		\big( \Omega_{j} \big)^{c} \Bigg)  &\leq 
		\dfrac{8 n^{2}}{\lambda_{Y,1}^{2}}   \Bigg \{  \Big[  \Lambda_{\max} \Big( \dfrac{1}{n}   \sum_{i = 1}^{n}  \M_{i} \M_{i}^{\top}  \Big) \Big]^{2}   \E \Big[ \big\|  \bm{\beta}_{k,(1)}  - \check{\bm{\beta}}_{k,(1)}  \big\|_{2}^{2}   \mathbf{1}_{ \big\{ \widehat{ \varrho } \, > \frac{\varrho}{2} \big\} }   \Big]
		+  \dfrac{q}{n}      \big\| \M   \big\|_{\infty}^{2}    \Bigg\} 
		\\
		& \quad  +
		\dfrac{  16  \lambda_{Y,2}^{2}   \big\| \bm{\beta}_{k}   \big\|_{2}^{2}+ 32\, p \,  \big\| \M \big\|_{\infty}   +  4 \lambda_{Y,1}^{2} \,  p   
		}{ \big(  \delta \, n   +   \lambda_{Y,2}  \big)^{2}  \varrho^{2} } 
		\\
	& \leq   \dfrac{8 n^{2}}{\lambda_{Y,1}^{2}}\Bigg \{   \Delta^{2}  \times   \dfrac{ 4  \lambda_{Y,2}^{2}   \big\| \bm{\beta}_{k,(1)}   \big\|_{2}^{2}  +    \dfrac{8 \, p}{n}  \,    \big\| \M \big\|_{\infty} 
			+  2 \lambda_{Y,1}^{2} p  
		}{ \big(  \delta \, n   +   \lambda_{Y,2}  \big)^{2} }   
		+  \dfrac{q}{n}    \big\| \M   \big\|_{\infty}^{2}    \Bigg\} 
		\\
		& \quad  +
		\dfrac{  16  \lambda_{Y,2}^{2}   \big\| \bm{\beta}_{k}   \big\|_{2}^{2}+ 32 \, p \,  \big\| \M \big\|_{\infty}   +  4 \lambda_{Y,1}^{2} \,  p   
		}{ \big(  \delta \, n   +   \lambda_{Y,2}  \big)^{2}  \varrho^{2} },
	\end{align*}
	which implies that $\mathbb{P} \Bigg(  \bigcup_{j = d+1}^{q} 
	\big( \Omega_{j} \big)^{c} \Bigg)    \overset{ n \rightarrow \infty }{ \longrightarrow } 0 $.
	The proof of Proposition~\ref{prop:proof.asymptotic.normality:check{beta}_{k,(1)}} is complete. 
\end{proof}
\subsubsection{Proof of Lemma~\ref{lem:proof-asymptotic.normality:check{beta}_{k,(1)}.bound}}
\label{sec:proof:lem:proof-asymptotic.normality:check{beta}_{k,(1)}.bound}
\begin{proof}[Proof of Lemma~\ref{lem:proof-asymptotic.normality:check{beta}_{k,(1)}.bound}]
	We have: 
	\begin{align*}  &    \sum_{i=1}^{n}  \left(   \big(  (\Y_{i})_{k}   -    \M_{i,(1)}^{\top} \check{\bm{\beta}}_{k,(1)}  -   \X_{i}^{\top} \bm{\gamma}_{k}   -   \Z_{i}^{\top} \bm{\eta}_{k}    \big)^{2}   
	\right)   
	+  \lambda_{Y,2}  \big\|  \check{\bm{\beta}}_{k,(1)}   \big\|_{2}^{2}   
	\\
	\geq  & \,  \sum_{i=1}^{n}  \left(   \big(  (\Y_{i})_{k}  -  \big( \M_{i,(1)} ^{\top} \check{\bm{\beta}}_{k,(1)} ( 0 ; \lambda_{Y,2}  )   \big)  -   \X_{i}^{\top} \bm{\gamma}_{k}   -   \Z_{i}^{\top} \bm{\eta}_{k}     
	\big)^{2}   
	\right)   
	+  \lambda_{Y,2}  \big\|   \check{\bm{\beta}}_{k,(1)} ( 0 ; \lambda_{Y,2} )    \big\|_{2}^{2}   
	\end{align*}
	and  
	\begin{align*}
		&  \sum_{i=1}^{n}   \left(   \big(    (\Y_{i})_{k}  -  \big( \M_{i,(1)} ^{\top} \check{\bm{\beta}}_{k,(1)} ( 0 ; \lambda_{Y,2}  )   \big)  -   \X_{i}^{\top} \bm{\gamma}_{k}   -   \Z_{i}^{\top} \bm{\eta}_{k}     
		\big)^{2}   
		\right)   
		\\
		& \qquad \qquad  +  \lambda_{Y,2}  \big\|   \check{\bm{\beta}}_{k,(1)} ( 0 ; \lambda_{Y,2} )    \big\|_{2}^{2}   
	 +  \lambda_{Y,1}  \big\|   \check{\bm{\beta}}_{k,(1)} ( 0 ; \lambda_{Y,2} )    \big\|_{1}
		\\
		\geq  &  \, \sum_{i=1}^{n}  \left(   \big(  (\Y_{i})_{k}   -    \M_{i,(1)}^{\top} \check{\bm{\beta}}_{k,(1)}  -   \X_{i}^{\top} \bm{\gamma}_{k}   -   \Z_{i}^{\top} \bm{\eta}_{k}     
		\big)^{2}   
		\right)   
		+   \lambda_{Y,2}    \big\|   \check{\bm{\beta}}_{k,(1)}    \big\|_{2}^{2}   +   \lambda_{Y,1}  \big\|  \check{\bm{\beta}}_{k,(1)}    \big\|_{1}.    
	\end{align*}
	This implies that:  
	\begin{align}
		&\lambda_{Y,1}  \Big(   \big\|   \check{\bm{\beta}}_{k,(1)} (0;\lambda_{Y,2})    \big\|_{1}  -  \big\|   \check{\bm{\beta}}_{k,(1)}    \big\|_{1}   \Big)   \nonumber  
		\\  
		\geq  &  \,   
		\Big[     \sum_{i=1}^{n}  \left(   \big(  (\Y_{i})_{k}   -    \M_{i,(1)}^{\top} \check{\bm{\beta}}_{k,(1)}  -   \X_{i}^{\top} \bm{\gamma}_{k}   -   \Z_{i}^{\top} \bm{\eta}_{k}     
		\big)^{2}   
		\right)   
		+   \lambda_{Y,2}    \big\|   \check{\bm{\beta}}_{k,(1)}    \big\|_{2}^{2} \Big]  
		\\
		& - \Big[ \sum_{i=1}^{n}  \left(   \big(  (\Y_{i})_{k}   -      \M_{i,(1)}^{\top}\big(   \check{\bm{\beta}}_{k,(1)} (  0 ; \lambda_{Y,2}  )   \big)  -   \X_{i}^{\top} \bm{\gamma}_{k}   -   \Z_{i}^{\top} \bm{\eta}_{k}     
		\big)^{2}   
		\right)   
		+   \lambda_{Y,2}    \big\|   \check{\bm{\beta}}_{k,(1)} ( 0 ; \lambda_{Y,2}  )    \big\|_{2}^{2} \Big]    .  
		\nonumber   
	\end{align}   
	On the other hand, one has: 
	\begin{align*}
		&\Big[ \sum_{i=1}^{n}  \left(   \big(  (\Y_{i})_{k}   -    \M_{i,(1)}^{\top} \check{\bm{\beta}}_{k,(1)}  -   \X_{i}^{\top} \bm{\gamma}_{k}   -   \Z_{i}^{\top} \bm{\eta}_{k}     
		\big)^{2}   
		\right)   
		+   \lambda_{Y,2}    \big\|   \check{\bm{\beta}}_{k,(1)}    \big\|_{2}^{2} \Big]  
		\\
		& - \Big[  \sum_{i=1}^{n}  \left(   \big(  (\Y_{i})_{k}   -      \M_{i,(1)}^{\top}\big(   \check{\bm{\beta}}_{k,(1)} (  0 ; \lambda_{Y,2}  )   \big)  -   \X_{i}^{\top} \bm{\gamma}_{k}   -   \Z_{i}^{\top} \bm{\eta}_{k}     
		\big)^{2}   
		\right)   
		+   \lambda_{Y,2}    \big\|   \check{\bm{\beta}}_{k,(1)} ( 0 ; \lambda_{Y,2}  )    \big\|_{2}^{2} \Big]    .  
		\\
		=& \, \Big( \check{\bm{\beta}}_{k,(1)}  - \check{\bm{\beta}}_{k,(1)} (0 ; \lambda_{Y,2})  \Big)^{\top}  
		\Big( \sum_{i=1}^{n} \M_{i,(1)} \M_{i,(1)}^{\top}   
		+ \lambda_{Y,2} \mathbf{I}  \Big) 
		\Big( \check{\bm{\beta}}_{k,(1)}  - \check{\bm{\beta}}_{k,(1)} (0 ; \lambda_{Y,2})  \Big)
	\end{align*}   
	and 
	\begin{align*}
		\big\|   \check{\bm{\beta}}_{k,(1)} (0;\lambda_{Y,2})    \big\|_{1}   -    \big\|   \check{\bm{\beta}}_{k,(1)}    \big\|_{1}    \leq   \big\|   \check{\bm{\beta}}_{k,(1)}   (0;\lambda_{Y,2})      -   \check{\bm{\beta}}_{k,(1)}    \big\|_{1}  \leq  \sqrt{p}  \,   \big\|   \check{\bm{\beta}}_{k,(1)} (0;\lambda_{Y,2})      -   \check{\bm{\beta}}_{k,(1)}    \big\|_{2} . 
	\end{align*}  
	Note that $\Lambda_{\min} \big(  \sum_{i=1}^{n} \M_{i,(1)} \M_{i,(1)}^{\top}  + \lambda_{Y,2} \mathbf{I} \big) = \Lambda_{\min} \big(  \sum_{i=1}^{n} \M_{i,(1)} \M_{i,(1)}^{\top} \big)  + \lambda_{Y,2}  $. 
	Thus, we get:  
	\begin{align*}
		&\Big( \Lambda_{\min} \big( \sum_{i=1}^{n} \M_{i,(1)} \M_{i,(1)}^{\top} \big)  + \lambda_{Y,2} \Big)  \big\|   \check{\bm{\beta}}_{k,(1)} (0;\lambda_{Y,2})      -   \check{\bm{\beta}}_{k,(1)}     \big\|_{2}^{2}   
		\\
		\leq  & \,   \Big( \check{\bm{\beta}}_{k,(1)}  - \check{\bm{\beta}}_{k,(1)} (0 ; \lambda_{Y,2})  \Big)^{\top}  
		\Big( \sum_{i=1}^{n}  \M_{i} \M_{i}^{\top}   
		+ \lambda_{Y,2} \mathbf{I}  \Big) 
		\Big( \check{\bm{\beta}}_{k,(1)}  - \check{\bm{\beta}}_{k,(1)} (0 ; \lambda_{Y,2})  \Big) 
		\\
		\leq  &  \,  \lambda_{Y,1}  \Big(   \big\|   \check{\bm{\beta}}_{k,(1)} (0;\lambda_{Y,2})    \big\|_{1}  -  \big\|   \check{\bm{\beta}}_{k,(1)}    \big\|_{1}   \Big)    
		\\
		\leq  &  \,  \lambda_{Y,1} \sqrt{p} \,  \big\|   \check{\bm{\beta}}_{k,(1)}  (0;\lambda_{Y,2})      -   \check{\bm{\beta}}_{k,(1)}    \big\|_{2},   
	\end{align*} 
	which implies that:
    \begin{align}
		\big\|   \check{\bm{\beta}}_{k,(1)} (0;\lambda_{Y,2})      -   \check{\bm{\beta}}_{k,(1)}    \big\|_{2}  
		\leq   \dfrac{ \lambda_{Y,1} \sqrt{p} }{  \Lambda_{\min} \big(  \sum_{i=1}^{n} \M_{i,(1)} \M_{i,(1)}^{\top} \big)  + \lambda_{Y,2} }  
		\leq  \dfrac{ \lambda_{Y,1} \sqrt{p} }{\delta \, n + \lambda_{Y,2}}.  
		\label{proof:lem:proof-asymptotic.normality:check{beta}_{k,(1)}.bound}
	\end{align}    
	This conclude the proof of Lemma~\ref{lem:proof-asymptotic.normality:check{beta}_{k,(1)}.bound}. 
\end{proof}
\subsection{Proof of the mediation effect \texorpdfstring{$\bm{\alpha}\bm{\beta}$}{alpha beta}}
\label{sup:alpha_beta}
\label{sec:proof:thm:asymptotic.normality.elastic.net.estimator:alpha.beta_k}
\begin{proof}[Proof of Theorem~\ref{thm:asymptotic.normality.elastic.net.estimator:alpha.beta_k}]
	To study the mediation effect $(\bm{\alpha}\bm{\beta})$,  from model ($\mathfrak{M}$) in~\eqref{equa:MMAMA.model:selected.genes:equa1}-\eqref{equa:MMAMA.model:selected.genes:equa2}, for $1 \leq i \leq n$, we can rewrite the model as below:     
	\begin{align}
		\ddot{\mathbf{y}}_{i}  &=  \bm{\beta}^{\top} \Big(  { \X_{i} }^{\top}\bm{\alpha}  + \Z_{i}^{\top}  \bm{\zeta}  +  \bm{\epsilon}_{i}  \Big)  +  { \X_{i} }^{\top}  \bm{\gamma}  +     \Z_{i}^{\top} \bm{\eta} +  \bm{\xi}_{i} 
		\nonumber    
		\\
		&=    { \X_{i} } ^{\top}\Big( \bm{\alpha}   \bm{\beta}   +  \bm{\gamma} \Big) 
		+   \Z_{i}^{\top}\Big( \bm{\zeta}  \bm{\beta}  +  \bm{\eta} \Big)   +  \Big( \bm{\beta}^{\top} \bm{\epsilon}_{i}  +  \bm{\xi}_{i}  \Big).
	\end{align}
	Moreover, $\big \{ ( \bm{\epsilon}_{i} )_{j} \big \}_{1 \leq  j \leq p}  \overset{\textrm{i.i.d.}}{\sim} \mathcal{N}(0 ; 1) $, $\big \{ ( \bm{\xi}_{i} )_{\ell} \big \}_{1 \leq  \ell \leq T}    \overset{\textrm{i.i.d.}}{\sim} \mathcal{N}(0 ; 1) $, and $( \bm{\epsilon}_{i} )_{k}$ and $( \bm{\xi}_{i} )_{\ell}$ are independent. Thus, for $1 \leq k \leq T$, we obtain:     
	\[ 
	(\varkappa_{i})_{k} := ( \bm{\beta}^{\top} \bm{\epsilon}_{i})_{k}  +  (\bm{\xi}_{i})_{k}   =  \sum_{j=1}^{p} {\beta}_{j, k}  ( \bm{\epsilon}_{i} )_{j}  +  (\bm{\xi}_{i})_{k}  
	\overset{\textrm{i.i.d.}}{\sim} \mathcal{N}(0 ;  \sigma_{\varkappa}^{2})  , 
	\]
	with $\sigma_{\varkappa}^{2} =  \big\| \bm{\beta}_{k} \big\|_{2}^{2} + 1$.  
	Let $\bm{\psi} = \bm{\alpha}  \bm{\beta} +  \bm{\gamma}$ and $\widehat{\bm{\psi}}$ be the corresponding estimator of $\bm{\psi}$. 

	\medskip 
	
	Now, for the consistency of $(\widehat{\bm{\alpha}} \widehat{\bm{\beta}})$, using similar arguments in the proof of Theorem~\ref{thm:consistency.elastic.net.estimator:beta_k}, we obtain that for $1 \leq k \leq T$, with probability tending to $1$, $\widehat{\bm{\psi}}_{k}   \longrightarrow  \bm{\psi}_{k}$. 
	Note that:  
	\begin{align*}
		\bm{\psi}_{k} = \big( \bm{\alpha} \bm{\beta} \big)_{k}  +  \bm{\gamma
        }_{k}  
		= \left(
		\begin{array}{c}
			\sum_{j=1}^{p} \alpha_{1j} \beta_{jk} 
			\\
			\vdots 
			\\
			\sum_{j=1}^{p} \alpha_{q j} \beta_{jk} 
		\end{array}
		\right)  +  \bm{\gamma}_{k} 
		. 
	\end{align*}
	Furthermore, 
	we have already showed that, for any $1 \leq \ell \leq q$ and $1 \leq k \leq T$: 
	\begin{align*}
		\widehat{\bm{\alpha}}_{\ell}    \overset{\mathbb{P}}{\longrightarrow}  \bm{\alpha}_{\ell} , \quad  \widehat{\bm{\beta}}_{k}    \overset{\mathbb{P}}{\longrightarrow}  \bm{\beta}_{k}   \textrm{ and }  \widehat{\bm{\gamma}}_{k}    \overset{\mathbb{P}}{\longrightarrow}  \bm{\gamma}_{k},.
	\end{align*}
	So, for any $1 \leq  k  \leq  T$, with probability tending to $1$, we obtain  $\big( \widehat{\bm{\alpha}} \widehat{\bm{\beta}} \big)_{k}   \longrightarrow  \big( \bm{\alpha} \bm{\beta} \big)_{k} $. 
	
	\medskip

	On the other hand, for the asymptotic normality, using similar arguments in the proof of Theorem~\ref{thm:asymptotic.normality.elastic.net.estimator:beta_k}, we obtain the following result for the estimators of $\bm{\psi}$ and $\bm{\gamma}$.
	\begin{corollary}  
		\label{thm:asymptotic.normality.elastic.net.estimator:theta_k}
		Under conditions \textrm{(A1)-(A6)}, for $1 \leq k \leq T$,  
		\begin{align*}
			\sqrt{n}  \Big[   \mathbf{v}^{\top}  \Big( \mathbf{I} +  \lambda_{Y,2} \big( \sum_{\ell = 1}^{n}   \X_{\ell , (1)} \X_{\ell , (1)}^{\top} \big)^{-1}  \Big)  \Big(     \sum_{\ell = 1}^{n}    \X_{\ell , (1)} \X_{\ell , (1)}^{\top} \Big)^{1/2} \Big( \widehat{\bm{\psi}}   _{k,(1)} - \bm{\psi} _{k,(1)} \Big)   \Big]   
			\overset{ d }{\longrightarrow} \mathcal{N} \big( 0 ; \sigma_{\varkappa}^{2}  \big) \quad  , 
		\end{align*}   
		and 
		\begin{align*}
			\sqrt{n}  \Big[   \mathbf{v}^{\top}  \Big( \mathbf{I} + \lambda_{Y,2} \big( \sum_{\ell = 1}^{n}   \X_{\ell , (1)} \X_{\ell , (1)}^{\top} \big)^{-1}  \Big)  \Big(  \sum_{\ell = 1}^{n}    \X_{\ell , (1)} \X_{\ell , (1)}^{\top} \Big)^{1/2} \Big( \widehat{\bm{\gamma}}_{k,(1)} - \bm{\gamma}_{k,(1)} \Big)   \Big]   
			\overset{ d }{\longrightarrow} \mathcal{N} \big( 0 ; 1\big) \quad  , 
		\end{align*} 
		where $\mathbf{v}$ is a vector of norm $1$. 
	\end{corollary}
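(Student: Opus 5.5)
The statement to be proved is Corollary~\ref{thm:asymptotic.normality.elastic.net.estimator:theta_k}. The plan is to reduce it to Theorem~\ref{thm:asymptotic.normality.elastic.net.estimator:beta_k} applied to the \emph{reduced-form} regression derived just above:
\[
(\Y_i)_k = \X_i^{\top}\bm{\psi}_k + \Z_i^{\top}(\bm{\zeta}\bm{\beta}+\bm{\eta})_k + (\varkappa_i)_k ,
\qquad (\varkappa_i)_k \overset{\textrm{i.i.d.}}{\sim}\mathcal{N}(0;\sigma_{\varkappa}^2).
\]
This is a single-response linear model of exactly the type treated in Section~\ref{sup:asymptotic_normality}, with three substitutions:
\begin{itemize}
\item the design $\M_{i,(1)}$ is replaced by $\X_{i,(1)}$;
\item the coefficient $\bm{\beta}_{k,(1)}$ is replaced by $\bm{\psi}_{k,(1)}$;
\item the unit-variance noise $(\bm{\xi}_i)_k$ is replaced by the Gaussian noise $(\varkappa_i)_k$ of variance $\sigma_{\varkappa}^2$.
\end{itemize}
I take $\widehat{\bm{\psi}}_k$ to be the elastic-net estimator of this reduced form with penalties $(\lambda_{Y,1},\lambda_{Y,2})$. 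Conditions (A1)--(A6) for the $\X$-design and for $\bm{\psi}_{k,(1)}$ are assumed to hold (A1, A2 and A6 are stated for $\X$ as well).

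\textbf{Steps.}
\begin{enumerate}
\item Re-run Proposition~\ref{prop:proof.asymptotic.normality:check{beta}_{k,(1)}} with $\X$ in place of $\M$. The sign-consistency argument (Theorem~\ref{thm:consistency.elastic.net.estimator:alpha_l}-type) and the MSE bound of Proposition~\ref{prop:upper-bound:elastic.net.beta_{k}} carry over; the noise second moment $1$ becomes $\sigma_{\varkappa}^2$, which only changes constants. Conclusion: with probability tending to one, $\widehat{\bm{\psi}}_k=(\check{\bm{\psi}}_{k,(1)},0)$.
\item Decompose $\Xi_n=J_1+J_2+J_3$ exactly as in the proof of Theorem~\ref{thm:asymptotic.normality.elastic.net.estimator:beta_k}.
\item Handle $J_1$: (A6) applied to $\bm{\psi}_{k,(1)}$ gives $\sqrt n J_1=o(1)$.
\item Handle $J_2$: Lemma~\ref{lem:proof-asymptotic.normality:check{beta}_{k,(1)}.bound}, with $\X$ and $\delta$ from (A1), together with (A5) controls $J_2$ as before.
\item Handle $J_3=\sum_i R_i(\varkappa_i)_k$, where $R_i=\mathbf v^{\top}(\sum_\ell \X_{\ell,(1)}\X_{\ell,(1)}^{\top})^{-1/2}\X_{i,(1)}$.
\begin{itemize}
\item As before, $\sum_i R_i^2=1$, so $\Var(J_3)=\sigma_{\varkappa}^2$.
\item Lyapunov's ratio is bounded by $\E|(\varkappa_1)_k|^{2+\varpi}\sigma_\varkappa^{-(2+\varpi)}(\max_i R_i^2)^{\varpi/2}$.
\item $\max_i R_i^2\to0$ by (A1)--(A2).
\end{itemize}
Hence $J_3/\sigma_\varkappa$ is asymptotically standard normal.
\item Conclude with Slutsky; the limit is $\mathcal N(0;\sigma_\varkappa^2)$, carrying the same $\sqrt n$ normalization convention as Theorem~\ref{thm:asymptotic.normality.elastic.net.estimator:beta_k}.
\end{enumerate}

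For the second display ($\widehat{\bm{\gamma}}_k$), the approach is the direct structural equation~\eqref{equa:MMAMA.model:selected.genes:equa4} with $\M$ treated as fixed. The noise there is $(\bm{\xi}_i)_k$ with unit variance, and the argument of Theorem~\ref{thm:asymptotic.normality.elastic.net.estimator:beta_k} applies verbatim to the $\X$-block of the stacked design $[\M\ \X\ \Z]$. This yields variance $1$. I would state that the relevant Gram blocks satisfy (A1) so that the same eigenvalue bounds hold.

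\textbf{Main obstacle.} The delicate point is that $(\varkappa_i)_k$ is independent across $i$ but not across $k$. The $\bm{\epsilon}_i$ also enter $\M_i$, so in the $\bm{\gamma}$ regression the design and noise are not jointly fixed. I would handle this by conditioning on $(\X,\M,\Z)$ as the paper does, and by noting that each $k$ is treated marginally, so cross-$k$ dependence is irrelevant. For the $\bm{\psi}$ regression only $\X,\Z$ enter the design, so $\varkappa$ is genuinely exogenous given $\X,\Z$.
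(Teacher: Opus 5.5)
You follow the route the paper intends: its entire proof of this corollary is ``similar arguments as in Theorem~\ref{thm:asymptotic.normality.elastic.net.estimator:beta_k}'' applied to the reduced form. Your point that for $\bm{\psi}$ one must condition only on $(\X,\Z)$, so that $(\varkappa_i)_k$ really is $\mathcal{N}(0;\sigma_\varkappa^2)$, is correct, and the paper glosses over it. But two of the steps you import fail, for reasons already present in the proof of Theorem~\ref{thm:asymptotic.normality.elastic.net.estimator:beta_k}.

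\textbf{The $\sqrt{n}$ normalization.} Your own computation shows that, given $(\X,\Z)$, $J_3=\sum_i R_i(\varkappa_i)_k$ is \emph{exactly} $\mathcal{N}(0;\sigma_\varkappa^2)$. Hence $\sqrt{n}\,J_3$ has variance $n\sigma_\varkappa^2$ and $\sqrt{n}\,\Xi_n$ is not even tight, so ``carrying the same $\sqrt{n}$ convention'' is not a valid step; the paper's ``equivalently, $\sqrt{n}J_3\to\mathcal{N}(0;1)$'' is the same slip. Likewise, (A6) gives only $|J_1|\le\lambda_{Y,2}\|\bm{\psi}_{k,(1)}\|_2/\sqrt{n\delta}\to0$, not $\sqrt{n}J_1\to0$. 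Lemma~\ref{lem:proof-asymptotic.normality:check{beta}_{k,(1)}.bound}, with the active dimension $\widetilde{d}$ in place of $p$ in the $\ell_1$--$\ell_2$ step, gives $|J_2|\le(\sqrt{\Delta}/\delta)\,\lambda_{Y,1}\sqrt{\widetilde{d}/n}$, which is $o(1)$ but not $o(n^{-1/2})$. What your decomposition can prove is $\Xi_n\to\mathcal{N}(0;\sigma_\varkappa^2)$, i.e.\ the display with the leading $\sqrt{n}$ deleted (equivalently, with $\frac{1}{n}\sum_\ell\X_{\ell,(1)}\X_{\ell,(1)}^\top$ under the square root). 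Even that holds only on the oracle event of Step 1.

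\textbf{Step 1 is incompatible with (A5).} For a null coordinate $j$, the KKT inequality contains a Gaussian term whose standard deviation is of order $\sqrt{n}$, because design columns have squared norm of order $n$. For example, with orthogonal columns, $\mathbb{P}(\widehat{\psi}_{jk}=0)=\mathbb{P}\big(|\mathcal{N}(0;4n\sigma_\varkappa^2)|\le\lambda_{Y,1}\big)\to0$ when $\lambda_{Y,1}=o(\sqrt{n})$. Zeroing the null coordinates therefore needs $\lambda_{Y,1}\gtrsim\sqrt{n}$. On the sign-correct event, however, your $J_2$ equals exactly $-\frac{\lambda_{Y,1}}{2}\mathbf{v}^\top\big(\sum_\ell\X_{\ell,(1)}\X_{\ell,(1)}^\top\big)^{-1/2}\textrm{sign}(\bm{\psi}_{k,(1)})$, which is of order $\lambda_{Y,1}\sqrt{\widetilde{d}/n}$. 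So Steps 1 and 4 demand incompatible tuning.

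The paper's supporting results do not bridge this gap. The final bound in Proposition~\ref{prop:proof.asymptotic.normality:check{beta}_{k,(1)}} contains $8nq\|\M\|_\infty^2/\lambda_{Y,1}^2\to\infty$. In the proof of Theorem~\ref{thm:consistency.elastic.net.estimator:beta_k}, on which the $\bm{\alpha}$-version rests, the variance of $\widetilde{V}_m$ is of order $n$, not $O(1/n)$ as bounded there. This is the conflict Zou--Zhang remove with adaptive weights.

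It changes the answer, not just the proof. With $\lambda_{Y,1}=o(\sqrt{n})$ and fixed $q$, the fit behaves like full least squares. Its limiting variance in your normalization is $\sigma_\varkappa^2\,\mathbf{v}^\top\widetilde{\mathbf{C}}_{11}^{1/2}[\widetilde{\mathbf{C}}^{-1}]_{11}\widetilde{\mathbf{C}}_{11}^{1/2}\mathbf{v}$, which is strictly above $\sigma_\varkappa^2$ for suitable $\mathbf{v}$ whenever $\widetilde{\mathbf{C}}_{12}\neq0$, even when the EIC holds. A correct version needs adaptive weights, with the ``$(1)$'' block taken to be the support of $\bm{\psi}_k$, which need not be that of $\bm{\alpha}_\ell$.

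\textbf{The $\bm{\gamma}$ display.} Treating $\bm{\gamma}$ as ``the $\X$-block of the stacked design'' gives the wrong variance. In a joint fit on $[\mathbf{M}\ \mathbf{X}\ \mathbf{Z}]$, the covariance of the $\bm{\gamma}$-block is the inverse Schur complement $\big(\sum_\ell\X_{\ell,(1)}\X_{\ell,(1)}^\top-\mathbf{Q}\big)^{-1}$. Here $\mathbf{Q}\succeq0$ is the part explained by the active $\M$- and $\Z$-columns, and it is nonzero because $\M$ depends on $\X$ through $\bm{\alpha}$. Variance $1$ is obtained only for the block-wise estimator with $\bm{\beta}_k,\bm{\eta}_k$ held at their true values, which is the convention implicit in the paper's definition of $\widehat{\bm{\beta}}_k$. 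You should state that this is the estimator you mean.
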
   
	The asymptotic normality of $\big( \widehat{\bm{\alpha}} \widehat{\bm{\beta}} \big)_{k}$ is derived consequently. 
	This completes the proof of Theorem~\ref{thm:asymptotic.normality.elastic.net.estimator:alpha.beta_k}. 
\end{proof}

\subsection{Proof of causal mediation in the many--to--many--to--many mediation analysis setting}
\label{sec:proof:causal.mediation}
From the regression model~\eqref{equa:MMAMA.model:selected.genes:equa2}, drop the notation for subject $i \in \{1,...,n\}$ for simplicity, we can write for each $1 \leq k \leq T$ and each $1 \leq j \leq p$ as follows: 
\begin{align}
	(\ddot{\mathbf{m}})_{j} &= \big( \bm{\alpha}^{\top} \ddot{\mathbf{x}} \big)_{j}  +  \big( \bm{\zeta}^{\top} \ddot{\mathbf{z}} \big)_{j}  +  (\bm{\epsilon})_{j} 
	= \sum_{\ell=1}^{q} (\bm{\alpha}^{\top} )_{j  \ell} (\ddot{\mathbf{x}})_{\ell}  +  \sum_{\ell=1}^{s} ( \bm{\zeta}^{\top} )_{j \ell} (\ddot{\mathbf{z}})_{\ell}  +  (\bm{\epsilon})_{j}  , 
	\label{proof:causal.mediation:equation.M}
\end{align}
and 
\begin{align}
	(\ddot{\mathbf{y}})_{k} &= \big( \bm{\beta}^{\top} \ddot{\mathbf{m}} \big)_{k}  + \big( \bm{\gamma}^{\top} \ddot{\mathbf{x}} \big)_{k}  +  \big( \bm{\eta}^{\top} \ddot{\mathbf{z}} \big)_{k}  +  (\bm{\xi})_{k} 
	\label{proof:causal.mediation:equation.Y}
	\\
	&= \sum_{\ell=1}^{p} ( \bm{\beta}^{\top} )_{k \ell} (\ddot{\mathbf{m}})_{\ell}  +  \sum_{\ell=1}^{q} ( \bm{\gamma}^{\top})_{k \ell} (\ddot{\mathbf{x}})_{\ell}  +  \sum_{\ell=1}^{s} ( \bm{\eta}^{\top})_{k \ell} (\ddot{\mathbf{z}})_\ell  +  (\bm{\xi})_{k}  .
	\nonumber  
\end{align}
Under assumptions (C1) and (C2), from~\eqref{proof:causal.mediation:equation.Y} we have for the controlled direct effect (CDE) from the multivariate exposure to the $\ell^{th}$ outcome:  
\begin{align*}
& \E \big[ (\ddot{\mathbf{y}}(\mathbf{x},\mathbf{m}))_{k} - (\ddot{\mathbf{y}}(\widetilde{\mathbf{x}},\mathbf{m}))_{k}  \big| \ddot{\mathbf{z}} = \mathbf{z} \big] 
\\
= \quad & \E \big[ (\ddot{\mathbf{y}})_{k} \big| \ddot{\mathbf{x}} = \mathbf{x}, \ddot{\mathbf{m}} = \mathbf{m}, \ddot{\mathbf{z}} = \mathbf{z} \big]  - \E \big[ (\ddot{\mathbf{y}})_{k} \big| \ddot{\mathbf{x}} = \widetilde{\mathbf{x}}, \ddot{\mathbf{m}} = \mathbf{m}, \ddot{\mathbf{z}} = \mathbf{z} \big]
\\
= \quad &  \sum_{\ell=1}^{q} ( \bm{\gamma}^{\top})_{k \ell} (\mathbf{x} - \widetilde{\mathbf{x}})_{\ell} = \big( \bm{\gamma}^{\top} (\mathbf{x} - \widetilde{\mathbf{x}}) \big)_{k}. 
\end{align*}

The argument for the CDE from the $j^{th}$ exposure to the $\ell^{th}$ outcome is analogous.

Moreover, under assumptions (C1)-(C4), from~\eqref{proof:causal.mediation:equation.Y} we have by Pearl's formulations:  
\begin{align*}
\E \big[ \big( \ddot{\mathbf{y}}(\mathbf{x},\M(\widetilde{\mathbf{x}})) \big)_{k} \big| \ddot{\mathbf{z}} = \mathbf{z} \big]  &=  \int_{\mathbf{m}}  \E \big[ (\ddot{\mathbf{y}})_{k} \big| \ddot{\mathbf{x}} = \mathbf{x} , \ddot{\mathbf{z}} = \mathbf{z} , \ddot{\mathbf{m}} = \mathbf{m} \big] d \mathbb{P} (\mathbf{m} | \widetilde{\mathbf{x}}, \mathbf{z})  
\\
&=   \int_{m} \big\{  ( \bm{\beta}^{\top} \mathbf{m} )_{k}  + ( \bm{\gamma}^{\top} \mathbf{x} )_{k}  +  ( \bm{\eta}^{\top} \mathbf{z} )_{k}  \big\}  d \mathbb{P} (\mathbf{m} | \widetilde{\mathbf{x}}, \mathbf{z})   
\\
&= \big( \bm{\beta}^{\top} \E \big[  \ddot{\mathbf{m}}  \big| \ddot{\mathbf{x}} = \widetilde{\mathbf{x}}, \ddot{\mathbf{z}} = \mathbf{z} \big] \big)_{k} + ( \bm{\gamma}^{\top} \mathbf{x} )_{k}  +  ( \bm{\eta}^{\top} \mathbf{z} )_{k}  ,  
\end{align*} 
with $\E \big[  \ddot{\mathbf{m}}  \big| \ddot{\mathbf{x}} = \widetilde{\mathbf{x}}, \ddot{\mathbf{z}} = \mathbf{z} \big] := \big( \E \big[ (\ddot{\mathbf{m}})_{j}  \big| \ddot{\mathbf{x}} = \widetilde{\mathbf{x}}, \ddot{\mathbf{z}} = \mathbf{z} \big] \big)_{j = 1, ..., p}$.  

Thus, the natural direct effect (NDE) from the multivariate exposure to the $\ell^{th}$ outcome is: 
\begin{align*}
\E \big[ \big( \ddot{\mathbf{y}}(\mathbf{x},\ddot{\mathbf{m}}(\widetilde{\mathbf{x}})) \big)_{k} - \big( \ddot{\mathbf{y}}(\widetilde{\mathbf{x}},\ddot{\mathbf{m}}(\widetilde{\mathbf{x}})) \big)_{k} \big| \ddot{\mathbf{z}} = \mathbf{z} \big]   =  \big( \bm{\gamma}^{\top} (\mathbf{x} - \widetilde{\mathbf{x}} ) \big)_{k}  . 
\end{align*}
The argument for the NDE from the $j^{th}$ exposure to the $\ell^{th}$ outcome is analogous.

Similarly,  
\begin{align*}
& \E \big[ \big( \ddot{\mathbf{y}}(\mathbf{x},\ddot{\mathbf{m}}(\mathbf{x})) \big)_{k} - \big( \ddot{\mathbf{y}}(\mathbf{x},\ddot{\mathbf{m}}(\widetilde{\mathbf{x}})) \big)_{k} \big| \ddot{\mathbf{z}} = \mathbf{z} \big]  
\\
= \quad & \Big\{ \bm{\beta}^{\top} \Big( \E \big[  \ddot{\mathbf{m}}  \big| \ddot{\mathbf{x}} = \mathbf{x} , \ddot{\mathbf{z}} = \mathbf{z} \big] - \E \big[  \ddot{\mathbf{m}}  \big| \ddot{\mathbf{x}} = \widetilde{\mathbf{x}} , \ddot{\mathbf{z}}= \mathbf{z} \big] \Big) \Big\}_{k}.
\end{align*}
Moreover,   from~\eqref{proof:causal.mediation:equation.M} one has: 
\begin{align*}
\E \big[  (\ddot{\mathbf{m}})_{j}  \big| \ddot{\mathbf{x}} = \mathbf{x} , \ddot{\mathbf{z}} = \mathbf{z} \big] - \E \big[  (\ddot{\mathbf{m}})_{j}  \big| \ddot{\mathbf{x}} = \widetilde{\mathbf{x}} , \ddot{\mathbf{z}} = \mathbf{z} \big] = \big( \bm{\alpha}^{\top} (\mathbf{x} - \widetilde{\mathbf{x}}) \big)_{j} , \quad j = 1, ..., p,  
\end{align*}
Hence, the natural indirect effect (NIE) from the multivariate exposure to the $\ell^{th}$ outcome is:   
\begin{align*}
\E \big[ \big( \ddot{\mathbf{y}}(\mathbf{x},\ddot{\mathbf{m}}(\mathbf{x})) \big)_{k} - \big( \ddot{\mathbf{y}}(\mathbf{x},\ddot{\mathbf{m}}(\widetilde{\mathbf{x}})) \big)_{k} \big| \ddot{\mathbf{z}} = \mathbf{z} \big]    =  \Big\{   \bm{\beta}^{\top} \big( \bm{\alpha}^{\top} (\mathbf{x} - \widetilde{\mathbf{x}}) \big) \Big\}_{k} .
\end{align*}

The argument for the NIE from the $j^{th}$ exposure to the $\ell^{th}$ outcome is analogous.

This completes the proof for causal mediation in the MMM setting. 

\end{document}